\numberwithin{equation}{section}
\newtheorem{theorem}{Theorem}[section]
\newtheorem{corollary}[theorem]{Corollary}
\newtheorem{lemma}[theorem]{Lemma}
\newtheorem{proposition}[theorem]{Proposition}
\theoremstyle{definition}
\newtheorem{definition}[theorem]{Definition}
\newtheorem{remark}[theorem]{Remark}
\newtheorem{assumption}[theorem]{Assumption}
\newcommand{\ind}{1\hspace{-2.1mm}{1}} 
\newcommand{\RR}{\mathbb{R}}
\newcommand{\NN}{\mathbb{N}}
\newcommand{\PP}{\mathbb{P}}
\newcommand{\QQ}{\mathbb{Q}}
\newcommand{\EE}{\mathbb{E}}
\newcommand{\FF}{\mathbb{F}}
\newcommand{\XX}{\mathbf{X}}
\newcommand{\xx}{\mathbf{x}}
\newcommand{\yy}{\mathbf{y}}
\newcommand{\zz}{\mathbf{z}}
\newcommand{\vv}{\mathbf{v}}
\newcommand{\Aa}{\mathcal{A}}
\newcommand{\Bb}{\mathcal{B}}
\newcommand{\Cc}{\mathcal{C}}
\newcommand{\Ff}{\mathcal{F}}
\newcommand{\Nn}{\mathcal{N}}
\newcommand{\Oo}{\mathcal{O}}
\newcommand{\Ll}{\mathcal{L}}
\newcommand{\Ss}{\mathcal{S}}
\newcommand{\Qq}{\mathcal{Q}}
\newcommand{\Uu}{\mathcal{U}}
\newcommand{\Vv}{\mathcal{V}}
\newcommand{\Ww}{\mathcal{W}}
\newcommand{\Zz}{\mathcal{Z}}
\newcommand{\D}{\mathrm{d}}
\newcommand{\Dx}{\mathrm{D}_{x}}
\newcommand{\Dxx}{\mathrm{D}_{\xx}}
\newcommand{\Dr}{\mathrm{D}}
\newcommand{\E}{\mathrm{e}}
\newcommand{\I}{\mathrm{i}}
\newcommand{\X}{\mathrm{X}}
\newcommand{\Y}{\mathrm{Y}}
\newcommand{\eps}{\varepsilon}
\newcommand{\Uf}{\mathfrak{U}}
\newcommand{\Zf}{\mathfrak{Z}}
\newcommand{\RWNN}{\aleph}
\newcommand{\varrhob}{\boldsymbol\varrho}
\DeclareMathOperator{\relu}{\varsigma}
\DeclareMathOperator{\brelu}{\boldsymbol\varsigma}
\DeclareMathOperator*{\argmin}{argmin}
\newcommand{\Hb}{\boldsymbol{H}}
\newcommand{\floor}[1]{\lfloor #1 \rfloor}
\newcommand{\phib}{\boldsymbol{\phi}}
\newcommand{\Sb}{\boldsymbol{S}}
\newcommand{\Am}{\mathrm{A}}
\newcommand{\Imat}{\mathrm{I}}
\newcommand{\bm}{\mathrm{b}}
\newcommand{\betam}{\boldsymbol{\beta}}
\newcommand{\Ee}{\mathcal{E}}
\newcommand{\half}{\frac{1}{2}}
\newcommand{\Rr}{\mathcal{R}}
\newcommand{\Df}{\mathfrak{D}}
\newcommand{\Ef}{\mathfrak{E}}
\newcommand{\If}{\mathfrak{I}}
\newcommand{\ErrYiVhat}{\Ef\left[Y_{t_i}, \widehat{\Vv}_{t_i}\right]}
\newcommand{\ErrVhatiUfi}{\Ef\left[\widehat{\Vv}_{t_i}, \widehat{\Uf}_{t_i}\right]}
\newcommand{\ErrIntYtVihat}{\If_i\left[Y_t, \widehat{\Vv}_{t_i}\right]}
\newcommand{\ErrIntZkihat}{\If_i\left[Z^k, \Zowk_{t_i}\right]}
\newcommand{\ErrIntZkibar}{\If_i\left[Z^k, \overline{Z}^{k}_{t_i}\right]}
\newcommand{\ErrZkibarhat}{\Ef\left[\overline{Z}^k_{t_i}, \Zowk_{t_i}\right]}
\newcommand{\ErrYiUfi}{\Ef\left[Y_{t_i}, \widehat{\Uf}_{i}\right]}
\newcommand{\ErriYiUfip}{\Df_i\left[Y_{t_{i+1}}, \widehat{\Uf}_{i+1}\right]}
\newcommand{\ErrYiUfip}{\Ef\left[Y_{t_{i+1}}, \widehat{\Uf}_{i+1}\right]}
\newcommand{\ZowO}{\overline{\widehat{Z}}\kern0.1em^{1}}
\newcommand{\ZowT}{\overline{\widehat{Z}}\kern0.1em^{2}}
\newcommand{\Zowk}{\overline{\widehat{Z}}\kern0.1em^{k}}
\newcommand{\zowO}{\overline{\widehat{z}}^{1}}
\newcommand{\zowT}{\overline{\widehat{z}}^{2}}
\newcommand{\zowk}{\overline{\widehat{z}}^{k}}
\newcommand{\Deli}{\delta_{i}}
\newcommand{\DelWi}{\Delta^{W}_{i}}
\newcommand{\aplim}{\mathrm{ap}\lim}
\title[Random neural networks for rough volatility]{Random neural networks for rough volatility}
\author{Antoine Jacquier}
\address{Department of Mathematics, Imperial College London, and the Alan Turing Institute}
\email{a.jacquier@imperial.ac.uk}
\author{\v{Z}an \v{Z}uri\v{c}}
\address{Kaiju Capital Management}
\email{z.zuric19@imperial.ac.uk}
\thanks{AJ acknowledges financial support from the EPSRC/T032146 grant. \v{Z}\v{Z} was supported by the EPSRC/S023925 CDT in Mathematics of Random Systems.
We would like to thank Lukas Gonon, Christian Bayer and Jinniao Qiu for helpful discussions.
The \texttt{python} code is available at \href{https://github.com/ZuricZ/RWNN_PDE_solver}{\texttt{ZuricZ/RWNN\_PDE\_solver}}.
The authors have no relevant financial or non-financial interests to disclose.}
\subjclass[2020]{60G22, 35K10, 65C20, 68T07, 91G60}
\keywords{Rough volatility, SPDEs, neural networks, reservoir computing}
\begin{document}
\maketitle

\begin{abstract}
We construct a deep learning-based numerical algorithm to solve path-dependent partial differential equations arising in the context of rough volatility.
Our approach is based on interpreting the PDE as a solution to an BSDE, building upon recent insights by Bayer, Qiu and Yao, 
and on constructing a neural network of reservoir type as originally developed by Gonon, Grigoryeva, Ortega.
The reservoir approach allows us to formulate the optimisation problem as a simple least-square regression for which we prove theoretical convergence properties.
\end{abstract}

\tableofcontents

\section{Introduction}

In recent years, a fundamental shift from classical modelling towards so-called rough stochastic volatility models has happened. These ``rough" models were first proposed by Gatheral, Jusselin, Rosenbaum~\cite{Gatheral2018} and by Bayer, Gatheral, Friz~\cite{Bayer2015},
and have since sparked a great deal of research, because of their ability to capture stylised facts of volatility time series and of option prices more accurately, while remaining parsimonious. In essence, they are a class of continuous-path stochastic volatility models, where the instantaneous volatility is driven by a stochastic process with paths rougher than those of Brownian Motion, typically modelled by a fractional Brownian motion~\cite{Mandelbrot1968} with Hurst parameter $H\in(0,1)$. 
The reason for this drastic paradigm shift can be found not only under the historical measure, where the roughness of the time series of daily log-realised variance estimates suggests H{\"o}lder regularity of $H\approx 0.1$, but also under the pricing measure, where rough volatility models are able to reproduce the power-law behaviour of the ATM volatility skew. Since then, a slew of papers have appeared, providing closed-form expressions for the characteristic functions of rough Heston models~\cite{Euch2018}, machine learning techniques for calibration~\cite{Horvath2020}, microstructural foundations~\cite{Euch2018a}, option pricing partial differential equations (PDEs) solvers~\cite{jacquier2019deep, Bayer2022}, among others.
A full overview can be found in the recent monograph~\cite{bayer2023rough}.

Dating back to Black-Scholes~\cite{Black1973}, PDEs have been used to model the evolution of the prices of European-style options. 
However, rough volatility models give rise to a non-Markovian framework, where the value function for a European option is not deterministic anymore, but is instead random and satisfies a backward stochastic partial differential equation (BSPDE) as was shown in~\cite{Bayer2022}.
Moreover, even in classical diffusive models, the so-called curse of dimensionality poses a challenge when solving PDEs in high dimension;
until recently, only the backward stochastic differential equation (BSDE) approach by~\cite{Pardoux1990} was available to tackle this, 
which is not really feasible in dimension beyond six.

On a positive note, machine learning methods have spread inside quantitative finance in recent years, and neural networks, in particular, have become a powerful tool to overcome problems in high-dimensional situations, because of their superior computational performance across a wide range of applications~\cite{Buehler2019, gierjatowicz2022robust, Ruf2020};
more precisely in the context of PDEs, examples of applications thereof can be found in~\cite{Weinan2017, Han2018, Sirignano2018, jacquier2019deep, Saporito2021, Beck2021}. 
For a more thorough literature review on the use of neural networks in finance and finance-related PDEs, we refer the reader to the surveys in~\cite{beck2020overview, germain2021neural}.

In this paper, we focus on the works by Hur{\'e}, Pham and Warin~\cite{Hure2020}, and by Bayer, Qiu and Yao~\cite{Bayer2022}, where the classical backward resolution technique is combined with neural networks to estimate both the value function and its gradient. 
Not only does this approach successfully reduce the curse of dimensionality, but also appears more effective in both accuracy and computational efficiency than existing Euler-based approaches.

Besides research on numerical aspects, a lot of progress has been made on the theoretical foundations for neural network-based methods, 
in particular showing that they are able to approximate solutions of certain types of PDEs~\cite{Elbraechter2021, Hutzenthaler2020, Reisinger2020, gonon2021deep}. 
These results are significant as they show that deep neural networks can be used to solve complex problems that were previously thought intractable. 
However, in practice, optimal parameters of any given neural network minimising a loss function ultimately have to be calculated approximately. This is usually done through some kind of stochastic gradient descent (SGD) algorithm, which inadvertently introduces an optimisation error. Because of the non-convexity of the network's loss surface and the stochastic nature of the SGD, the optimisation error is notoriously hard to treat rigorously. 
One such attempt by Gonon~\cite{gonon2021random} instead involves the use of neural networks in which only certain weights are trainable and the remaining are randomly fixed. 
This suggests that these random-weight neural networks are, in fact, capable of learning non-degenerate Black-Scholes-type PDEs without succumbing to the curse of dimensionality. Following this, we combine the classical BSDE approach~\cite{Pardoux1990, Briand2003} with random-weight neural networks (RWNNs)~\cite{Huang2006, rahimi2007random, rahimi2008weighted}.

Our final algorithm then reduces to a least-square Monte-Carlo, introduced by Longstaff and Schwartz~\cite{Longstaff2001} (see also~\cite{Anker2017} for related applications), where the usually arbitrary choice of basis is `outsourced' to the reservoir of the corresponding RWNN. The basis is computationally efficient and ultimately allows us to express the approximation error in terms of the number of stochastic nodes in the network. Moreover, vectorisation of the Randomised Least Square along the sampling direction allows us to evaluate the sum of outer tensor products using the \texttt{einsum} function (available in \texttt{NumPy} and \texttt{PyTorch}) and achieve an even greater speed-up. 
{One word of caution though: in our numerical examples, for the rough Bergomi model and for Basket options, computation time is still longer than using Monte Carlo methods, mostly because it does require simulating sample paths.
It however opens the gates to (more advanced) numerical schemes for path-dependent partial differential equations,
which we plan to investigate more in later projects.
}

To summarise, in contrast with Bayer-Qiu-Yao~\cite{Bayer2022}, our numerical scheme employs RWNNs as opposed to the conventional feed-forward neural networks, resulting in significantly faster training times without sacrificing the accuracy of the scheme. 
Moreover, this structure allows us to provide error bounds in terms of the number of hidden nodes, granting additional insights into the network's performance. Given the comparable performance of RWNNs and conventional feed-forward neural networks, we argue that this paper illuminates an essential lesson, namely that the additional complexity of deep neural networks can sometimes be redundant at the cost of precise error bounds.
We note in passing that RWNNs have already been used in Finance to price American options~\cite{herrera2021optimal}, 
for financial data forecasting~\cite{Liu2018}, for PIDEs~\cite{gonon2021deep},
and we refer the interested reader to~\cite{Cao2018} for a general overview of their applications in data science.

{Moreover, in parallel to our work, Shang, Wang, and Sun \cite{Shang2023} combined randomised neural networks with Petrov-Galerkin methods to solve linear and non-linear PDEs. Their method, similar to ours, uses randomly initialised neural networks with trainable linear readouts. Neufeld, Schmocker, and Wu \cite{Neufeld2025} conducted a comprehensive error analysis of the random deep splitting method for non-linear parabolic PDEs and PIDEs, demonstrating high-dimensional problem-solving capabilities.}

The paper is structured as follows:
Section~\ref{sec:RWNN} provides a brief overview of Random-weight Neural Networks (RWNNs), including their key features and characteristics. 
In Section~\ref{sec:markovian_case}, we outline the scheme for the Markovian case and discuss the non-Markovian case in Section~\ref{sec:non-Markovian_case}.
The convergence analysis is presented in Section~\ref{sec:convergence_analysis}. Additionally, Section~\ref{sec:RWNN_numerical_results} presents numerical results, which highlight the practical relevance of the scheme and its performance for different models.
Some of the technical proofs are postponed to Appendix~\ref{apx:technical_proofs} to ease the flow of the paper.

\textbf{Notations:}
$\RR^+ = [0,\infty)$;
$\RWNN$ refers to a random neural network,
defined in Section~\ref{sec:RWNN};
for an open subset $E\subset \RR^d$, 
$1\leq p \leq \infty$ and $s \in \NN$ 
we define the Sobolev space
$$
\Ww^{s, p}(E, \RR^m)
\coloneqq  \Big\{f \in L^p(E, \RR^m): \; \partial_{\xx}^{\boldsymbol{\alpha}} f \in L^p(E, \RR^m), \text{for all } |\boldsymbol{\alpha}| \leq s\Big\},
$$
where $\boldsymbol{\alpha} = \left(\alpha_1, \ldots, \alpha_d\right)$, $|\boldsymbol{\alpha}|=\alpha_1+\ldots+\alpha_d$, and the derivatives $\partial_{\xx}^{\boldsymbol{\alpha}} f = \partial_{x_1}^{\alpha_1} \dots \partial_{x_d}^{\alpha_d} f$ are taken in a weak sense.
{To be consistent with probabilistic notations---although Machine Learning literature tends to differ---we shall write $\EE^\Phi[\cdot] := \EE[\cdot\vert\Phi]$ as the conditional expectation with respect to the random variable~$\Phi$.}

\section{Random-weight neural network (RWNN)}\label{sec:RWNN}

Neural networks with random weights appeared in the seminal works by Barron~\cite{Barron1993,barron1992neural}, 
but a modern version was proposed by Huang~\cite{Huang2006} under the name \textit{Extreme learning machine},
and today are known under different names: reservoir networks, random feature or random-weight networks; we adopt the latter as it sounds more explicit to us. 

\begin{definition}[Neural network]\label{def:neuralnet}
Let $L, N_{0}, \ldots, N_{L} \in \NN, \varrho: \RR \rightarrow \RR$ and for $l=$
$1, \ldots, L$ let $w_{l}: \RR^{N_{l-1}} \rightarrow \RR^{N_{l}}$ an affine function. A function $F: \RR^{N_{0}} \rightarrow \RR^{N_{L}}$
defined as
\[
F=w_{L} \circ F_{L-1} \circ \cdots \circ F_{1}, 
\quad \text {with } F_{l}=\varrho \circ w_{l} \quad \text { for } l=1, \ldots, L-1,
\]
is called a \textit{neural network},
with activation function~$\varrho$
applied component-wise. $L$ denotes the total number of layers, $N_{1}, \ldots, N_{L-1}$ denote
the dimensions of the hidden layers and~$N_{0}$ and~$N_{L}$ those of the input and output layers respectively. 
For each $l\in\{1, \dots, L\}$ the affine function $w_{l}:\RR^{N_{l-1}}\to\RR^{N_{l}}$ is given as $w_{l}(\xx) = \Am^{(l)} \xx + \bm^{(l)}$,
for $\xx\in \RR^{N_{l-1}}$, with $\Am^{(l)} \in \RR^{N_{l}\times N_{l-1}} $ and $\bm^{(l)} \in \RR^{N_{l}}$.
For any $i\in\{1, \dots N_{l}\}$ and $j\in\{1, \dots, N_{l-1}\}$, 
$A_{i j}^{(l)}$ is interpreted as the weight of the edge connecting node~$i$ of layer~$l-1$ to node~$j$ of layer~$l$.
\end{definition}

A \textit{random-weight neural network} (RWNN) is a neural network where the hidden layers are randomly sampled from a given distribution and then fixed; 
consequently, only the last layer is trained:
out of all the parameters $(\Am^{(l)}, \bm^{(l)})_{l=0,\ldots,L}$ of the $L$-layered neural network, the parameters $(\Am^{(0)}, \bm^{(0)}, \ldots, A^{(L-1)}, \bm^{(L-1)})$ are randomly sampled and frozen and only $(\Am^{(L)}, \bm^{(L)})$ from the last layer are trained. 

The training of such an RWNN can then be simplified into a convex optimisation problem. This makes the training easier to manage and understand both practically and theoretically. However, by only allowing certain parts of the parameters to be trained, the overall capacity and expressivity are possibly reduced. Although it is still unclear if random neural networks still maintain any of the powerful approximation properties of general deep neural networks, these questions have been addressed to some extent in e.g.~\cite{gonon2023approximation, Mei2022}, where learning error bounds for RWNNs have been proved.

Denote now $\RWNN_{\infty}^{\varrho}(d_0,d_1)$ the set of random neural networks from $\RR^{d_0}$ to $\RR^{d_1}$, with activation function~$\varrho$--and we shall drop the explicit reference to input and output dimensions in the notation whenever they are clear from the context. 
Moreover, for any $L, K \in \NN$, $\RWNN_{L, K}^\varrho$ represents a random neural network with a fixed number of hidden layers $L$ and fixed input and output dimension $K$ for each hidden layer. 
We now give a precise definition of a single layer $\RWNN^\varrho_{K}\coloneqq\RWNN^\varrho_{1, K}$, which we will use for our approximation.

\begin{definition}[Single layer RWNN]\label{def:RWNN}
Let $(\widetilde{\Omega}, \widetilde{\Ff}, \widetilde{\PP})$ be a probability space on which the iid random variables on a bounded domain $\Am_{k}: \widetilde{\Omega} \rightarrow \Ss \subset \RR^{d}$ and $b_{k}: \widetilde{\Omega} \rightarrow \mathscr{sS} \subset \RR$, respectively corresponding to weights and biases, are defined. Let $\phib=\left\{\phi_{k}\right\}_{k \geq 1}$ denote a sequence of random basis functions, where each $\phi_{k}: \RR^{d} \to\RR$ is of the form
$$
\phi_{k}(\xx)\coloneqq \varrho\left(\Am_{k}^{\top} \xx + b_{k}\right), \qquad x\in\RR^d,
$$
with $\varrho:\RR\rightarrow\RR$ a Lipschitz continuous activation function. 
For an output dimension~$m$ and~$K$ hidden units, we define the \textit{reservoir} or \textit{random basis} as $\Phi_{K}\coloneqq\phi_{1:K}=(\phi_1,\dots,\phi_K)$ and the random network $\RWNN^\varrho_K$ with parameter $\Theta=\left(\theta_{1}, \ldots, \theta_{m}\right)^\top \in \RR^{m\times K}$ as the map
\[
\RWNN^\varrho_K: \xx \mapsto \Psi_K(\xx;\Theta)\coloneqq\Theta\Phi_K(\xx).
\]
Thus, for each output dimension $j\in\{1,\dots,m\}$, $\RWNN^\varrho_K$ produces a linear combination of the first $K$ random basis functions $\theta_j^\top\phi_{1:K}\coloneqq \sum_{k=1}^{K} \theta_{j,k} \phi_{k}$.
\end{definition}

\begin{remark}
In this paper, we will make use of the more compact vector notation
$$
\Phi_K: \RR^d \ni \xx\mapsto \varrhob(\Am \xx+\bm) \in \RR^{K},
$$
where $\varrhob:\RR^K\rightarrow\RR^K$ acts component-wise $\varrhob(\yy)\coloneqq(\varrho(y_1), \dots \varrho(y_K))$ and $\Am: \widetilde{\Omega} \rightarrow \RR^{K\times d}$ and $\bm: \widetilde{\Omega} \rightarrow \RR^K$ are the random matrix and bias respectively.
\end{remark}

\subsection{Derivatives of ReLu-RWNN}
\label{sec:RWNNderivatives}
In recent years ReLu neural networks have been predominately used in deep learning, because of their simplicity, efficiency and ability to address the so-called vanishing gradient problem~\cite{LeCun2012}. In many ways, ReLu networks also give a more tractable structure to the optimisation problem compared to their smooth counterparts such as $\tanh$ and sigmoid. 
Gonon, Grigoryeva and Ortega~\cite{gonon2023approximation} derived error bounds to the convergence of a single layer RWNN with ReLu activations. 
Now, while ${\relu(y)\coloneqq \max\{y,0\}}$ is performing well numerically, it is, however, not differentiable at zero (see~\cite{Berner2019} for a short exposition on the chain-rule in ReLu networks). As ReLu-RWNNs will be used in our approach to approximate solutions of partial differential equations, a discussion on its derivatives is in order. 
To that end we let $\brelu(\yy)\coloneqq(\relu(y_1),\dots,\relu(y_K))$ and $\Hb(y)=\ind_{(0,\infty)}(\yy)\in\RR^K$ for $\yy\in\RR^K$, where the indicator function is again applied component-wise.

\begin{lemma}\label{lem:RWNNderivative}
For any linear function $\ell(\xx)=\Am \xx + \bm$, with $\Am\in\RR^{K\times d}$ and $\bm\in\RR^K$, then
\[
\nabla_x(\brelu\circ \ell)(\xx) = \operatorname{diag}(\Hb(\Am \xx+\bm))\Am,
\qquad \text{for a.e. } \xx\in\RR^d.
\]
\end{lemma}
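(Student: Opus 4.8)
The plan is to reduce the statement to the elementary one-dimensional fact that $\relu$ is classically differentiable on $\RR\setminus\{0\}$ with $\relu'(y)=\ind_{(0,\infty)}(y)$, and to push this through the affine map~$\ell$ coordinate by coordinate. Writing $\ell(\xx)=\Am\xx+\bm$ with $\Am_1^\top,\dots,\Am_K^\top$ the rows of~$\Am$, the $k$-th component of $\brelu\circ\ell$ is the scalar map $\xx\mapsto\relu(\Am_k^\top\xx+b_k)$, a composition of the everywhere-smooth affine function $\xx\mapsto\Am_k^\top\xx+b_k$ with~$\relu$.

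First I would isolate the exceptional set. For each index~$k$ with $\Am_k\neq0$, the zero level set $\mathcal{H}_k\coloneqq\{\xx\in\RR^d:\Am_k^\top\xx+b_k=0\}$ is an affine hyperplane, hence Lebesgue-null; for each~$k$ with $\Am_k=0$ the $k$-th component of $\brelu\circ\ell$ is constant, so it contributes a zero $k$-th row to the left-hand side, while the right-hand side also has zero $k$-th row since $\Am_k=0$, and such indices may simply be ignored. Set $\mathcal{N}\coloneqq\bigcup_{k:\,\Am_k\neq0}\mathcal{H}_k$, a finite union of null sets, so that $|\mathcal{N}|=0$. On $\RR^d\setminus\mathcal{N}$ the classical chain rule applies coordinatewise: at any $\xx\notin\mathcal{N}$ the scalar $\Am_k^\top\xx+b_k$ is nonzero, $\relu$ is differentiable there with derivative $\ind_{(0,\infty)}(\Am_k^\top\xx+b_k)$, and composing with the affine map gives
\[
\nabla_x\big[\relu(\Am_k^\top\cdot+b_k)\big](\xx)=\ind_{(0,\infty)}(\Am_k^\top\xx+b_k)\,\Am_k^\top.
\]
Stacking these~$K$ row identities is precisely the asserted formula $\nabla_x(\brelu\circ\ell)(\xx)=\operatorname{diag}(\Hb(\Am\xx+\bm))\Am$.

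It remains only to note that $\brelu\circ\ell$ is locally Lipschitz, being the composition of the $1$-Lipschitz map $\brelu$ with an affine map; by Rademacher's theorem it is differentiable almost everywhere, and its pointwise almost-everywhere derivative coincides with its distributional (weak) derivative. Hence the identity above holds both pointwise a.e. and in the weak sense used elsewhere in the paper. There is no substantial obstacle: the only mild care needed is the bookkeeping around zero rows of~$\Am$ and checking that the claimed formula degenerates consistently there, everything else being the textbook chain rule together with the observation that finitely many hyperplanes form a Lebesgue-null set.
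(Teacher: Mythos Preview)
Your proof is correct and slightly more elementary than the paper's. The paper argues componentwise by partitioning $\RR^d$ into the region $\Aa_k=\{\xx:\Am_k^\top\xx+b_k\leq 0\}$ and its complement; on $\RR^d\setminus\Aa_k$ the composition equals $\ell_k$, while on $\Aa_k$ the function $\relu\circ\ell_k$ is identically zero, and the paper invokes a geometric measure theory fact (Evans, Section~3.1.2, Corollary~I) that a Lipschitz function has vanishing gradient almost everywhere on any level set. You instead isolate only the smaller null set $\mathcal{N}=\bigcup_k\{\xx:\Am_k^\top\xx+b_k=0\}$ of hyperplanes and apply the classical chain rule directly on the full-measure complement, where $\relu$ is $\Cc^1$. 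Your route avoids any appeal to the level-set result and needs only Rademacher plus the chain rule; the paper's route, though heavier here, connects with the machinery of approximate and weak differentiability that they use downstream.
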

\begin{proof}
Let $\Aa\coloneqq\left\{\xx\in\RR^d: (\brelu\circ\ell)(\xx)=0\right\}
=\left\{\xx\in\RR^d:\ell(\xx)\leq 0\right\}$. 
Then ${(\brelu\circ\ell)(\xx)=\ell(\xx)}$ for all $\xx\in\RR^d\setminus\Aa$.
Since~$\ell$ is Lipschitz,
differentiability on level sets~\cite[Section~3.1.2,~Corollary~I]{Evans1992} implies that
$\nabla_{\xx}\left(\brelu\circ\ell\right)(\xx) = \boldsymbol{0}\in\RR^d$
for almost every $\xx\in\Aa$,
and hence
\[
\nabla_{\xx}(\brelu\circ\ell)(\xx)
=\operatorname{diag}\left(\ind_{\{\ell(\xx)\in\RR^d\setminus\Aa\}}\right)\nabla_{\xx}\ell(\xx)
=\operatorname{diag}\left(\ind_{(0,\infty)}(\ell(\xx))\right)\nabla_{\xx}\ell(\xx)
=\operatorname{diag}(\Hb(\Am \xx+\bm))\Am.
\]
\end{proof}
Thus by Lemma~\ref{lem:RWNNderivative}, the first derivative of $\Psi(\cdot;\Theta)\in\RWNN^{\relu}_K$ is equal to 
\begin{equation}\label{eq:RWNN1stderivative}
    \nabla_{\xx}\Psi_K(\xx; \Theta)
    = \Theta\operatorname{diag}(\Hb(\Am \xx+\bm))\Am
    \qquad \text{for a.e. } \xx\in\RR^d.
\end{equation}
The above statements hold almost everywhere, it is thus appropriate we introduce a notion of \textit{approximate differentiability}.
\begin{definition}[{Approximate~limit,~\cite[Section~1.7.2]{Evans1992}}]
Consider a Lebesgue-measurable set $E \subset \RR^d$, a measurable function $f: E \rightarrow \RR^m$ and a point $\xx_0 \in E$. We say $l \in \RR^m$ is the approximate limit of~$f$ at~$\xx_0$, and write
$\aplim_{\xx \rightarrow x_0} f(\xx)=l$,
if for each $\eps>0$,
\[
\lim _{r \downarrow 0} \frac{\lambda\left(\Bb_r(\xx_0) \cap\left\{\xx\in E:\;|f(\xx)-l| \geq \eps\right\}\right)}{\lambda(\Bb_r(\xx_0))}=0,
\]
with~$\lambda$ the Lebesgue measure and $\Bb_r(\xx_0)$ the closed ball with radius $r>0$ and center~$\xx_0$.
\end{definition}
\begin{definition}[{Approximate differentiability,~\cite[Section~6.1.3]{Evans1992}}]\label{def:approximatediff}
Consider a measurable set $E\subset \RR^d$, a measurable map $f:E\rightarrow \RR^m$ and a point $\xx_0\in E$. 
The map~$f$ is approximately differentiable at~$\xx_0$ if there exists a linear map $\Dxx:\RR^d\rightarrow\RR^m$ such that
\[
\aplim_{\xx \rightarrow x_{0}} \frac{f(\xx)-f(\xx_{0})-\Dxx(\xx-\xx_{0})}{|\xx-\xx_{0}|}=0.
\]
Then~$\Dxx$ is called the approximate differential of~$f$ at~$\xx_0$.
{We call~$f$ approximately differentiable almost everywhere if its approximately derivative exists almost everywhere.}
\end{definition}
\begin{remark}\label{rem:usualdiffrules}
The usual rules from classical derivatives, such as the uniqueness of the differential, and differentiability of sums, products and quotients, apply to approximately differentiable functions. 
Moreover, the chain rule applies to compositions $\varphi\circ f$ when~$f$ is approximately differentiable at~$\xx_0$ and $\varphi$ is classically differentiable at $f(\xx_0)$.
\end{remark}
\begin{remark}[{\cite[Theorem~4, Section~6.1.3]{Evans1992}}]\label{rem:approxweakeq}
For $f\in \Ww_{\mathrm{loc}}^{1, p}\left(\mathbb{R}^{d}\right)$ and $1 \leq p \leq \infty$, $f$~is approximately differentiable almost everywhere and its approximate derivative equals its weak derivative almost everywhere. 
We will thus use the operator~$\Dxx$ to denote the weak derivative and approximate derivative interchangeably, to distinguish them from the classical derivative denoted by $\nabla$.
\end{remark}

\begin{lemma}\label{lem:diffexpectationeq}
Let $E\subset \RR^d$ be a measurable set with finite measure, $X:\Omega\rightarrow E$ a continuous random variable on some probability space $(\Omega, \Ff, \PP)$,
$\varphi\in\Cc^1(\RR^m)$, $\Phi_{\mathrm{ap}}: E\rightarrow \RR^m$ an  approximately differentiable function, and~$\Phi$ its $\Cc^1(\RR^d; \RR^m)$ extension to $\RR^d$.
Then
$\EE[\varphi(\Dx\Phi_{\mathrm{ap}}(X))] = \EE[\varphi(\nabla_x\Phi(X))]
$.

\end{lemma}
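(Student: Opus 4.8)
The plan is to reduce the claimed identity to an \emph{almost sure} equality of the two integrands, exploiting that a continuous random variable charges no Lebesgue-null set. The only substantive point is to compare the approximate derivative $\Dx\Phi_{\mathrm{ap}}$ of the restriction $\Phi_{\mathrm{ap}}=\Phi|_E$ with the classical gradient $\nabla_x\Phi$ of its $\Cc^{1}$ extension, and to show that they coincide Lebesgue-almost everywhere on $E$.

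\textbf{Step 1 (the two derivatives agree $\lambda$-a.e. on $E$).} Since $\Phi\in\Cc^1(\RR^d;\RR^m)$ extends $\Phi_{\mathrm{ap}}$, we have $\Phi\equiv\Phi_{\mathrm{ap}}$ on $E$. Fix $\xx_0\in E$ and set, for $\xx\neq\xx_0$,
\[
R(\xx):=\frac{\left|\Phi(\xx)-\Phi(\xx_0)-\nabla_x\Phi(\xx_0)(\xx-\xx_0)\right|}{|\xx-\xx_0|}.
\]
Classical differentiability of $\Phi$ at $\xx_0$ gives $R(\xx)\to 0$ as $\xx\to\xx_0$, so for every $\eps>0$ the set $\{R\geq\eps\}$ is empty in a neighbourhood of $\xx_0$ and hence has Lebesgue density zero at $\xx_0$. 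Because $\Phi_{\mathrm{ap}}=\Phi$ on $E$ and $\xx_0\in E$, the set $\{\xx\in E:\ R(\xx)\geq\eps\}$ — which is exactly the ``bad set'' appearing in Definition~\ref{def:approximatediff} for $\Phi_{\mathrm{ap}}$ at $\xx_0$ with candidate differential $\nabla_x\Phi(\xx_0)$ — is contained in $\{R\geq\eps\}$, and thus also has density zero at $\xx_0$. Therefore $\nabla_x\Phi(\xx_0)$ is an approximate differential of $\Phi_{\mathrm{ap}}$ at $\xx_0$, for every $\xx_0\in E$. By the Lebesgue density theorem, $\lambda$-a.e. $\xx_0\in E$ is a point of density one of $E$, where the approximate differential is unique (Remark~\ref{rem:usualdiffrules}); hence $\Dx\Phi_{\mathrm{ap}}(\xx_0)=\nabla_x\Phi(\xx_0)$ for $\lambda$-a.e. $\xx_0\in E$.

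\textbf{Step 2 (transfer to $X$ and take expectations).} Let $N\subset E$ be the $\lambda$-null set off which $\Dx\Phi_{\mathrm{ap}}=\nabla_x\Phi$. As $X$ is a continuous random variable taking values in $E$, its law is absolutely continuous with respect to Lebesgue measure, so $\PP(X\in N)=0$ and hence $\Dx\Phi_{\mathrm{ap}}(X)=\nabla_x\Phi(X)$ $\PP$-a.s. Since $\varphi$ is continuous, $\varphi(\Dx\Phi_{\mathrm{ap}}(X))=\varphi(\nabla_x\Phi(X))$ $\PP$-a.s.; both sides are measurable and equal $\PP$-a.s., so their expectations coincide (in particular, one is well defined and finite exactly when the other is).

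\textbf{Main obstacle.} The delicate point is Step~1: the approximate derivative of $\Phi_{\mathrm{ap}}$ only ``sees'' the behaviour of $\Phi$ along $E$, so one must work at the level of the definition of the approximate limit and observe that the relevant bad set on $E$ sits inside the corresponding bad set on $\RR^d$, then invoke uniqueness of the approximate differential precisely at density points of $E$ (which is why the Lebesgue density theorem enters, and where the assumption that $X$ is valued in $E$ is used). The remaining ingredients — the null-set transfer to $X$ via absolute continuity of its law, and continuity of $\varphi$ — are routine.
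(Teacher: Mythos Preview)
Your proof is correct and, in fact, cleaner than the paper's. Both arguments ultimately rest on the same two ingredients: (i) showing that $\Dx\Phi_{\mathrm{ap}}=\nabla_x\Phi$ Lebesgue-a.e.\ on~$E$, and (ii) transferring this to~$X$ via absolute continuity of its law. The difference lies in how (i) is obtained. The paper invokes Federer's Lusin-type characterisation of approximate differentiability (for every $\eps>0$ there is a compact $F\subset E$ with $\lambda(E\setminus F)<\eps$ on which $\Phi_{\mathrm{ap}}$ is $\Cc^1$), together with the fact that a $\Cc^1$ map sends null sets to null sets. You instead work directly from Definition~\ref{def:approximatediff}: classical differentiability of~$\Phi$ forces the ``bad set'' in the approximate-limit definition to have density zero at every $\xx_0\in E$, so $\nabla_x\Phi(\xx_0)$ is \emph{an} approximate differential of $\Phi_{\mathrm{ap}}$ there, and the Lebesgue density theorem plus uniqueness (Remark~\ref{rem:usualdiffrules}) pins it down $\lambda$-a.e. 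Your route is more elementary and self-contained---it avoids the external Federer reference and makes explicit why density points enter---while the paper's version is shorter but leans on a heavier structural result. Note also that your Step~2 shows the Luzin-$N$ property of~$\varphi$ is not actually needed: once $\Dx\Phi_{\mathrm{ap}}(X)=\nabla_x\Phi(X)$ $\PP$-a.s., equality of $\varphi$-images follows from mere continuity of~$\varphi$, not from any null-set preservation.
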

\begin{proof}
By~\cite[Theorem 3.1.6]{Federer1996} a function $\Phi_{\mathrm{ap}}: E \rightarrow \mathbb{R}^{m}$ is approximately differentiable almost everywhere if for every $\eps>0$ there is a compact set $F \subset E$ such that the Lebesgue measure $\lambda(E \backslash F)<\eps$,  $\left.\Phi_{\mathrm{ap}}\right|_{F}$ is $\Cc^{1}$ {and there exists a~$\Cc^1$-extension on~$\RR^d$}. 
Since~$\varphi$ is everywhere differentiable, 
it maps null-sets to null-sets~\cite[Lemma 7.25]{real1986complex}. 
The claim follows since~$\PP$ is absolutely continuous with respect to the Lebesgue measure~$\lambda$, $X$~being a continuous random variable.

\end{proof}
\begin{corollary}\label{coro:diffexpectationeq2}
    Let $E\subset \RR^d$ be a measurable set with finite measure, $X:\Omega\rightarrow E$ a continuous random variable on some probability space $(\Omega, \Ff, \PP)$, $\varphi\in\Cc^1(\RR^m)$, $\Phi:E\rightarrow \RR^m$  an approximately differentiable function, 
    and $\Psi\in \Ww^{1,p}(E,\RR^m)$
    for $p\geq 1$ such that $\Phi = \Psi$ almost everywhere. Then
    $\EE[\varphi(\Dxx\Phi(X))] = \EE[\varphi(\Dxx\Psi(X))]$.
\end{corollary}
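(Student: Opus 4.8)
## Proof proposal for Corollary~\ref{coro:diffexpectationeq2}

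The plan is to reduce this to Lemma~\ref{lem:diffexpectationeq}, for which the only missing ingredient is a suitable $\Cc^1$ extension. The key observation is that, by Remark~\ref{rem:approxweakeq}, the Sobolev function $\Psi\in\Ww^{1,p}(E,\RR^m)$ is itself approximately differentiable almost everywhere with approximate derivative equal to its weak derivative $\Dxx\Psi$. Since $\Phi=\Psi$ almost everywhere, the uniqueness of the approximate differential (Remark~\ref{rem:usualdiffrules}) forces $\Dxx\Phi=\Dxx\Psi$ almost everywhere on $E$. Hence $\varphi(\Dxx\Phi(X))=\varphi(\Dxx\Psi(X))$ almost surely, because $X$ is a continuous random variable and $\PP\ll\lambda$, so the exceptional null set is charged with probability zero. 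This already gives the conclusion, provided we know both expectations are well-defined.

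First I would invoke the characterisation of approximate differentiability used in the proof of Lemma~\ref{lem:diffexpectationeq}: by~\cite[Theorem 3.1.6]{Federer1996} (or equivalently the Whitney-type statement in~\cite[Section~6.1.3]{Evans1992}), for every $\eps>0$ there is a compact $F\subset E$ with $\lambda(E\setminus F)<\eps$ such that $\Phi|_F\in\Cc^1(F)$, and likewise for $\Psi$; by Whitney's extension theorem each of these restrictions extends to a function in $\Cc^1(\RR^d;\RR^m)$. Then I would apply Lemma~\ref{lem:diffexpectationeq} to obtain $\EE[\varphi(\Dxx\Phi(X))]=\EE[\varphi(\nabla_x\widetilde\Phi(X))]$ for the $\Cc^1$ extension $\widetilde\Phi$ of $\Phi$, and the same for $\Psi$. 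Comparing the two identities and using $\Dxx\Phi=\Dxx\Psi$ a.e.\ (hence $\PP$-a.s.) closes the argument.

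Alternatively, and more directly, I would argue as follows: since $E$ has finite measure and $\PP\ll\lambda$, the set $\{\Phi\neq\Psi\}\cup\{\Dxx\Phi\neq\Dxx\Psi\}$ is Lebesgue-null, so $\PP(\Dxx\Phi(X)=\Dxx\Psi(X))=1$. Composing with the continuous function $\varphi$ preserves this almost-sure equality, and integrating (the integrals existing because $\varphi\circ\Dxx\Psi$ is measurable and one may always work with the $[0,\infty]$-valued positive and negative parts, or simply interpret the statement for bounded or integrable $\varphi\circ\Dxx\Psi$) yields $\EE[\varphi(\Dxx\Phi(X))]=\EE[\varphi(\Dxx\Psi(X))]$.

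The main obstacle is not analytical depth but bookkeeping: one must be careful that "approximately differentiable function $\Phi$ equal a.e.\ to a Sobolev function $\Psi$" genuinely entails equality of the two approximate/weak derivatives. This is where the uniqueness of the approximate differential (Remark~\ref{rem:usualdiffrules}) and Remark~\ref{rem:approxweakeq} do the real work: $\Dxx\Psi$ is simultaneously the weak derivative of $\Psi$ and, at Lebesgue-a.e.\ point, the approximate differential of $\Psi$; at any point where $\Phi=\Psi$ on a set of density one (which is a.e.\ point, since $\{\Phi\neq\Psi\}$ is null) the approximate differentials coincide. Everything else is a transfer of the a.e.\ identity to an a.s.\ identity via absolute continuity of the law of $X$, exactly as in Lemma~\ref{lem:diffexpectationeq}.
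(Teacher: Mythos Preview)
Your proposal is correct and follows essentially the same approach as the paper: both reduce the statement to Lemma~\ref{lem:diffexpectationeq} by invoking Remark~\ref{rem:approxweakeq} to identify the weak and approximate derivatives on $\Ww^{1,p}$, and then use the absolute continuity $\PP\ll\lambda$ to pass from an almost-everywhere identity to an almost-sure one. Your write-up is in fact more careful than the paper's one-line proof, spelling out why $\Dxx\Phi=\Dxx\Psi$ a.e.\ follows from $\Phi=\Psi$ a.e.\ via density-one arguments and uniqueness of the approximate differential.
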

\begin{proof}
This is a direct consequence of Lemma~\ref{lem:diffexpectationeq}, after noting that the two notions of derivatives are the same on $\Ww^{1,p}(E,\RR^m)$ (see Remark~\ref{rem:approxweakeq}).
\end{proof}

From a practical perspective, the second-order derivative of the network with respect to the input will be zero for all intents and purposes. 
However, as will become apparent in Lemma~\ref{lem:derivativeBounds}, we need to investigate it further, in particular the measure zero set of points where ReLu-RWNN is not differentiable. Rewriting the diagonal operator in terms of the natural basis $\{e_{i}\}$ and evaluating the function $\Hb$ component-wise yields
$$
\nabla_{\xx}\Psi_K(\xx; \Theta)=\Theta \left(\sum_{j=1}^{K} e_{j} e_{j}^\top H\left(e_{j}^\top \Am \xx+b_{j}\right) \right) \Am.
$$
The $i$-th component of the second derivative is thus
$$
\left[\nabla_{\xx}^2\Psi_K(\xx; \Theta)\right]_i
= \Theta\left( \sum_{j=1}^{K} e_{j} e_{j}^\top a_{j i}H'\left(e_{j}^\top \Am \xx+b_{j}\right)\right)\Am \\
    = \Theta\operatorname{diag}\left(a_{i}\right)\operatorname{diag}\left(\Hb'(\Am \xx+\bm)\right) \Am,
$$
where $a_i$ denotes the $i$-th column of the matrix~$\Am$. 
Next, we let
$$\delta_0^{\eps}(\xx) \coloneqq \frac{H(\xx)-H(\xx-\eps)}{\eps}$$
for $\xx\in\RR$ and define the left derivative of $H$ as $H'=\lim_{\eps\downarrow 0}\delta_0^\eps=\delta_0$ in the distributional sense. This finally gives the second derivative of the network:
\begin{equation}\label{eq:SecondDiff}
\left[\nabla_{\xx}^2\Psi_K(\xx; \Theta)\right]_i
 = \Theta\operatorname{diag}\left(a_{i}\right)\operatorname{diag}\left(\boldsymbol{\delta}_0(\Am \xx+\bm)\right) \Am,
\end{equation}
where $\boldsymbol{\delta}_0$ denotes the vector function applying $\delta_0$ component-wise.

\subsection{Randomised least squares (RLS)}
\label{sec:randomizedLS}
Let $Y\in\RR^d$ and $X\in\RR^k$ random variables on some probability space $(\Omega, \Ff, \PP)$ and $\betam\in\RR^{d\times k}$ a deterministic matrix. 
If the loss function is the mean square error (MSE), the randomised least square estimator reads
\begin{align*}
    \nabla_{\betam} \EE\left[\|Y-\betam X\|^2\right] &= \nabla_{\betam} \EE[(Y-\betam X)^\top(Y-\betam X)] \\ 
    &= \EE\left[\nabla_{\betam} (Y^\top Y - Y^\top \betam X - X^\top\betam^\top Y + X^\top\betam^\top\betam X)\right] \\
    &= \EE\left[2\betam XX^\top - 2YX^\top\right],
\end{align*}
which gives the minimiser\footnote{The matrix $\EE[XX^\top]$ may not be invertible, but its generalised Moore-Penrose inverse always exists.}
$\betam = \EE\left[YX^\top\right] \EE\left[XX^\top\right]^{-1}$,
and its estimator
\begin{equation}\label{eq:RLS_estimator}
\widehat\betam \coloneqq  \left(\sum^n_{j=1} Y_j X_{j}^{\top}\right)\left(\sum^n_{j=1} X_jX_j^\top\right)^{-1}.
\end{equation}
Depending on the realisation of the reservoir of the RWNN, the covariates of~$X$ may be collinear, 
so that~$X$ is close to rank deficient. 
A standard remedy is to use the Ridge regularised version~\cite{Hoerl1970} of the estimator
\[
\widehat\betam_R = \left(\sum^n_{j=1} Y_j X_{j}^{\top}\right)\left(\sum^n_{j=1} X_jX_j^\top+\lambda I\right)^{-1},
\quad \text{for } \lambda>0,
\]
which results in a superior, more robust performance in our experiments.
\begin{remark}
The above derivation holds true for the approximate derivative~$\Dx$ as well because all operations above hold for approximately differentiable functions (Remark~\ref{rem:usualdiffrules}).
\end{remark}
\begin{remark}
{At first glance, the form of the RLS estimator in~\eqref{eq:RLS_estimator} suggests that the sum of outer products over~$n>>1$ samples may be computationally expensive. In practice, however, this operation can be implemented efficiently by exploiting the tensor functionalities provided by libraries such as \texttt{NumPy} and \texttt{PyTorch}. In particular, the \texttt{einsum} function enables an efficient evaluation of the required sum of outer products, thereby further optimising the overall computation. Implementation details are provided in the accompanying code, available at \href{https://github.com/ZuricZ/RWNN_PDE_solver}{\texttt{ZuricZ/RWNN\_PDE\_solver}}.}

\end{remark}

\section{The Markovian case}\label{sec:markovian_case}
Let the process $\XX$ of the traded and non-traded components of the underlying under the risk-neutral measure $\QQ$ be given by the following $d$-dimensional SDE:
\begin{equation}\label{eq:markovianSDE}
\XX_s^{t,\xx} = \xx + \int_t^s \mu(r, \XX_r^{t,\xx})\D r
+ \int_t^s \Sigma\left(r, \XX_r^{t,\xx}\right)\D W_r,
\end{equation}
where $\mu: [0, T]\times\RR^d\rightarrow\RR^d$
and $\Sigma:[0, T]\times\RR^d\rightarrow\RR^{d\times d}$ adhere to Assumption~\ref{ass:wellposednessBSDE}, 
and~$W$ is a standard $d$-dimensional Brownian motion on the probability space $(\Omega, \Ff, \QQ)$ equipped
with the natural filtration $\FF = \{\Ff_t\}_{0\leq t\leq T}$ of~$W$. 
By the Feynman-Kac formula, options whose discounted expected payoff under~$\QQ$ can be represented as
\begin{equation}
    u(t, \xx)=\EE\left[\int_t^T \E^{-r(s-t)} f\left(s, \XX_s^{t, \xx}\right) \D s + \E^{-r(T-t)} g\left(\XX_T^{t, \xx}\right)\right]\quad \textup{ for all } (t, \xx) \in[0, T] \times \Aa,
\end{equation}
for~$\Aa\subset\RR^d$ with interest rate $r\geq 0$ and continuous functions $f:[0,T]\times\RR^d\to\RR$
and $g:\RR^d\to\RR$ can be viewed as solutions to the Cauchy linear parabolic PDE
\begin{equation*}
\left\{
\begin{array}{rl}
\partial_t u + \Ll u + f - ru= 0, & \text { on }[0, T) \times \Aa, \\
u(T, \cdot)= g, & \text { on } \Aa,
\end{array}
\right.
\end{equation*}
where
\begin{align}\label{eq:infinitesimal_generator}
\Ll u \coloneqq \frac{1}{2}\operatorname{Tr}
\left(\Sigma\Sigma^\top \nabla_{\xx}^2 u\right) +  (\nabla_{\xx} u)\mu,
\qquad\text{on }[0,T)\times \Aa,
\end{align}
is the infinitesimal generator associated with diffusion~\eqref{eq:markovianSDE}.
In this Markovian setting, we thus adopt a setup similar to~\cite{Hure2020} and consider a slightly more general PDE
\begin{equation}\label{eq:markovianPDE}
\left\{
\begin{array}{r@{\;}ll}
    \partial_t u(t,\xx) + \Ll u(t,\xx) + 
    f\Big(t, \xx, u(t,\xx), \nabla_{\xx} u(t,\xx) \cdot \Sigma(t,\xx)\Big) =& 0, & \text{on }[0, T) \times \Aa,\\
     u(T, \cdot) =& g, & \text{on }\Aa,
\end{array}
\right.
\end{equation}
with $f:[0,T]\times \RR^d \times \RR \times \RR^{d} \rightarrow \RR$ 
such that Assumption~\ref{ass:wellposednessBSDE} is satisfied, which guarantees existence and uniqueness of the solution to the corresponding BSDE~\cite[Section 4]{Pardoux1990}.

\begin{assumption}[Well-posedness of the FBSDE system~\eqref{eq:markovianSDE}-\eqref{eq:markovianBSDE}]\label{ass:wellposednessBSDE}
The drift $\mu: [0, T]\times\RR^d\rightarrow\RR^d$ and the diffusion coefficient $\Sigma:[0, T]\times\RR^d\rightarrow\RR^d\times\RR^d$ satisfy global Lipschitz conditions. Moreover,
\begin{enumerate}[(i)]
    \item there exists $L_f>0$ such that 
    $
    \sup_{0 \leq t \leq T}\|f(t, 0,0,0)\|<\infty$
    and, for all $(t_1, x_1, y_1, z_1)$ and $(t_2, x_2, y_2, z_2)$ in $[0, T] \times \RR^d \times \RR \times \RR^{d}$,
    $$
    \left|f\left(t_2, x_2, y_2, z_2\right)-f\left(t_1, x_1, y_1, z_1\right)\right| \leq L_f\left(\sqrt{\left|t_2-t_1\right|}+\left|x_2-x_1\right|+\left|y_2-y_1\right|+\left|z_2-z_1\right|\right),
    $$
    \item The function~$g$ has at most linear growth condition.
\end{enumerate}
\end{assumption}

The corresponding second-order generator is again given by~\eqref{eq:infinitesimal_generator}.
The following assumption is only required to cast the problem into a regression. 
Otherwise, the optimisation (below) can still be solved using other methods, such as stochastic gradient descent. Another solution would be to use the so-called splitting method to linearise the PDE (as in~\cite{Beck2021} and the references therein for example).

\begin{assumption}\label{ass:faffine}
The function $f:[0,T]\times \RR^d \times \RR \times \RR^{d}\times\RR^{d} \rightarrow \RR$ has an affine structure in $\yy \in \RR^m$ and in $\zz,\vv \in \RR^{d}$:
$$
f\left(t, \xx, \yy, \zz, \vv\right)
= a(t,\xx)\yy + b(t, \xx)\zz + c(t, \xx) \vv + \widetilde{f}(t, \xx),
$$
for some real-valued functions $a,b,c,\widetilde{f}$ on $[0,T]\times\RR^d$ that map to conformable dimensions. 
\end{assumption}

\subsection{Random weighted neural network scheme}
The first step in so-called deep BSDE schemes~\cite{Weinan2017, Han2018, Hure2020} is to establish the BSDE associated with the PDE~\eqref{eq:markovianPDE} and the process~\eqref{eq:markovianSDE} through the non-linear Feynman-Kac formula. By~\cite{Pardoux1990} there exist $\FF$-adapted processes $(Y, Z)$, which are unique solutions to the BSDE
\begin{equation}\label{eq:markovianBSDE}
Y_{t}=g\left(\XX_{T}\right)+\int_{t}^{T} f\left(s, \XX_{s}, Y_{s}, Z_{s}\right) \D s-\int_{t}^{T} Z_{s} \D W_{s},
\qquad\text{for any }t \in [0,T],
\end{equation}
and which are connected to the PDE~\eqref{eq:markovianPDE} via
$$
Y_t = u(t, \XX_t) \qquad \text{and} \qquad Z_t = \nabla_x u(t,\XX_t)\cdot\Sigma(t,\XX_t).  
$$
with terminal condition $u(T, \cdot)=g$. Next, the BSDE~\eqref{eq:markovianBSDE} is rewritten in forward form
\begin{equation*}
    Y_{t}=Y_0 - \int_{0}^{t} f\left(s, \XX_{s}, Y_{s}, Z_{s}\right) \D s + \int_{0}^{t} Z_{s} \D W_{s},
\qquad\text{for any }t \in [0,T],
\end{equation*}
and both processes are discretised according to the Euler-Maruyama scheme. To this end let $\pi\coloneqq \left\{0=t_{0}<t_{1}<\ldots<t_{N}=T\right\}$ be a partition of the time interval $[0,T]$ with modulus $|\pi|=\max_{i=\{0,1, \dots, N-1\}} \Deli $ and $\Deli \coloneqq  t_{i+1}-t_{i}$. Then the scheme is given by
\begin{equation}\label{eq:EulerMaruyama}
\left\{
\begin{array}{r@{\;}l}
    \XX_{t_{i+1}} &= \XX_{t_{i}} + \mu(t_{i}, \XX_{t_{i}})\Deli  + \Sigma(t_{i}, \XX_{t_{i}})\DelWi, \\
    Y_{t_{i+1}} &= Y_{t_{i}} - f\left(t_{i}, \XX_{t_{i}}, Y_{t_{i}}, Z_{t_i}\right)\Deli + Z_{t_i}\DelWi,
\end{array}
\right.
\end{equation}
where naturally $\DelWi\coloneqq  W_{t_{i+1}} - W_{t_i}$.
Then for all $i\in\{N-1,\dots,0\}$ we approximate~$u(t_i,\cdot)$ with $\Uf_i(\cdot;\Theta^i)\in\RWNN_{K}^\varrho$ 
and~$Z_{t_i}$ as
\begin{equation*}
\begin{array}{r@{\;}l@{\;}l}
u(t_i,\XX_{t_i}) = Y_{t_i}
& \approx \Uf_i(\XX_{t_i};\Theta^i)
& \coloneqq\Theta^i \Phi_K^i(\XX_{t_i}),\\
Z_{t_i} & \approx \Zf_{i}(\XX_{t_i}) 
& \coloneqq \Dxx \Uf_i(\XX_{t_i};\Theta^i)\cdot\Sigma(t_i,\XX_{t_i})
 = \Theta^i \Dxx \Phi_K^i(\XX_{t_i})\cdot\Sigma(t_i,\XX_{t_i}).
 \end{array}
\end{equation*}
Recall that the derivative~$\Zf_{i}(\XX_{t_i})$ is the approximate derivative from Definition~\ref{def:approximatediff}. The following formulation of the loss function $\ell$ using the approximate derivative is sensible by Lemma~\ref{lem:diffexpectationeq}: notice that for the optimal parameter $\Theta^{i+1, *}$ in step $(i+1)$, the optimal approximation $\widehat \Uf_{i+1}(\XX_{t_{i+1}})\coloneqq\Uf_{i+1}(\XX_{t_{i+1}}; \Theta^{i+1, *})$ 
does not depend on $\Theta^i$, hence under Assumption~\ref{ass:faffine} with $c=0$ the loss function at the $i$-th discretisation step reads
\begin{align*}
    \ell(\Theta^i) &\coloneqq \EE^\Phi\left[\left\| \widehat \Uf_{i+1}(\XX_{t_{i+1}}) - \left[\Uf_i(\XX_{t_i};\Theta^i) - f\Big(t_{i}, \XX_{t_{i}}, \Uf_i(\XX_{t_i};\Theta^i), \Zf_{i}(\XX_{t_i}; \Theta^i)\Big)\Deli + \Zf_{i}(\XX_{t_i}; \Theta^i)\DelWi\right] \right\|^2\right] \\
    &= \EE^\Phi\left[\left\| \widehat \Uf_{i+1}(\XX_{t_{i+1}}) - \left[(\Uf_i(\XX_{t_i};\Theta^i) - \left(a_{i}\Uf_i(\XX_{t_i};\Theta^i) + b_{i}\Zf_{i}(\XX_{t_i}; \Theta^i) + \widetilde{f}_{i}\right)\Deli + \Zf_{i}(\XX_{t_i}; \Theta^i)\DelWi\right] \right\|^2\right] \\
    &= \EE^\Phi\left[\left\| \widehat \Uf_{i+1}(\XX_{t_{i+1}}) + \widetilde{f}_{i}\Deli - \Theta^i\Big\{(1-a_{i}\Deli)\Phi_K^i(\XX_{t_i}) + \Dx \Phi_K^i(\XX_{t_i}) \Sigma_i \left(b_{i}\Deli  + \DelWi\right)\Big\} \right\|^2\right] \\
    &= \EE^\Phi\left[\left\| \Y^i - \Theta^i \X^i \right\|^2\right]
\end{align*}
where $p_{i} \coloneqq  p(t_i, \XX_{t_i})$ 
for $p \in \{a, b, \widetilde{f}, \Sigma\}$, and the expectation $\EE^\Phi$ is of course conditional on the realisation of the random basis $\Phi_K^i$, i.e., conditional on the random weights and biases of the RWNN. Furthermore, we used the notations
\[
\Y^i \coloneqq \widehat \Uf_{i+1}(\XX_{t_{i+1}}) + \widetilde{f}_{i}\Deli
    \qquad\text{and}\qquad
\X^i \coloneqq 
(1-a_{i}\Deli)\Phi_K(\XX_{t_i}) + \Dxx\Phi_K(\XX_{t_i})\cdot \Sigma_i \left(b_{i}\Deli  + \DelWi\right).
\]
The problem can now be solved via least squares from Section~\ref{sec:randomizedLS}, yielding the estimator
$$
\Theta^{i,*} = \EE^\Phi\left[\Y^i\X^{i\top}\right]
\EE^\Phi\left[\X^i\X^{i\top}\right]^{-1}.
$$

\subsection{Algorithm}
We now summarise the algorithmic procedure of our RWNN scheme. See how the algorithm resembles the Least-Square Monte-Carlo method of~\cite{Longstaff2001} after considering sample estimator version of RLS from Section~\ref{sec:randomizedLS}:

\begin{algorithm}[ht]
\caption{RWNN scheme}\label{alg:markovian}
\begin{algorithmic}
\STATE{\textit{Input}:
\begin{ALC@g}
\STATE $\pi=\left\{0=t_{0}<t_{1}<\ldots<t_{N}=T\right\}$ time grid
\end{ALC@g}
}
\STATE{\textit{Initialisation}: 
\begin{ALC@g}
\STATE $\Phi_K^i$ reservoirs with $K\in\NN$ hidden nodes with weights and biases distributed as $\Uu_{[-R,R]}$ with $R>0$ for all $i\in\{0,\dots,N-1\}$
\end{ALC@g}
}
\STATE{\textbf{do}: 
\begin{enumerate}[ ]
    \item Generate $n\in\NN$ paths of
    $\{\XX_{t_i}^{\pi,j}\}_{i=0}^{N}$ for $j\in\{1,\dots,n\}$ with the Euler-Maruyama scheme~\eqref{eq:EulerMaruyama}
    \item Set $\widehat \Uf_N(\xx) = g(\xx)$ for all $\xx\in\RR^d$
\end{enumerate}
}
\FOR{$i\in \{N-1,\dots,0\}$}
\STATE{Approximate $u(t_i, \cdot)$ with $\Uf(\cdot; \Theta^i)\in\RWNN^{\relu}_K$ based on reservoir $\Phi^i_K$}
\STATE{Evaluate the derivative of $\Uf(\cdot; \Theta^i)$ according to~\eqref{eq:RWNN1stderivative}}
\STATE{Solve the regression problem (possibly using the Ridge estimator, see Section~\ref{sec:randomizedLS}) \begin{align*}
\Theta^{i,*} &=\argmin_{\Theta^i} \ell(\Theta^i) = \argmin_{\Theta^i} \EE^{\Phi,n}\left[\left\| \Y^i - \Theta^i\X^i \right\|^2\right]
\end{align*} where \begin{align*}
    \Y^i &\coloneqq \Uf_i(\XX_{t_i};\Theta^i) + \widetilde f\Deli  \\
    \X^i &\coloneqq (1-a)\Phi_K^i(\XX_{t_i}) + (\nabla_x \Phi_K^i(\XX_{t_i})) \Sigma_i \left(b\Deli  + \DelWi\right).\end{align*} and $\EE^{\Phi,n}$ is evaluated over the empirical measure of $\{\XX_{t_i}^{\pi,j}\}_{i=0}^{N}$ for $j\in\{1,\dots,n\}$}
\STATE{Update $\widehat \Uf_i=\Uf_{i}\left(\cdot, \Theta^{i,*}\right)$}
\ENDFOR
\RETURN $\Uf = \{\Uf(\cdot;\Theta^{i,*})\}_{i=0}^{N}$.
\end{algorithmic}
\end{algorithm}

\begin{remark}
We discuss the choice of $R>0$ from Algorithm~\ref{alg:markovian} in different practical scenarios in Section~\ref{sec:RWNN_numerical_results}. We find that the scheme remains robust across different choices of support intervals as long as it aligns with the magnitude of the expected output.
\end{remark}

\newpage
\section{The non-Markovian case}\label{sec:non-Markovian_case}
We now consider a stochastic volatility model under a risk-neutral measure so that $\XX=\nobreak(X, V)$, where the dynamics of log-price process~$X$ are given by
\begin{equation}\label{eq:Xdynamics}
\D X_{s}^{t, x} = \left(r-\frac{V_{s}}{2}\right)\D s
 + \sqrt{V_{s}}\Big(\rho_1 \D W^1_{s} + \rho_2 \D W^2_{s}\Big), \qquad 0 \leq t \leq s \leq T,
\end{equation}
starting from 
$X_{t}^{t, x} =x\in\RR$,
with interest rate $r\in\RR$,
correlation $\rho_1 \in [-1,1]$,
and denote $\rho_2\coloneqq \sqrt{1-\rho_1^{2}}$,
and $W^1, W^2$ are two independent Brownian motions.
We allow for a general variance process process~$V$, satisfying the following:
\begin{assumption}\label{ass:volatilityProcess}
The process~$V$ has continuous trajectories, is non-negative almost surely, adapted to the natural filtration of~$W^1$ and $\EE\left[\int_{0}^{t} V_{s} \D s\right]$ is finite for all $t\geq 0$.
\end{assumption}
By no-arbitrage,
the fair price of a European option with payoff $h:\RR^+\rightarrow\RR^+$ reads
$$
u(t, x) \coloneqq \EE\left[\E^{-r(T-t)} h\left(\E^{X_{T}^{t, x}+r T}\right) \middle\vert \Ff_{t}\right], 
\quad \text{for all }(t, x) \in[0, T] \times \RR,
$$
subject to~\eqref{eq:Xdynamics}.
Since~$\XX$ is not Markovian, one cannot characterise the value function $u(t, x)$ via a deterministic PDE. 
Bayer, Qiu and Yao~\cite{Bayer2022} proved that~$u$ can be viewed as a random field which, together with another random field~$\psi$, satisfies the backward stochastic partial differential equation (BSPDE)
\begin{equation}\label{eq:BSPDEPricing}
-\D u(t,x) = \left[\frac{V_{t}}{2} \partial_x^{2} u(t, x)+\rho \sqrt{V_{t}} \partial_x \psi(t,x)-\frac{V_{t}}{2} \partial_x u(t,x)-r u(t,x)\right] \D t-\psi(t,x) \D W^1_{t},
\end{equation}
in a distributional sense 
for $(t, x) \in[0, T) \times \RR$, with boundary condition
${u(T, x)=h\left(\E^{x+r T}\right)}$ where the variance process $(V_{t})_{t \geq 0}$ is defined exogenously under Assumption~\ref{ass:volatilityProcess}. 
We in fact consider the slightly more general BSPDEs
\begin{equation}\label{eq:BSPDEGeneral}
\begin{array}{r@{\;}l}
-\D u(t,x) &= \displaystyle \bigg\{\frac{V_{t}}{2} \Dx^{2} u(t,x)+\rho \sqrt{V_{t}} \Dx \psi(t,x)-\frac{V_{t}}{2} \Dx u(t,x) \\
& \quad \displaystyle + f\left(t, \E^{x}, u(t,x), \rho_2\sqrt{V_{t}} \Dx u(t,x), \psi(t,x)+ \rho_1 \sqrt{V_{t}} \Dx u(t,x)\right)\bigg\} \D t \\
& \quad \displaystyle -\psi(t,x) \D W^1_{t}, \quad(t, x) \in[0, T) \times \RR, \\
u(T,x) &= \displaystyle g \left(\E^{x}\right), \quad x \in \RR.
\end{array}
\end{equation}
The following assumption on~$f$ and~$g$ (from~\cite{Bayer2022}) ensures well-posedness of the above BSPDE,
and we shall additionally require the existence of a weak-Sobolev solution (Assumption~\ref{ass:RWNNscheme} ) for the convergence analysis of our numerical scheme in Section~\ref{sec:convergence_analysis}. 

\begin{assumption}\label{ass:wellposednessBSPDE}
Let $g:\RR\rightarrow\RR$ and $f:[0,T]\times\RR^4\rightarrow\RR$
be such that
\begin{enumerate}[(i)]
    \item $g$ admits at most linear growth;
    \item $f$ is $L_f$-Lipschitz in all space arguments and there exists $L_0>0$ such that
    $$
    |f(t,x,0,0,0)| \leq L_f(1+|x|) 
        \qquad\text{and}\qquad
        |f(t,x,y,z,\widetilde{z})-f(t,x,y,0,0)| \leq L_0.
        $$\end{enumerate}
\end{assumption}
Note that~\eqref{eq:BSPDEPricing} is just a particular case of the general BSPDE~\eqref{eq:BSPDEGeneral} for the choice $f(t, x, y, z, \widetilde{z}) \equiv -r y$ and $g(\E^{x})\equiv h(\E^{x+r T})$. 
Again, this general form is shown to be well posed in the distributional sense under Assumption~\ref{ass:wellposednessBSPDE} (borrowed from~\cite{Bayer2022}). 
By~\cite{Briand2003} the corresponding BSDE is then,
for $0 \leq t \leq s < T$,
\begin{equation}\label{eq:BSDEGeneral}
\left\{\begin{aligned}
-\D Y_{s}^{t, x} &=f\left(s, \E^{X_{s}^{t, x}}, Y_{s}^{t, x}, {Z_{s}^{1}}^{t, x}, {Z_{s}^{2}}^{t, x}\right) \D s - {Z_{s}^{1}}^{t, x} \D W^1_{s} - {Z_{s}^{2}}^{t, x} \D W^2_{s},\\
Y_{T}^{t, x} &=g\left(\E^{X_{T}^{t, x}}\right),
\end{aligned}\right.
\end{equation}
where $(Y_{s}^{t, x}, {Z_{s}^{1}}^{t, x}, {Z_{s}^{2}}^{t, x})$ is defined as the solution to~\eqref{eq:BSDEGeneral} in the weak sense. 

\subsection{Random neural network scheme}
Let the quadruple $\left(X_{s}, Y_{s}, Z^1_{s}, Z^2_{s}\right)$ be the solution to the forward FBSDE
\begin{equation}\label{eq:logprocess}
\left\{\begin{aligned}
-\D Y_{s} &=f\left(s, \E^{X_{s}}, Y_{s}, Z^1_{s}, Z^2_{s}\right) \D s- Z^1_{s} \D W^1_{s}-Z^2_{s} \D W^2_{s},\\
\D X_{s} &=-\frac{V_{s}}{2} \D s+\sqrt{V_{s}}\left(\rho_1 \D W^1_{s} + \rho_2\D W^2_{s}\right),\\
V_{s} &= \xi_{s} \Ee\left(\eta \widehat{W}_{s}\right), \quad \text { with } \quad \widehat{W}_{s}=\int_{0}^{s} \mathcal{K}(s, r) \D W^1_{r},
\end{aligned}\right.
\end{equation}
for $s \in[0, T)$, with terminal condition
$Y_{T} = g\left(\E^{X_{T}}\right)$, initial condition
$X_{0} = x$ and~$\mathcal{K}$ a locally square-integrable kernel and $\xi_s\coloneqq\xi(s)>0$ is the forward variance curve. 
For notational convenience below, 
we use $\rho_2 \coloneqq  \sqrt{1-\rho_1^2}$, with 
$\rho_1 \in [-1,1]$.
Here~$\Ee(\cdot)$ denotes the Wick stochastic exponential and is defined as $\Ee(\zeta)\coloneqq \exp\left\{\zeta-\frac12\EE[|\zeta|^2]\right\}$ for a centered Gaussian variable $\zeta$. 
Then by~\cite[Theorem~2.4]{Bayer2022},
\begin{align}
Y_{t} &= u\left(t, X_{t}\right),  & \text{ for } t\in[0,T], \\
Z^1_{t} &= \psi\left(t, X_{t}\right) + 
\rho_1 \sqrt{V_{t}} \Dx u\left(t, X_{t}\right), & \textup{ for } t\in[0,T), \\
Z^2_{t} &= \rho_2\sqrt{V_{t}} \Dx u\left(t, X_{t}\right), & \textup{ for } t\in[0,T),
\end{align}
where $(u, \psi)$ is the unique weak solution to~\eqref{eq:BSPDEGeneral}. Accordingly, the forward equation reads
\begin{equation*}
Y_t = Y_0-\int_{0}^{t} f\left(s, \E^{X_{s}}, Y_s, Z_s^1, Z_s^2\right) \D s +\int_{0}^{t}Z_s^1 \D W^1_{s}+\int_{0}^{t} Z_s^2 \D W^2_{s},
\qquad\text{ for }t \in[0, T].
\end{equation*}
By simulating $(W^1, W^2, V)$, 
the forward process~$X$ may be approximated by an Euler scheme--with the same notations as in the Markovian case--and the forward representation above yields the approximation
$$
u\left(t_{i+1}, X_{t_{i+1}}\right) \approx u(t_{i}, X_{t_i}) - f\left(t_i, \E^{X_{t_{i}}}, u\left({t_{i}}, X_{t_{i}}\right), Z^1_{t_i}, Z^2_{t_i}\right) \Deli  + Z^1_{t_i} \Delta^{W^1}_{i} + Z^2_{t_i} \Delta^{W^2}_{i},
$$
with
$$
Z^1_{t_i}  = \rho_1 \sqrt{V_{t_{i}}} \Dx u\left({t_{i}}, X_{t_{i}}\right) + \psi\left({t_{i}}, X_{t_{i}}\right)
\qquad\text{and}\qquad
Z^2_{t_i}  = \rho_2\sqrt{V_{t_{i}}} \Dx u\left({t_{i}}, X_{t_{i}}\right).
$$
By Lemma~\ref{lem:sol_as_RWNN} we can, for each time step $i\in\{0, \dots, N-1\}$, approximate the solutions~$u(t_i,\cdot)$ and~$\psi(t_i,\cdot)$ by two separate networks~$\Uf_i$ and~$\Psi_i$ in $\RWNN_K^{\relu}$: 
\begin{alignat*}{3}
    Y_{t_i} &\approx \Uf_i(X_{t_i}; \Theta^i) && = \Theta^i \Phi^{\Theta,i}_K(X_{t_i}), \\
    Z^1_{t_i} &\approx \Zf^1_i(X_{t_i}; \Theta^i, \Xi^i) && = \Theta^i \left(\Dx \Phi_K^{\Theta,i}(X_{t_i})\right) \rho_1\sqrt{V_{t_i}} + \Xi^i\Phi^{\Xi,i}_K(X_{t_i}), \\
    Z^2_{t_i} &\approx \Zf^2_i(X_{t_i}; \Theta^i, \Xi^i) && = \Theta^i \left(\Dx \Phi_K^{\Theta,i}(X_{t_i})\right) \rho_2\sqrt{V_{t_i}}.
\end{alignat*}
Here $\Phi_K^\Xi$ and $\Phi_K^\Theta$ are realisations of random bases (reservoirs) of the RWNNs with respective parameters~$\Xi$ and~$\Theta$. 
The next part relies on Assumption~\ref{ass:faffine}, namely
$$
f\left(t_i, \E^{X_{t_{i}}},  Y_{t_i},  Z^1_{t_i},  Z^2_{t_i}\right) = a(t_i,X_{t_i}) Y_{t_i} + b(t_i,X_{t_i}) Z^1_{t_i} + c(t_i,X_{t_i}) Z^2_{t_i} + \widetilde{f}(t_i,X_{t_i}),
$$
for some functions $a,b,c,\widetilde{f}$ mapping to~$\RR$, so that, as in the Markovian case, the minimisation of the expected quadratic loss at every time step $i\in\{N-1,\dots,0\}$ reads
\begin{align*}
    &\ell(\Theta^i, \Xi^i) \\ &\coloneqq \EE^\Phi\Bigg[\bigg|\widehat\Uf_{i+1}(X_{t_{i+1}})-\bigg\{\Uf_i(X_{t_i}; \Theta^i)-f\Big(t_{i}, X_{t_{i}}, \Uf_i(X_{t_i}; \Theta^i), \Zf^1_i(X_{t_i}; \Theta^i, \Xi^i), \Zf^2_i(X_{t_i}; \Theta^i, \Xi^i)\Big)\Deli \\
    & \hspace{3cm} + \sum_{k=1}^2\widehat \Zf^k_i(X_{t_i}; \Theta^i, \Xi^i)\Delta^{W^k}_{i}\bigg\}\bigg|^2\Bigg] \\
    &= \EE^\Phi\Bigg[\bigg|\widehat\Uf_{i+1}(X_{t_{i+1}})-\bigg\{\Uf_i(X_{t_i}; \Theta^i)-\left(a\Uf_i(X_{t_i}; \Theta^i) + b\Zf^1_i(X_{t_i}; \Theta^i, \Xi^i) + c\Zf^2_i(X_{t_i}; \Theta^i, \Xi^i) + \widetilde{f}\right)\Deli \\
    & \hspace{3cm} + \sum_{k=1}^2\Zf^k_i(X_{t_i}; \Theta^i, \Xi^i)\Delta^{W^k}_{i}\bigg\}\bigg|^2\Bigg] \\
    &= \EE^\Phi\Bigg[\bigg|\widehat\Uf_{i+1}(X_{t_{i+1}}) + \widetilde{f}\Deli  - \bigg\{ \Xi^i\Phi_K^{\Xi,i}(X_{t_i})\left(\Delta^{W^1}_{i}-b\Deli \right) \\
    & \hspace{3cm} + \Theta^i\left((1-a\Deli )\Phi_K^{\Theta,i}(X_{t_i}) + \Dx\Phi_K^{\Theta,i}(X_{t_i})\sqrt{V_{t_i}}\left(\Delta^{B}_{i} - (b\rho_1 + c\rho_2)\Deli \right) \right) \bigg\}\bigg|^2\Bigg] \\
    &= \EE^\Phi\left[\left|\Y^i - \Xi^i \X_1^i -
    \Theta^i\X_2^i \right|^2\right],
\end{align*}
with $\Delta^{B}_{i}=(\rho_1\Delta^{W^1}_{i} + \rho_2\Delta^{W^2}_{i})$ and where $\widehat\Uf_{i+1}(X_{t_{i+1}})\coloneqq \Uf_{i+1}(X_{t_{i+1}}; \Theta^{i+1,*})$ was set in the previous time step and is now constant (without dependence on~$\Theta^i$). We defined
\begin{equation}\label{eq:SystemNonMark}
\left\{
\begin{array}{r@{\;}l}
\Y^i & \coloneqq  \displaystyle \widehat\Uf_{i+1}(X_{t_{i+1}}) + \widetilde{f}\Deli , \\
     \X_1^i &\coloneqq  \displaystyle \Phi_K^\Xi(X_{t_i})\left(\Delta^{W^1}_{i}-b\Deli \right), \\ 
    \X_2^i &\coloneqq  \displaystyle (1-a\Deli )\Phi_K^\Theta(X_{t_i}) + \Dx\Phi_K^\Theta(X_{t_i})
    \sqrt{V_{t_i}}\left(\Delta^{B}_{i}-(b\rho_1 + c\rho_2)\Deli  \right).
\end{array}
\right.
\end{equation}
In matrix form, this yields
$
\ell(\Theta^i, \Xi^i) = \EE^\Phi[\|\Y^i - \betam^i\X^i\|^2]$,
with 
$\betam^i=\begin{bmatrix}\Xi^i, \Theta^i\end{bmatrix}$
and
$\X^i= \begin{bmatrix}\X_1^i, \X_2^i\end{bmatrix}^{\top}$,
for which the RLS from Section~\ref{sec:randomizedLS}
yields the solution
\begin{equation}\label{eq:betaestimator}
    \betam^i = \EE^\Phi\left[
    \left[\Y^i\X_1^{i\top} \quad \Y^i\X_2^{i\top}\right] \right]
    \EE^\Phi\left[
    \left[\begin{array}{cc}
         \X_1^i\X_1^{i\top} & \X_1^i\X_2^{i\top} \\
         \X_2^i\X_1^{i\top} & \X_2^i\X_2^{i\top}
    \end{array}\right]\right]^{-1}.
\end{equation}


\subsection{Algorithm}
We summarise the steps of the algorithm  below:
\begin{algorithm}[H]\caption{RWNN non-Markovian scheme}\label{alg:nonmarkovian}
\begin{algorithmic}
\STATE{\textit{Inputs}:
time grid $\pi=\left\{0=t_{0}<t_{1}<\ldots<t_{N}=T\right\}$;
number~$K$ of hidden nodes; $R>0$;
}
\STATE{\textit{Initialisation}: 
\begin{ALC@g}
\STATE $\left(\{\Phi_K^{\Theta,i}\}^{N-1}_{i=0}, \{\Phi_K^{\Xi,i}\}^{N-1}_{i=0}\right)$ reservoirs with weights and biases distributed as $\Uu_{[-R,R]}$;
\end{ALC@g}
}
\STATE{\textbf{do}: 
\begin{enumerate}[ ]
    \item Generate $n$ paths of
    $(\{X_{t_i}^{\pi,j}\}_{i=0}^{N}, \{V_{t_i}^{\pi,j}\}_{i=0}^{N})$ for $j\in\{1,\dots,n\}$ with Euler-Maruyama
    \item Set $\widehat{Y}_{N}(x)=g(x)$ for all $x\in\RR$
\end{enumerate}
}
\FOR{$i\in \{N-1,\dots,0\}$}
\STATE{Approximate $u(t_i, \cdot)$ with $\Uf(\cdot; \Theta^i)\in\RWNN^{\relu}_K$ based on reservoir $\Phi^{\Theta,i}_K$}
\STATE{Approximate $\psi(t_i, \cdot)$ with $\Psi(\cdot; \Xi^i)\in\RWNN^{\relu}_K$ based on reservoir $\Phi^{\Xi,i}_K$}
\STATE{Evaluate derivatives of $\left(\Uf(\cdot; \Theta^i), \Psi(\cdot;\Xi^i)\right)$ according to~\eqref{eq:RWNN1stderivative}}
\STATE{Solve the regression problem (possibly using the Ridge estimator from Section~\ref{sec:randomizedLS}) 
$$
\betam^{i,*} = \argmin_{\betam^i} \ell(\betam^i) = \argmin_{\betam^i} \EE^{\Phi,n}\left[\left\|\Y^i - \betam^i\X^i \right\|^2\right]
$$
with $\betam^i=\begin{bmatrix}\Xi^i,\Theta^i\end{bmatrix}$, 
$\X^i= \begin{bmatrix}\X_1^i, \X_2^i\end{bmatrix}^{\top}$,
where $\Y^i, \X_1^i, \X_2^i$ are given in~\eqref{eq:SystemNonMark},
and $\EE^{\Phi,n}$ is computed with the empirical measure of $\left(\{X_{t_i}^{\pi,j}\}_{i=0}^{N}, \{V_{t_i}^{\pi,j}\}_{i=0}^{N}\right)_{j\in\{1,\dots,n\}}$;}
\STATE{Update $\widehat\Uf_{i}(X_{t_{i}})=\Uf_{i}\left(X_{t_i}, \Theta^{i,*}\right)$}
\ENDFOR
\RETURN $\{\Uf_i(\cdot;\Theta^{i,*})\}_{i=0}^{N}$.
\end{algorithmic}
\end{algorithm}

\section{Convergence analysis}\label{sec:convergence_analysis}
In this section, whenever there is any ambiguity, we use the notation~$X^\pi$ to denote the discretised version of the solution process of~\eqref{eq:logprocess} over the partition $\pi=\left\{0=t_{0}<t_{1}<\ldots<t_{N}=T\right\}$ of the interval $[0, T]$,
with modulus $|\pi|=\max_{i\in\{0,1,\dots, N-1\}} \Deli $ with $\Deli =t_{i+1}-t_{i}$. 
As mentioned just before Assumption~\ref{ass:faffine},
the linearity of~$f$ assumed before was only required to cast the optimisation in Algorithm~\ref{alg:nonmarkovian} into a regression problem.
In the forthcoming convergence analysis, this does not play any role, and we therefore allow for a more general function~$f$.

\begin{assumption}\label{ass:RWNNscheme}\
\begin{enumerate}[(i)]
    \item There exists a unique weak solution to the BSPDE system~\eqref{eq:BSPDEGeneral} with $u, \psi \in \Ww^{3,2}$;
    \item There is an increasing continuous function $\omega:\RR^+ \rightarrow\RR^+$ with $\omega(0)=0$ such that
    \[
    \EE\left[\int_{t_1}^{t_2}V_s\D s\right] + \EE\left[\left|\int_{t_1}^{t_2}V_s\D s\right|^2\right]\leq \omega(|t_2-t_1|),
    \quad \text{for any }0\leq t_1 \leq t_2 \leq T;
    \]
    \item There exists $L_f>0$ such that,
    for all $(t,x,z^1,z^2)$ and $(\widetilde{t},\widetilde{x},\widetilde{z}^1,\widetilde{z}^2)$,
    \begin{align}
    & \left|f\left(t, \E^x, y, z^1, z^2\right)
    -f\left(\widetilde{t}, \E^{\widetilde{x}},\widetilde{y},\widetilde{z}^1,\widetilde{z}^2\right)\right| \\
    & \hspace*{2cm} \leq L_f\left\{\omega(|t-\widetilde{t}|)^{\half} + |x-\widetilde{x}| + |y-\widetilde y| + |z^1-\widetilde{z}^1| + |z^2-\widetilde{z}^2|\right\}.
    \end{align}
\end{enumerate}
\end{assumption}

\begin{remark}
Assumption~\ref{ass:RWNNscheme}(iii) may look unusual but appears as we are interested in evaluating options on the stock price, which is the exponential of the log stock price. 
This condition (the same as in~\cite{Bayer2022})
allows us to control the $L^2$ norm of the drift term in~\eqref{eq:logprocess} (see also~\cite[Assumption~2.12(iii)]{bonesini2023rough}) and thus of the backward process~$Y$ therein.
\end{remark}

\begin{assumption}\label{ass:discretisationbound}
Given the partition~$\pi$,
$ \sup_{i\in\{0,\dots, N-1\}}\EE\left[\left|V^\pi_{t_i}\right| \right]$ is finite.
\end{assumption}
{
Classical estimates 
(see Lemma~\ref{lem:X_growth_ineq} for full details and proof) yield}
\begin{equation}
    \EE\left[\sup _{0 \leq t \leq T}\left|X_{t}\right|^{2}\right] \leq C\left(1+\left|x_{0}\right|^{2}\right),\label{eq:solution_moment_bound}
\end{equation}
as well as (Lemma~\ref{lem:X_partition_ineq})
\begin{equation}
    \max_{i\in\{0,\dots,N-1\}} \EE\left[\left|X_{t_{i+1}}-X^\pi_{t_{i+1}}\right|^{2}+\sup _{t \in\left[t_{i}, t_{i+1}\right]}\left|X_{t}-X^\pi_{t_{i}}\right|^{2}\right] \leq C \omega(|\pi|),\label{eq:fwd_process_estimation}
\end{equation}
for some $C>0$ independent of $|\pi|$,
and we furthermore have~\cite{Briand2003}
\begin{equation}\label{eq:L2int_f}
\EE\left[\int_{0}^{T}\left|f\left(t, \E^{X_{t}}, Y_{t}, Z^1_{t}, Z^2_{t}\right)\right|^{2} \D t\right]<\infty,
\end{equation}
as well as the standard $L^{2}$-regularity result on $Y$:
\begin{equation}\label{eq:L2regularity}
\max _{i\in\{0, \dots, N-1\}} \EE\left[\sup _{t \in\left[t_{i}, t_{i+1}\right]}\left|Y_{t}-Y^\pi_{t_{i}}\right|^{2}\right] = \Oo(|\pi|).
\end{equation}
For $k\in\{1,2\}$, define the errors
\begin{equation}\label{eq:Z_error}
\eps^{Z^k}(\pi) \coloneqq \EE\left[\sum_{i=0}^{N-1} \int_{t_{i}}^{t_{i+1}}\left|Z^k_{t}-\overline{Z}^k_{t_{i}}\right|^{2} \D t\right], \quad \text { with } \quad \overline{Z}^k_{t_{i}} \coloneqq \frac{1}{\Deli} \EE_{i}\left[\int_{t_{i}}^{t_{i+1}} Z^k_{t} \D t\right],
\end{equation}
which represent measures of the total variance of the processes~$Z^k$ along the partition~$\pi$ and 
where $\EE_{i}$ denotes the conditional expectation given $\Ff_{t_{i}}$. 
We furthermore define the auxiliary processes, for $i\in\{0, \ldots, {N-1}\}$,
\begin{align}\label{eq:auxilaryProcesses}
\begin{split}
\widehat{\Vv}_{t_{i}} &\coloneqq \EE_{i}\left[\widehat{\Uf}_{i+1}\left(X^\pi_{t_{i+1}}\right)\right]+f\left(t_i, \E^{X^\pi_{t_{i}}}, \widehat{\Vv}_{t_{i}}, \ZowO_{t_{i}}, \ZowT_{t_{i}}\right) \Deli,  \\
\ZowO_{t_{i}} &\coloneqq \widehat\Psi_i(X^\pi_{t_i}) + \frac{1}{\Deli} \EE_{i}\left[\widehat{\Uf}_{i+1}\left(X^\pi_{t_{i+1}}\right) \Delta^{W^1}_{i}\right], \\
\ZowT_{t_{i}} &\coloneqq \frac{1}{\Deli} \EE_{i}\left[\widehat{\Uf}_{i+1}\left(X^\pi_{t_{i+1}}\right) \Delta^{W^2}_{i}\right],
\end{split}
\end{align}
with $\widehat\Uf_i(\xx)\coloneqq\Uf_i(x;\Theta^{i,*})$ and $ \widehat\Psi_i(x)\coloneqq\Psi(x;\Xi^{i,*})$ as before.
Observe that~$\widehat{\Psi}_{i+1}$
and~$\widehat{\Uf}_{i+1}$ do not depend on~$\Theta^i$ 
because the parameters were fixed at $(i+1)$ time step and are held constant at step~$i$ 
(see Algorithm~\ref{alg:nonmarkovian}). Next, notice that $\widehat{\Vv}$ is well defined by a fixed-point argument
since~$f$ is Lipschitz. 
By Assumption~\ref{ass:RWNNscheme}(i), 
there exist $\widehat v_i, \zowk_{t_{i}}$ for which
\begin{equation}\label{eq:existenceAuxilary}
    \widehat\Vv_{t_i} = \widehat v_i(X^\pi_{t_i})\qquad \text{and} \qquad \Zowk_{t_{i}} = \zowk_{t_{i}}(X^\pi_{t_i}) \quad \text{for } i\in\{0,\dots,N-1\}, k\in\{1,2\}.
\end{equation}
By the martingale representation theorem, one has integrable processes~$\ZowO, \ZowT$ such that
\begin{equation}\label{eq:martingaleRep}
    \widehat{\Uf}_{i+1}\left(X^\pi_{t_{i+1}}\right)=\widehat{\Vv}_{t_{i}}-f\left(t_{i},\E^{X^\pi_{t_{i}}}, \widehat{\Vv}_{t_{i}}, \ZowO_{t_{i}}, \ZowT_{t_{i}}\right) \Deli +\int_{t_{i}}^{t_{i+1}}{\widehat{Z}}_{t}^1\D W^1_{t}+\int_{t_{i}}^{t_{i+1}} {\widehat{Z}}_{t}^2 \D W^2_{t},
\end{equation}
since $\Zowk_{t}$ are $\Ff_t^W$-adapted as asserted by the martingale representation theorem. 
From here, It\^o's isometry yields
\begin{align*}
    \ZowO_{t_{i}} &= \widehat\Psi_i(X^\pi_{t_i}) +  \frac1{\Deli}\EE_i\left[\widehat\Uf_{i+1}(X^\pi_{t_{i+1}})\Delta^{W^k}_{i} \right] \\
    &= \frac1{\Deli}\int_{t_i}^{t_{i+1}}\widehat\Psi_i(X^\pi_{t_i})\D t +  \frac1{\Deli}\EE_i\left[\left(\widehat\Vv_{t_i} +\int_{t_{i}}^{t_{i+1}} \widehat{Z}_{t}^1 \D W^1_{t}+\int_{t_{i}}^{t_{i+1}} \widehat{Z}_{t}^2 \D W^2_{t} \right)\int_{t_i}^{t_{i+1}} \D W^1_t \right]  \\
    &= \frac1{\Deli}\EE_i\left[\int_{t_i}^{t_{i+1}}\left( \widehat\Psi_i(X^\pi_{t_i}) + \widehat{Z}_{t}^1 \right)\D t \right],
\end{align*}
and similarly, 
$$
\ZowT_{t_{i}} = \frac1{\Deli}\EE_i\left[\int_{t_i}^{t_{i+1}} \widehat{Z}_{t}^2 \D t \right].
$$
We consider convergence in terms of the following error:
\[
\begin{aligned}
\mathscr{E}\left(\widehat{\Uf}, \widehat\Psi\right) \coloneqq &\max _{i\in\{0, \dots, N-1\}} \EE^\Phi\left[\left|Y_{t_i}-\widehat{\Uf}_i\left(X^\pi_{t_i}\right)\right|^2\right]+\EE^\Phi\left[\sum_{i=0}^{N-1} \int_{t_i}^{t_{i+1}}\sum_{k=1}^2\left|Z_t^k-\widehat{\mathcal{Z}}^k_i\left(X^\pi_{t_i}\right)\right|^2 \D t\right],
\end{aligned}
\]
with $\widehat{\Zz}^1_{i}, \widehat{\Zz}^2_{i}$
introduced before Lemma~\ref{lem:approxError}. 
We now state the main convergence result:
\begin{theorem}\label{thm:convergence}
Under Assumptions~\ref{ass:volatilityProcess}-\ref{ass:wellposednessBSPDE}-\ref{ass:RWNNscheme},
there exists $C>0$ such that
\[
\mathscr{E}\left(\widehat{\Uf}, \widehat\Psi\right) \leq C\left\{\omega(|\pi|) + \EE\left[\left|g(X_T)-g(X^{\pi}_{T})\right|^2\right]+\sum_{k=1}^2\eps^{Z^k}(\pi)+\frac{C^{*}}{K}N+ M|\pi|^2\right\},
\]
with $C^*, M>0$ given in Lemma~\ref{lem:step_RWWN_bound}
and the errors~$\eps^{Z^k}(\pi)$ defined in~\eqref{eq:Z_error}.
\end{theorem}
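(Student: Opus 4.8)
The plan is to adapt the standard one-step error propagation analysis for deep BSDE schemes (as in Hur\'e--Pham--Warin and Bayer--Qiu--Yao) to our randomised setting, the key novelty being that the per-step projection error onto the reservoir is controlled by Lemma~\ref{lem:step_RWWN_bound} rather than by a universal approximation argument. First I would fix a time step $i$ and write the exact BSDE~\eqref{eq:logprocess} over $[t_i,t_{i+1}]$ against the auxiliary system~\eqref{eq:auxilaryProcesses}--\eqref{eq:martingaleRep}, which gives a clean comparison between $(Y_{t_i},Z^k_{t_i})$ and $(\widehat\Vv_{t_i},\Zow^k_{t_i})$: the latter are the \emph{best} one-step predictors before the reservoir projection is applied. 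Taking conditional expectations, using It\^o's isometry and the Lipschitz bound on $f$ from Assumption~\ref{ass:RWNNscheme}(iii), together with the Young inequality $|a+b|^2\le(1+\gamma|\pi|)|a|^2+(1+1/(\gamma|\pi|))|b|^2$ for a suitable $\gamma$, one obtains a recursion of the form
\[
\EE^\Phi\!\left[|Y_{t_i}-\widehat\Vv_{t_i}|^2\right]
\le (1+C|\pi|)\,\EE^\Phi\!\left[|Y_{t_{i+1}}-\widehat\Uf_{i+1}(X^\pi_{t_{i+1}})|^2\right]
+ C|\pi|\,\bigl(\text{local }Z\text{-regularity}\bigr) + C|\pi|^2,
\]
where the $Z$-regularity term is exactly $\eps^{Z^k}(\pi)$ up to constants after summation, and the $|\pi|^2$ term collects the Euler and $L^2$-regularity contributions from~\eqref{eq:fwd_process_estimation}--\eqref{eq:L2regularity}.

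\textbf{Incorporating the reservoir error.} The second ingredient is to pass from $(\widehat\Vv_{t_i},\Zow^k_{t_i})$ to the actual network outputs $(\widehat\Uf_i,\widehat\Zz^k_i)$. Since $\widehat\Uf_i$ is the RLS minimiser over the reservoir $\Phi_K^{\Theta,i}$, it is by construction the $L^2(\PP_{X^\pi_{t_i}})$-orthogonal projection (or Ridge-regularised near-projection) of the target onto the span of the reservoir; hence the incremental error $\EE^\Phi[|\widehat\Vv_{t_i}-\widehat\Uf_i(X^\pi_{t_i})|^2]$ plus the analogous derivative term is bounded by the reservoir approximation error for $\widehat v_i$ and $\zow^k_i$, which live in $\Ww^{3,2}$ over a compact $\Qq$ by Assumption~\ref{ass:RWNNscheme}(i). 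This is precisely where Lemma~\ref{lem:step_RWWN_bound} enters, delivering a bound of order $C^*/K$ per step (the constant $C^*$ absorbing Sobolev norms of $\widehat v_i,\zow^k_i$, themselves controlled uniformly in $i$ by well-posedness estimates), with an additional $M|\pi|^2$ lower-order contribution. Combining with the recursion and applying a discrete Gr\"onwall argument over the $N$ steps turns the per-step $C^*/K$ into $(C^*/K)N$ and the constant $(1+C|\pi|)^N\le e^{CT}$ into the overall prefactor $C$.

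\textbf{Assembling the bound.} The terminal term is handled separately: at $i=N$ we have $\widehat\Uf_N=g$ and $X^\pi$ differs from $X$ only at time $T$, contributing $\EE[|g(X_T)-g(X^\pi_T)|^2]$. After the Gr\"onwall step one controls $\max_i\EE^\Phi[|Y_{t_i}-\widehat\Vv_{t_i}|^2]$; converting this to $\max_i\EE^\Phi[|Y_{t_i}-\widehat\Uf_i(X^\pi_{t_i})|^2]$ costs one more reservoir-error term, already of the stated order. The $Z$-part of $\mathscr E$ is obtained by returning to the It\^o-isometry identity, which expresses $\sum_i\int_{t_i}^{t_{i+1}}\sum_k|Z^k_t-\widehat\Zz^k_i(X^\pi_{t_i})|^2\D t$ in terms of the already-bounded $Y$-errors, the $\eps^{Z^k}(\pi)$ terms, and the reservoir errors for the derivative approximations (using~\eqref{eq:RWNN1stderivative} and Corollary~\ref{coro:diffexpectationeq2} to justify working with approximate derivatives under $\EE^\Phi$). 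Collecting all five groups of terms gives the claimed inequality.

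\textbf{Main obstacle.} I expect the delicate point to be the derivative/$Z^1,Z^2$ estimates: the network approximation of $\nabla_x u$ is not the derivative of the best $L^2$-approximation of $u$, so one cannot simply differentiate the projection bound. One must instead invoke Lemma~\ref{lem:step_RWWN_bound}'s \emph{simultaneous} control of the network and its derivative in the relevant Sobolev norm, and be careful that the ReLu-RWNN derivative is only an approximate (weak) derivative --- here Lemma~\ref{lem:diffexpectationeq} and Corollary~\ref{coro:diffexpectationeq2} are essential to ensure the expectations are unaffected by the measure-zero non-differentiability set. A secondary technical nuisance is keeping all constants uniform in $i$ and in $|\pi|$: this requires the uniform-in-$i$ bounds on $\|\widehat v_i\|_{\Ww^{3,2}(\Qq)}$ and $\|\zow^k_i\|_{\Ww^{3,2}(\Qq)}$, which should follow from the well-posedness estimates of Assumption~\ref{ass:wellposednessBSPDE} together with the stability of the one-step maps, but needs to be stated carefully so that $C^*$ and $M$ in Lemma~\ref{lem:step_RWWN_bound} genuinely do not blow up as $N\to\infty$.
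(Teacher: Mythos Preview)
Your proposal is correct and follows essentially the same three-part structure as the paper's proof: a one-step recursion bounding $\EE^\Phi[|Y_{t_i}-\widehat\Vv_{t_i}|^2]$ in terms of $\EE^\Phi[|Y_{t_{i+1}}-\widehat\Uf_{i+1}|^2]$ via the BSDE, Young's inequality and the Lipschitz property of~$f$; insertion of the reservoir error $\EE^\Phi[|\widehat\Vv_{t_i}-\widehat\Uf_i|^2]$ via Lemma~\ref{lem:step_RWWN_bound} followed by discrete Gr\"onwall; and a separate $Z$-component estimate using the $L^2$-projection identity and It\^o isometry. Your identification of the delicate point---that the derivative error must be controlled simultaneously with the value error through Lemma~\ref{lem:step_RWWN_bound} rather than by differentiating a projection bound---is exactly right and is where Lemma~\ref{lem:approxError} enters the paper's argument.
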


The following follows from~\eqref{eq:convergence_part2_result}, established in Part~II of the proof of Theorem~\ref{thm:convergence}:

\begin{corollary}\label{cor:convergence}
    Under Assumptions~\ref{ass:volatilityProcess}-\ref{ass:wellposednessBSPDE}-\ref{ass:RWNNscheme},
    there exists $C>0$ such that
    \begin{align}
    \max _{i\in\{0, \dots, N-1\}} & \EE^\Phi\left[\left|Y_{t_i}-\widehat{\Uf}_i\left(X^\pi_{t_i}\right)\right|^2\right] \leq \\ &C\left\{\omega(|\pi|) + \EE\left[\left|g(X_T)-g(X^{\pi}_{T})\right|^2\right]+\sum_{k=1}^2\eps^{Z^k}(\pi)+\frac{C^{*}}{K}N+ M|\pi|^2\right\},
    \end{align}
    with $C^*, M>0$ given in Lemma~\ref{lem:step_RWWN_bound}.
\end{corollary}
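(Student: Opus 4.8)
The plan is to follow the classical one-step-error analysis of discrete BSDE schemes (as in~\cite{Hure2020, Briand2003}), but carefully tracking the additional error coming from the random-network approximation. I would split the argument into two parts, mirroring the structure hinted at by the reference to ``Part~II of the proof'' in Corollary~\ref{cor:convergence}.

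\textbf{Part I: the $Z$-component and the auxiliary processes.} First I would work with the auxiliary processes $\widehat{\Vv}_{t_i}, \ZowO_{t_i}, \ZowT_{t_i}$ defined in~\eqref{eq:auxilaryProcesses}, which are the ``ideal'' one-step conditional-expectation targets. Using the martingale representation~\eqref{eq:martingaleRep} together with It\^o's isometry (as already carried out in the excerpt just before the theorem), one obtains $L^2$ bounds comparing $\Zowk_{t_i}$ to the $L^2$-projections $\overline{Z}^k_{t_i}$ of the true $Z^k$; the discrepancy is controlled by $\eps^{Z^k}(\pi)$, the regularity errors~\eqref{eq:Z_error}, together with the terminal error $\EE[|g(X_T)-g(X^\pi_T)|^2]$ propagated backwards. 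Standard discrete Gronwall, using the Lipschitz property of $f$ from Assumption~\ref{ass:RWNNscheme}(iii) and the moment bounds~\eqref{eq:fwd_process_estimation}--\eqref{eq:L2regularity}, converts the per-step estimate into the global bound $\omega(|\pi|)+\EE[|g(X_T)-g(X^\pi_T)|^2]+\sum_k\eps^{Z^k}(\pi)$ for the part of the error not due to the network.

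\textbf{Part II: the random-network approximation error.} Next I would insert the RWNN approximations $\widehat{\Uf}_i,\widehat\Psi_i$ and compare $\widehat{\Uf}_i(X^\pi_{t_i})$ and $\widehat{\mathcal{Z}}^k_i(X^\pi_{t_i})$ to the auxiliary processes $\widehat{\Vv}_{t_i},\Zowk_{t_i}$. Here the key inputs are Lemma~\ref{lem:sol_as_RWNN} (existence of RWNN representations, invoking $u,\psi\in\Ww^{3,2}$ from Assumption~\ref{ass:RWNNscheme}(i)), the RWNN approximation-rate estimate---this is where the $\tfrac{C^*}{K}$ term, i.e.\ the $O(1/\sqrt K)$-type bound on a single layer from~\cite{gonon2023approximation}, enters---and the derivative bounds of Lemma~\ref{lem:derivativeBounds}/\ref{lem:step_RWWN_bound}, which also supply the constants $C^*$ and $M$ and the extra $M|\pi|^2$ term coming from the Euler discretisation of the drift inside the loss. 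Since the scheme solves a least-squares problem at each step, the optimal $\Theta^{i,*}$ does at least as well as any candidate coefficient vector; plugging in the coefficients realising the RWNN approximation of $\widehat v_i,\zowk_{t_i}$ from~\eqref{eq:existenceAuxilary} yields a per-step bound of order $C^*/K + M|\pi|^2$ for each $i$, and summing the $N$ steps (again via discrete Gronwall, absorbing the Lipschitz-in-$f$ feedback) produces the $\tfrac{C^*}{K}N$ term. Restricting everything to a compact $\Qq\subset\RR$ is needed precisely so that the RWNN approximation bounds (which are stated over bounded domains) and the derivative/second-derivative controls from Section~\ref{sec:RWNNderivatives} apply; the moment bound~\eqref{eq:solution_moment_bound} justifies that the localisation cost is negligible.

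\textbf{Main obstacle.} I expect the delicate point to be the treatment of the $Z^k$-errors through the approximate derivative $\Dx\Phi_K$ of the ReLu reservoir: one must verify that taking the derivative of the random network and then the $L^2$-projection commutes in the sense needed (this is exactly what Lemma~\ref{lem:diffexpectationeq} and Corollary~\ref{coro:diffexpectationeq2} are for), and that the derivative bounds remain uniform in the realisation of the reservoir so that $\EE^\Phi$ can be controlled. Equivalently, propagating the network error for $Z^k$ backwards without it blowing up by a factor $N$ in the wrong place---keeping the $1/|\pi|$ from the It\^o-isometry step in check against the $\Deli$ gained from the Lipschitz terms---is the crux; this is the standard ``$|\pi|$ bookkeeping'' of deep BSDE proofs, made slightly more intricate here by the extra randomisation layer.
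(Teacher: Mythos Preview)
Your proposal identifies all the right ingredients---the auxiliary processes~\eqref{eq:auxilaryProcesses}, the martingale representation, the Lipschitz property of~$f$, discrete Gronwall, and the step-bound Lemma~\ref{lem:step_RWWN_bound}---and is essentially the same approach as the paper.

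Two organisational points are worth noting. First, your Part~I/Part~II labelling does not match the paper's: in the paper, Part~I derives the one-step recursion $\ErrYiVhat\le(1+C|\pi|)\ErrYiUfip+\cdots$; Part~II (which \emph{is} the Corollary) combines this with the triangle-type inequality $\ErrYiUfi\le C\{\ErrYiVhat+N\,\ErrVhatiUfi\}$ and a single Gronwall on $\ErrYiUfi$, invoking Lemma~\ref{lem:step_RWWN_bound} to bound the accumulated network errors; Part~III handles the $Z$-components. Second, your Part~I suggests running Gronwall on the ``non-network'' error first and adding the network contribution afterwards, but this cannot be done cleanly: since $\widehat{\Vv}_{t_i}$ is defined through $\widehat{\Uf}_{i+1}$, the one-step recursion for $\ErrYiVhat$ already lands on $\ErrYiUfip$, so the two error sources are coupled inside the induction and must be propagated together. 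This is not a gap---once you write out the recursion you would see it---but it is precisely why the paper does one Gronwall pass on $\ErrYiUfi$ rather than two.
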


\begin{remark}
The second error term is the strong $L^2$-Monte-Carlo error and is $\Oo(N^{-H})$ for processes driven by an fBm with Hurst parameter $H\in(0, 1)$. 
We refer the reader to~\cite{bonesini2023rough, gassiat2023weak} for an exposition on strong versus weak error rates in rough volatility models.
\end{remark}

To prove Theorem~\ref{thm:convergence},
the following bounds on the derivatives are key.
\begin{lemma}\label{lem:derivativeBounds}
Let $\Psi_K(\cdot; \Theta)\in\RWNN^{\relu}_K$ and $(X^\pi_{t_i}, V^\pi_{t_i})_{i}$ denote the discretised versions of~\eqref{eq:logprocess} over the partition~$\pi$, then there exist $L_1, L_2 > 0$ such that, 
for all $i\in\{0,\dots,N-1\}$,
$$
\left\|\EE_i^\Phi\left[\Dx\Psi_K(X^\pi_{t_{i+1}}; \Theta)\right]\right\| \leq L_1 
\qquad \text{and} \qquad \left\|\EE_i^\Phi\left[\Dx^2\Psi_K(X^\pi_{t_{i+1}}; \Theta)\right]\right\| \leq L_2.
$$
\end{lemma}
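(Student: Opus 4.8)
The goal is to bound the conditional expectations of the first and second approximate derivatives of a ReLu-RWNN evaluated along the discretised forward process. The key structural fact is that, by Lemma~\ref{lem:RWNNderivative} and the displays following it, both derivatives are \emph{explicit}: $\Dx\Psi_K(\xx;\Theta) = \Theta\operatorname{diag}(\Hb(\Am\xx+\bm))\Am$ and the $i$-th component of the second derivative involves $\boldsymbol\delta_0(\Am\xx+\bm)$. The naive bound $\|\Theta\|\,\|\Am\|$ controls the first derivative pointwise, since $\Hb$ is bounded by $1$, but this is not quite what the statement asks: the bound must be uniform over $i$ and, crucially, the second derivative is a distributional object (a sum of Dirac masses on the hyperplanes $\{\Am\xx+\bm=0\}$), so ``$\|\Dx^2\Psi_K\|\leq L_2$'' can only make sense after taking the conditional expectation $\EE_i^\Phi$, which smears the Dirac masses against the transition density of $X^\pi_{t_{i+1}}$ given $\Ff_{t_i}$.

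\textbf{First derivative.} I would start with the easy bound. Write $\Dx\Psi_K(\xx;\Theta)=\Theta\operatorname{diag}(\Hb(\Am\xx+\bm))\Am$; since each entry of $\Hb$ is in $\{0,1\}$, we get $\|\Dx\Psi_K(\xx;\Theta)\|\leq \|\Theta\|_{\mathrm{op}}\|\Am\|_{\mathrm{op}}$ for a.e.\ $\xx$, and this is deterministic in $\xx$, so $\|\EE_i^\Phi[\Dx\Psi_K(X^\pi_{t_{i+1}};\Theta)]\|\leq\|\Theta\|_{\mathrm{op}}\|\Am\|_{\mathrm{op}}=:L_1$, uniformly in $i$. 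Here $\Theta=\Theta^{i,*}$ in the actual scheme may depend on $i$, but it is the output of a ridge-regularised least-squares problem and hence bounded; alternatively, since the statement fixes $\Psi_K(\cdot;\Theta)$, we may simply read $L_1$ off the given $\Theta$ and the realised reservoir $\Am$. Either way, this step is routine.

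\textbf{Second derivative — the main obstacle.} This is where the work lies. Using the representation $[\Dx^2\Psi_K(\xx;\Theta)]_i=\Theta\operatorname{diag}(a_i)\operatorname{diag}(\boldsymbol\delta_0(\Am\xx+\bm))\Am$, one has for each hidden unit $j$ a term proportional to $\delta_0(a_j^\top\xx+b_j)$, i.e.\ a surface measure on the hyperplane $\{a_j^\top\xx+b_j=0\}$. To make sense of $\EE_i^\Phi[\cdot]$ I would approximate $\delta_0$ by $\delta_0^\eps$ (as defined in the text), compute $\EE_i^\Phi[\delta_0^\eps(a_j^\top X^\pi_{t_{i+1}}+b_j)]=\EE_i^\Phi\big[\eps^{-1}(\ind_{\{a_j^\top X^\pi_{t_{i+1}}+b_j>0\}}-\ind_{\{a_j^\top X^\pi_{t_{i+1}}+b_j>\eps\}})\big]$, and bound this by the density of the scalar random variable $a_j^\top X^\pi_{t_{i+1}}+b_j$ given $\Ff_{t_i}$. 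Under the Euler scheme~\eqref{eq:logprocess}, $X^\pi_{t_{i+1}}=X^\pi_{t_i}-\tfrac{V^\pi_{t_i}}{2}\Deli+\sqrt{V^\pi_{t_i}}(\rho_1\Delta^{W^1}_i+\rho_2\Delta^{W^2}_i)$, so conditionally on $\Ff_{t_i}$ this is Gaussian with variance $V^\pi_{t_i}\Deli$ (on $\{V^\pi_{t_i}>0\}$), hence the conditional density of $a_j^\top X^\pi_{t_{i+1}}+b_j$ is bounded by $C/\sqrt{V^\pi_{t_i}\Deli}$. Letting $\eps\downarrow 0$ gives $\big|\EE_i^\Phi[\delta_0(a_j^\top X^\pi_{t_{i+1}}+b_j)]\big|\leq C/\sqrt{V^\pi_{t_i}\Deli}$. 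Summing over the $K$ units with the matrix-norm bookkeeping from the explicit formula, and invoking Assumption~\ref{ass:discretisationbound} (finiteness of $\EE[|V^\pi_{t_i}|]$) — or, more carefully, a lower-moment bound on $V^\pi_{t_i}$ — yields $\|\EE_i^\Phi[\Dx^2\Psi_K(X^\pi_{t_{i+1}};\Theta)]\|\leq L_2$. The delicate point is the possible degeneracy $V^\pi_{t_i}=0$: one must argue it occurs with probability zero (e.g.\ because $\xi_{t_i}>0$ and the Wick exponential is strictly positive) or absorb it, and one must keep the $1/\sqrt{\Deli}$ factor under control — it is finite for fixed $\pi$, which is all the lemma needs, though it does degrade as $|\pi|\to 0$ and this is presumably accounted for later via the $M|\pi|^2$ term in Theorem~\ref{thm:convergence}. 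I would also note that Corollary~\ref{coro:diffexpectationeq2} licenses replacing the approximate derivative by a genuine (weak) one when forming these expectations against the continuous law of $X^\pi_{t_{i+1}}$.
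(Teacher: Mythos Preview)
Your approach is essentially the paper's: the first-derivative bound comes from the global Lipschitz property of $\Psi_K$ (the paper writes $\|\Psi_K(x;\Theta)-\Psi_K(y;\Theta)\|\leq\|\Theta\|_F\|\Am\|_F\|x-y\|$, which is equivalent to your pointwise bound on $\Theta\operatorname{diag}(\Hb)\Am$), and the second-derivative bound comes from integrating the Dirac mass $\boldsymbol\delta_0(\Am X^\pi_{t_{i+1}}+\bm)$ against the conditional Gaussian law of $X^\pi_{t_{i+1}}$ given $\Ff_{t_i}$, yielding a Gaussian density evaluated at zero---exactly your computation, with the paper taking the cosmetic shortcut $\Am=\Imat$, $\bm=0$ ``since their support is bounded.'' Your caution about the $1/\sqrt{V^\pi_{t_i}\Deli}$ factor and the degeneracy at $V^\pi_{t_i}=0$ is well-placed and in fact more careful than the paper, which simply absorbs everything into a constant~$C$; the downstream use in Lemma~\ref{lem:approxError} multiplies $L_2$ by $\sqrt{V^\pi_{t_i}}\,\Deli$, which is what ultimately tames the blow-up.
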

\begin{proof}
We start with the first derivative. 
For all $x,y\in\RR^d$,
\begin{align*}
    \left\|\Psi_K(x; \Theta)-\Psi_K(y; \Theta)\right\| &= \left\| \Theta \left(\brelu(\Am x+b)-\brelu(Ay+b)\right) \right\| 
    \leq \|\Theta\|_F\|\brelu(\Am x+b)-\brelu(Ay+b)\| \\
    &\leq \|\Theta\|_F\|\Am x-Ay\| \leq \|\Theta\|_F\|A\|_F\|x-y\|
    \leq L_1\|x-y\|,
\end{align*}
since $\relu$ is $1$-Lipschitz. 
The estimator $\Theta$ has an explicit form~\eqref{eq:betaestimator} and its norm is finite, therefore $\Psi_K(\cdot; \Theta)$ is globally Lipschitz and its first derivative is bounded by $L_1>0$. 
Next, without loss of generality, we can set $\Am=\Imat$ and $\bm=0$, since their support is bounded. 
As in~\eqref{eq:SecondDiff}
for $j\in\{1,\dots,m\}$,
\begin{align*}
    \EE_i^\Phi\left[\Dx^2 \Psi_K(X^\pi_{t_{i+1}}; \Theta)\right]_j 
    &= \int \Theta\operatorname{diag}\left(e_{j}\right)\operatorname{diag}\left(\boldsymbol{\delta}_0\left(x-\frac12 V^\pi_{t_i}\Deli  + \sqrt{V_{t_i}}w\right)\right) \boldsymbol p_\Nn(w) \D w \\
    &= \Theta \operatorname{diag}\left(e_{j}\right)\operatorname{diag}\left( \boldsymbol p_\Nn\left(0; x-\frac12 V^\pi_{t_i}\Deli, V^\pi_{t_i}\Deli \right)\right),
\end{align*}
since $\Delta^{B}_{i}\sim\Nn(0, \Deli)$ and $\boldsymbol p_\Nn$ is the Gaussian density applied component-wise. 
Since the weights are sampled on a compact and  $\|\Theta\|$ is finite, then there exists $C>0$ such that
$$
\left\|\EE_i^\Phi\left[\Dx^2 \Psi_K(X^\pi_{t_{i+1}}; \Theta)\right]\right\| \leq C\|\Theta\|_F = L_2.
$$
\end{proof}

From here the error bound of approximating $\widehat\Vv_{t_i}$, $\ZowO_{t_{i}}$ and $\ZowT_{t_{i}}$ with their RWNN approximators $\widehat\Uf_i$, $\widehat{\Zz}^1_{i}$ and $\widehat{\Zz}^2_{i}$ 
(defined in the lemma below)
can be obtained.
For $i\in\{0,\dots,N-1\}$, $(\Uf_i, \Psi_i)\in\RWNN^{\relu}_K$, 
introduce
\begin{equation*}
\begin{array}{r@{\;}lr@{\;}l}
\Zz^1_{i}(x) &\coloneqq \Psi_i(x)+\rho_1\sqrt{V_{t_i}}\Dx\Uf_i(x), & \qquad
\Zz^2_{i}(x) &\coloneqq \rho_2\sqrt{V_{t_i}}\Dx\Uf_i(x),\\
\widehat{\Zz}^1_{i}(x) &\coloneqq \widehat\Psi_i(x)+\rho_1\sqrt{V_{t_i}}\Dx\widehat\Uf_i(x), & \qquad
\widehat{\Zz}^2_{i}(x) &\coloneqq \rho_2\sqrt{V_{t_i}}\Dx\widehat\Uf_i(x).
\end{array}
\end{equation*}

\begin{lemma}\label{lem:approxError}
Under Assumptions~\ref{ass:RWNNscheme}-\ref{ass:discretisationbound}, there exists $M>0$ such that
$$
\EE^\Phi\left[
\left|\Zz_i^k(X^\pi_{t_i}) - \overline{\widehat{Z}}^{k}_{t_{i}}\right|^2\right] \leq\rho_k^2|\pi|^2 M,
\quad
\text{for all }i\in\{0,\dots,N-1\}, k=1,2.
$$
\end{lemma}

\begin{proof}
From~\eqref{eq:auxilaryProcesses} and~\eqref{eq:existenceAuxilary}, we have, for $i\in\{0,\dots,N-1\}$ and $k\in\{1,2\}$,
    \begin{align*}
    \widehat{v}_{i}(x) &= \EE^\Phi_{i}\left[\widehat{\Uf}_{i+1}\left(X^{x,\pi}_{t_{i+1}}\right)\right] + f\left(t_i, \E^{x}, \widehat{v}_{i}(x), \zowO_{i}(x), \zowT_{i}(x)\right) \Deli, \\
    \zowk_{i}(x) &= \Psi_i\left(X^{x,\pi}_{t_{i+1}}\right)\ind_{\{k=1\}} + \frac{1}{\Deli} \EE^\Phi_{i}\left[\widehat{\Uf}_{i+1}\left(X^{x,\pi}_{t_{i+1}}\right) \Delta^{W^k}_{i}\right],
    \end{align*}
    where $X^{x,\pi}_{t_{i+1}} = x + \left(r- \frac12 V_{t_i}\right)\Deli  + \sqrt{V_{t_i}}\Delta^{B}_{i}$ is 
    the Euler discretisation of $\{X_t\}_{t\in[0, T]}$ over~$\pi$ and $\{V^\pi_{t_i}\}_{i=0}^{N}$ is the appropriate discretisation of the volatility process over the same partition. 
    For $\{\Rr^k\}\overset{{\mathrm{iid}}}{\sim}\Nn(0,1)$, the two auxiliary processes can be written as
    \begin{align*}
        \zowk_{i}(x) 
        &= \Psi_i\left(X^{x,\pi}_{t_{i+1}}\right)\ind_{\{k=1\}} + \frac{1}{\Deli} \EE^\Phi_{i}\left[\widehat{\Uf}_{i+1}\left(x - \frac{V^{x,\pi}_{t_i}}{2}\Deli  + \sqrt{V^{x,\pi}_{t_i}\Deli}\left(\rho_1 \Rr^1 + \rho_2 \Rr^2\right)\right) \sqrt{\Deli} \Rr^k\right].
    \end{align*}
    Notice that, while any sensible forward scheme for $\{V_t\}_{t\in[0,T]}$ does depend on a series of Brownian increments, $V^{x,\pi}_{t_i}$ only depends on $\left(\Delta^{W}_{i-1},\dots,\Delta^{W}_{0}\right)$, which are known at time~$t_i$. Thus, since usual derivative operations are available for approximately differentiable functions (Remark~\ref{rem:usualdiffrules}) multivariate integration by parts for Gaussian measures (a formulation of Isserlis' Theorem~\cite{Isserlis1918}) yields
    $$
    \zowk_{i}(x) = \Psi_i\left(X^{x,\pi}_{t_{i+1}}\right)\ind_{\{k=1\}} + \rho_k\sqrt{V_{t_i}}\EE^\Phi\left[\Dx \widehat \Uf_{i+1}\left(X^{x, \pi}_{t_{i+1}}\right)\right],
    $$
    with corresponding derivatives
    $$
    \Dx\zowk_{i}(x) = \Dx\Psi_i\left(X^{x,\pi}_{t_{i+1}}\right)\ind_{\{k=1\}} + \rho_k\sqrt{V_{t_i}}\EE^\Phi\left[\Dx^2 \widehat\Uf_{i+1}\left(X^{x,\pi}_{t_{i+1}}\right)\right].
    $$
An application of the implicit function theorem then implies
\begin{equation}\label{eq:derivativeVhat}
        \Dx \widehat v_i(x) =  \EE^\Phi_{i}\left[\Dx\widehat{\Uf}_{i+1}\left(X^{x,\pi}_{t_{i+1}}\right)\right] + \Deli \left\{\Dx \widehat f_i(x) + \Dr_y \widehat f_i(x)\Dx \widehat v_i(x)
        + \sum_{k=1}^{2}\Dr_{z^k} \widehat f_i(x) \Dx \zowk_{i}(x)\right\},
    \end{equation}
    where $\widehat f_i(x) \coloneqq  f\left(t_i, \E^{x}, \widehat{v}_{i}(x), \zowO_{i}(x), \zowT_{i}(x)\right)$ and
    \begin{align*}
        & \Psi_i\left(X^{x,\pi}_{t_{i+1}}\right)\ind_{\{k=1\}} + \left(1-\Deli \Dr_y\widehat f_i(x)\right)\rho_k\sqrt{V^{\pi}_{t_i}}\Dx \widehat v_i(x) \\
        & = \overline{\widehat{z}}^k_{i}(x)  + \rho_k\sqrt{V^{\pi}_{t_i}}\Deli \left(\Dx \widehat f_i(x) + \Dr_{z^1} \widehat f_i(x) \Dx \zowO_{i}(x) + \Dr_{z^2} \widehat f_i(x)\Dx \zowT_{i}(x)\right).
    \end{align*}
    Thus, for small enough $|\pi|$, 
    \begin{align*}
        &\Psi_i\left(X^{x,\pi}_{t_{i+1}}\right)\ind_{\{k=1\}} + \rho_k\sqrt{V^{\pi}_{t_i}}\Dx \widehat v_i(x) \\ 
        &\qquad \leq \overline{\widehat{z}}^k_{i}(x)  + \rho_k\sqrt{V^{\pi}_{t_i}}\Deli \left(\Dx \widehat f_i(x) + \Dr_{z^1} \widehat f_i(x) \Dx \zowO_{i}(x) + \Dr_{z^2} \widehat f_i(x)\Dx \zowT_{i}(x)\right),
    \end{align*}
    and since~$f$ is Lipschitz by Assumption~\ref{ass:RWNNscheme}(iii), then $\Dr_{r} \widehat f_i(x)=1$ for $r\in\{x, z^1, z^2\}$ and, by Lemma~\ref{lem:derivativeBounds} and the definition of $\overline{\widehat{z}}^k_{i}(x)$:
    \begin{align}
        \Psi_i\left(X^{x,\pi}_{t_{i+1}}\right)\ind_{\{k=1\}} + \rho_k\sqrt{V^{\pi}_{t_i}}\Dx \widehat v_i(x) &\leq \overline{\widehat{z}}^k_{i}(x) + \rho_k\Deli  \sqrt{V^{\pi}_{t_i}}\left(1+\Dx \zowO_{i}(x) + \Dx \zowT_{i}(x)\right) \\
        &\leq \overline{\widehat{z}}^k_{i}(x) + \rho_k\Deli \sqrt{V^{\pi}_{t_i}}\left(1 + L_1 + \sqrt{2} L_2 \sqrt{V^{\pi}_{t_i}}\right).
    \end{align}
Therefore, using the above inequality
\begin{align*}
 & \EE^\Phi\left[\left| \Psi_i(X^\pi_{t_i}) + \rho_1\sqrt{V^\pi_{t_i}}\Dx\Uf_i(X^\pi_{t_i}) - \ZowO_{t_i}\right|^2 \right]  \\
    & \leq 
    \EE^\Phi\left[\left| \zowO_{i}(X^\pi_{t_i}) - \ZowO_{t_i} + \rho_1\Deli \sqrt{V^\pi_{t_i}}\left[1 + L_1 + \sqrt{2} L_2 \sqrt{V^\pi_{t_i}}\right]\right|^2 \right] \\
    & \leq 
    |\rho_1\Deli |^2 \EE\left[\left|  \sqrt{V^\pi_{t_i}}\left(1+L_1+\sqrt{2}L_2\sqrt{V^\pi_{t_i}}\right)\right|^2 \right] \\
    &\leq 
    \rho_1^2|\pi|^2 \left\{\EE\left[\left|V^\pi_{t_i}\right| \right]\left(1 + L_1 + \sqrt{2}L_2\EE\left[\left|V^\pi_{t_i}\right| \right]\right)\right\}
    \leq \rho_1^2|\pi|^2 M,
\end{align*}
relying on Corollary~\ref{coro:diffexpectationeq2} and the fact that~$\Zowk_{t_{i}} = \zowk_{t_{i}}(X^\pi_{t_i})$~(see \eqref{eq:existenceAuxilary}) in the second inequality and the boundedness of~$\EE[|V^{\pi}|]$ from Assumption~\ref{ass:discretisationbound}
in the last line.
The proof of the other bound is analogous.
\end{proof}

\begin{lemma}\label{lem:step_RWWN_bound}
Under Assumptions~\ref{ass:volatilityProcess}-\ref{ass:wellposednessBSPDE}-\ref{ass:RWNNscheme}, for sufficiently small~$|\pi|$ we have 
\begin{equation}
\EE^\Phi\left[\left|\widehat{\Vv}_{t_{i}}-\widehat{\Uf}_{i}\left(X^\pi_{t_{i}}\right)\right|^{2}\right]+\Deli  \EE^\Phi\left[\sum_{k=1}^2\left|\widehat{{Z}}^k_{t_{i}}-\widehat{\mathcal{Z}}^k_{i}\left(X^\pi_{t_{i}}\right)\right|^{2}\right] \leq C\left\{ \frac{C^{*}}{K} +  M|\pi|^3\right\}
\end{equation}
for all $i\in\{0, \dots, N-1\}$ and~$K$ hidden units, for some $C>0$, where~$C^*$ is as in Proposition~\ref{prop:UATRWNN} and~$M$ in Lemma~\ref{lem:approxError}.
\end{lemma}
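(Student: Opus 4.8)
The plan is to control the one-step regression error by comparing the RWNN estimators $\widehat\Uf_i, \widehat\Psi_i$ against the auxiliary targets $\widehat\Vv_{t_i}, \Zowk_{t_i}$ in two stages: first a \emph{projection/approximation} step showing that the best RWNN in $\RWNN^{\relu}_K$ approximates the target random fields up to an $O(C^*/K)$ error (this is exactly where Proposition~\ref{prop:UATRWNN} and the existence result~\eqref{eq:existenceAuxilary} enter), and second a \emph{stability} step showing the least-squares minimiser is comparable to that best approximator. Since the loss $\ell(\Theta^i,\Xi^i)=\EE^\Phi[|\Y^i-\betam^i\X^i|^2]$ is quadratic and its minimiser is the $L^2(\EE^\Phi)$-projection onto the linear span of the reservoir features, I would exploit the Pythagorean identity: the realised loss at $\Theta^{i,*}$ is bounded by the loss of \emph{any} competitor, in particular of the near-optimal RWNN furnished by Proposition~\ref{prop:UATRWNN} applied to the functions $\widehat v_i$ and $\zowk_i$ (which lie in $\Ww^{3,2}$ on the compact $\Qq$ by Assumption~\ref{ass:RWNNscheme}(i) and the fixed-point construction of $\widehat\Vv$).

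Concretely, first I would write the one-step telescoping identity: subtract the martingale representation~\eqref{eq:martingaleRep} of $\widehat\Uf_{i+1}(X^\pi_{t_{i+1}})$ from the RWNN ansatz and take $\EE_i^\Phi$; the $\D W^k$ terms vanish in conditional expectation and, after an It\^o-isometry computation identical to the one producing $\ZowO_{t_i},\ZowT_{t_i}$ in~\eqref{eq:auxilaryProcesses}, the squared residual splits (up to the usual Young's-inequality cross terms absorbing the $\Deli$-Lipschitz contribution of $f$) into
$\EE^\Phi[|\widehat\Vv_{t_i}-\widehat\Uf_i(X^\pi_{t_i})|^2] + \Deli\sum_k\EE^\Phi[|\widehat Z^k_{t_i}-\widehat{\mathcal Z}^k_i(X^\pi_{t_i})|^2]$
on one side and the minimised loss $\ell(\Theta^{i,*},\Xi^{i,*})$ on the other. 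The Gram matrix $\EE^\Phi[\X^i\X^{i\top}]$ being (possibly Moore--Penrose) invertible, the projection is well defined; the key point is that $\widehat{\mathcal Z}^k_i$ is built from $\Dx\widehat\Uf_i$ and $\widehat\Psi_i$, so the relation $Z^k\leftrightarrow(\partial_x u,\psi)$ is preserved at the discrete level and Lemma~\ref{lem:derivativeBounds} supplies the uniform bounds $L_1,L_2$ needed to keep the derivative part of the feature map $\X^i_2$ under control on $\Qq$.

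Second, I would bound $\ell(\Theta^{i,*},\Xi^{i,*})$ from above by plugging in the competitor: choose $(\Theta,\Xi)$ so that $\Uf(\cdot;\Theta)\approx\widehat v_i$ and $\Psi(\cdot;\Xi)\approx\zowO_i$ in $\Ww^{1,2}(\Qq)$ with error $C^*/K$ via Proposition~\ref{prop:UATRWNN}; then the residual $\Y^i-\betam^i\X^i$ equals (target residual, which is $O(|\pi|^{3/2})$ after the $\Deli$-weighting, using $\EE[|\int V|^2]\le\omega(|\pi|)$ from Assumption~\ref{ass:RWNNscheme}(ii) together with Lemma~\ref{lem:approxError}) plus (the $K$-approximation error), and squaring and taking expectations gives the claimed $C\{C^*/K+M|\pi|^3\}$. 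The main obstacle I anticipate is the interplay between the derivative term $\Dx\Phi^{\Theta}_K$ in $\X^i_2$ and the Sobolev-norm approximation: one must ensure that Proposition~\ref{prop:UATRWNN} controls not just $u-\Uf$ but also $\Dx u-\Dx\Uf$ in $L^2(\EE^\Phi)$ (hence the need for $\Ww^{3,2}$ rather than merely $L^2$ regularity, giving one spare derivative), and that Corollary~\ref{coro:diffexpectationeq2} legitimately replaces the approximate derivative of the RWNN by the weak derivative of the Sobolev target under the law of $X^\pi_{t_i}$ — this last point is what makes the $\delta_0$-type second-derivative terms from Section~\ref{sec:RWNNderivatives} harmless. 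Beyond that, tracking the constant $C$ so that it stays independent of $i$ (uniformly over the $N$ steps) just requires that all bounds are stated on the fixed compact $\Qq$ and invoke only $|\pi|\le|\pi|_0$.
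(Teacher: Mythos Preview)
Your proposal is essentially correct and follows the same architecture as the paper: a two-sided Young-inequality sandwich on the one-step loss (using the Lipschitz property of~$f$ to absorb the $\Deli$-cross terms), the minimality $\widetilde{L}_i(\Theta^*,\Xi^*)\le\widetilde{L}_i(\Theta,\Xi)$, and then Proposition~\ref{prop:UATRWNN} for the $C^*/K$ contribution together with Lemma~\ref{lem:approxError} for the $M|\pi|^3$ contribution.

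One point deserves correction. You anticipate the main obstacle to be controlling $\Dx u-\Dx\Uf$ in $L^2$, i.e.\ a Sobolev-norm version of Proposition~\ref{prop:UATRWNN}, and you propose to approximate \emph{both} $\widehat v_i$ and $\zowk_i$ by RWNNs. The paper avoids this entirely. After the Young/Lipschitz step the upper bound at a competitor $(\Theta,\Xi)$ reads
\[
(1+C\Deli)\,\EE^\Phi\!\left[\bigl|\widehat\Vv_{t_i}-\Uf_i(X^\pi_{t_i};\Theta)\bigr|^2\right]
+C\Deli\sum_{k=1}^2\EE^\Phi\!\left[\bigl|\Zowk_{t_i}-\Zz_i^k(X^\pi_{t_i};\Theta,\Xi)\bigr|^2\right],
\]
and the second block is bounded \emph{directly} by $C\Deli\cdot|\pi|^2 M=O(|\pi|^3)$ via Lemma~\ref{lem:approxError}, which holds for \emph{any} RWNN pair $(\Uf_i,\Psi_i)$ thanks to the uniform derivative bounds of Lemma~\ref{lem:derivativeBounds}. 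Consequently only the first block requires optimisation over~$\Theta$, and Proposition~\ref{prop:UATRWNN} is invoked solely for $\widehat v_i$---no derivative approximation is needed. Your worry about $\Ww^{3,2}$ versus $L^2$ control of the RWNN is therefore a detour; the structural identity behind Lemma~\ref{lem:approxError} (Gaussian integration by parts plus the implicit function theorem applied to $\widehat v_i$) does that work for you.
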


\begin{proof}[Proof of Lemma~\ref{lem:step_RWWN_bound}]
Fix $i\in\{0,\dots,N-1\}$. Relying on the martingale representation in~\eqref{eq:martingaleRep} and Lemma~\ref{lem:sol_as_RWNN}, we can define the following loss function for the pair $(\Uf_i(\cdot; \Theta),\Psi_i(\cdot;\Xi))\in\RWNN^{\relu}_K$ and their corresponding parameters~$\Theta$ and~$\Xi$:
    \begin{equation}\label{eq:LHat_LTilde}
        \widehat{L}_i(\Theta,\Xi) \coloneqq  \widetilde{L}_i(\Theta,\Xi) + \EE^\Phi\left[\int_{t_i}^{t_{i+1}}\sum_{k=1}^2\left|\widehat Z_t^k - \Zowk_{t_{i}}\right|^2\right],
    \end{equation}
    with
    \begin{align}
        \widetilde{L}_i(\Theta,\Xi) \coloneqq  \EE^\Phi\bigg[\Big\lvert \widehat \Vv_{t_i} - \Uf_i(X^\pi_{t_i}; \Theta) + \Deli  \Big\{ f\left(t_i, \E^{X^\pi_{t_i}}, \Uf(X^\pi_{t_i}; \Theta), \Zz^1_i(X^\pi_{t_i}; \Theta, \Xi), \Zz^2_i(X^\pi_{t_i}; \Theta, \Xi)\right) \\ - f\left(t_i, \E^{X^\pi_{t_i}}, \Vv_{t_i}, \ZowO_{t_{i}}, \ZowT_{t_{i}}\right) \Big\} \Big\rvert^2\bigg] + \Deli \sum_{k=1}^2\EE^\Phi\left[ \left\lvert \Zowk_{t_{i}} - \Zz_i^k(X^\pi_{t_i}; \Theta, \Xi)\right\rvert^2\right].
    \end{align}
    Now, recall the following useful identity, valid for any
    $a, b\in\RR$:
    \begin{equation}
    (a+b)^2 \leq \left(1+\chi\right)a^2 + \left(1+\frac{1}{\chi}\right)b^2, \quad \chi > 0.\label{eq:Young}
    \end{equation}
    
    Applying~\eqref{eq:Young} yields
    \begin{align*}
        & \widetilde{L}_i(\Theta, \Xi) \leq \Deli \sum_{k=1}^2\EE^\Phi\left[ \left\lvert \Zowk_{t_{i}} - \Zz_i^k(X^\pi_{t_i}; \Theta, \Xi)\right\rvert^2\right] + (1 + C\Deli ) \EE^\Phi\left[\left\lvert \widehat \Vv_{t_i} - \Uf_i(X^\pi_{t_i}; \Theta)\right\rvert^2\right] \\ 
        & \quad + \left(1+\frac1{C\Deli }\right)\EE^\Phi\left[\left\lvert  f\left(t_i, \E^{X^\pi_{t_i}}, \Uf(X^\pi_{t_i}; \Theta), \Zz^1_i(X^\pi_{t_i}; \Theta, \Xi), \Zz^2_i(X^\pi_{t_i}; \Theta, \Xi)\right) - f\left(t_i, \E^{X^\pi_{t_i}}, \Vv_{t_i}, \ZowO_{t_{i}}, \ZowT_{t_{i}}\right) \right\rvert^2\right].
    \end{align*}
    Now by the Lipschitz condition on~$f$ from Assumption~\ref{ass:RWNNscheme},
    \begin{equation*}
        \widetilde{L}_i(\Theta, \Xi) \leq (1 + C\Deli ) \EE^\Phi\left[\left\lvert \widehat \Vv_{t_i} - \Uf_i(X^\pi_{t_i}; \Theta)\right\rvert^2\right] + C\Deli \sum_{k=1}^2\EE^\Phi\left[ \left\lvert \Zowk_{t_{i}} - \Zz_i^k(X^\pi_{t_i}; \Theta, \Xi)\right\rvert^2\right].
    \end{equation*}
    For any $a, b\in\RR$, inequality~\eqref{eq:Young} holds with the reverse sign and hence also
    \begin{equation}
        (a+b)^2 \geq \left(1-\chi\right)a^2 - \frac{1}{\chi}b^2, \quad \chi > 0. \label{eq:nYoung}
    \end{equation}
    Following~\eqref{eq:nYoung}, for $\chi=\gamma\Deli$ and $\gamma>0$ we have
    \begin{align*}
        & \widetilde{L}_i(\Theta, \Xi) \geq \Deli \sum_{k=1}^2\EE^\Phi\left[ \left\lvert \Zowk_{t_{i}} - \Zz_i^k(X^\pi_{t_i}; \Theta, \Xi)\right\rvert^2\right] + (1 - \gamma\Deli ) \EE^\Phi\left[\left\lvert \widehat \Vv_{t_i} - \Uf_i(X^\pi_{t_i}; \Theta)\right\rvert^2\right] \\
        & \quad - \frac1{\gamma\Deli }\EE^\Phi\left[\left\lvert  f\Big(t_i, \E^{X^\pi_{t_i}}, \Uf(X^\pi_{t_i}; \Theta), \Zz^1_i(X^\pi_{t_i}; \Theta, \Xi), \Zz^2_i(X^\pi_{t_i}; \Theta, \Xi)\Big) - f\left(t_i, \E^{X^\pi_{t_i}}, \Vv_{t_i}, \ZowO_{t_{i}}, \ZowT_{t_{i}}\right) \right\rvert^2\right].
    \end{align*}
    Again since~$f$ is Lipschitz, the arithmetic-geometric inequality implies
    \begin{align*}
        \widetilde{L}_i(\Theta, \Xi) 
        &\geq (1 - \gamma\Deli ) \EE^\Phi\left[\left\lvert \widehat \Vv_{t_i} - \Uf_i(X^\pi_{t_i}; \Theta)\right\rvert^2\right] + \Deli\sum_{k=1}^2\EE^\Phi\left[ \left\lvert \Zowk_{t_{i}} - \Zz_i^k(X^\pi_{t_i}; \Theta, \Xi)\right\rvert^2\right]\\
        & \qquad - \frac{\Deli}\gamma\EE^\Phi\left[ L_f^2 \bigg\lvert \left\lvert \widehat\Vv_{t_i} - \Uf(X^\pi_{t_i}; \Theta)\right\rvert  + \left\lvert \Zowk_{t_{i}} - \Zz_i^k(X^\pi_{t_i}; \Theta, \Xi)\right\rvert\bigg\rvert^2\right] \\ 
        &\geq (1 - \gamma\Deli ) \EE^\Phi\left[\left\lvert \widehat \Vv_{t_i} - \Uf_i(X^\pi_{t_i}; \Theta)\right\rvert^2\right] + \Deli\sum_{k=1}^2\EE^\Phi\left[ \left\lvert \Zowk_{t_{i}} - \Zz_i^k(X^\pi_{t_i}; \Theta, \Xi)\right\rvert^2\right]\\
        & \qquad - \frac{3\Deli  L_f^2}{\gamma}\left( \EE^\Phi\left[\left\lvert \widehat \Vv_{t_i} - \Uf_i(X^\pi_{t_i}; \Theta)\right\rvert^2\right] + \sum_{k=1}^2\EE^\Phi\left[ \left\lvert \Zowk_{t_{i}} - \Zz_i^k(X^\pi_{t_i}; \Theta, \Xi)\right\rvert^2\right] \right).
    \end{align*}
    Taking $\gamma=6L_f^2$ gives
    \[
        \widetilde{L}_i(\Theta, \Xi) \geq (1-C\Deli )\EE^\Phi\left[\left\lvert \widehat \Vv_{t_i} - \Uf_i(X^\pi_{t_i}; \Theta)\right\rvert^2\right] + \frac{\Deli}2\sum_{k=1}^2\EE^\Phi\left[ \left\lvert \Zowk_{t_{i}} - \Zz_i^k(X^\pi_{t_i}; \Theta, \Xi)\right\rvert^2\right].
    \]
For a given $i\in\{0,\dots,N-1\}$ take $(\Theta^*, \Xi^*)\in\argmin_{\Theta, \Xi} \widehat{L}_i(\Theta, \Xi)$ so that $\widehat\Uf_i=\Uf_i(\cdot; \Theta^*)$ and $\widehat{\Zz}_i^k(\cdot) \coloneqq  \Zz^k_i(\cdot; \Theta^*, \Xi^*)$. 
    From~\eqref{eq:LHat_LTilde}, $\widehat{L}_i$ and~$\widetilde{L}_i$ have the same minimisers, thus combining both bounds gives for all $(\Theta,\Xi)\in\RR^{m\times K}\times\RR^{m\times K}$,
    \begin{align*}
            &\left(1-C \Deli \right) \EE^\Phi\left[\left|\widehat \Vv_{t_{i}}-\widehat\Uf_{i}\left(X^\pi_{t_{i}}\right)\right|^{2}\right]+\frac{\Deli}2\sum_{k=1}^2\EE^\Phi\left[ \left\lvert \Zowk_{t_{i}} - \widehat{\Zz}_i^k\right\rvert^2\right] \leq \widetilde{L}_i(\Theta^*,\Xi^*)\leq \widetilde{L}_{i}(\Theta, \Xi) \\
            &\leq\left(1+C \Deli \right) \EE^\Phi\left[\left|\widehat \Vv_{t_{i}}-\Uf_{i}\left(X^\pi_{t_{i}} ; \Theta\right)\right|^{2}\right]+C \Deli \sum_{k=1}^2\EE^\Phi\left[ \left\lvert \Zowk_{t_{i}} - \Zz_i^k(X^\pi_{t_i}; \Theta, \Xi)\right\rvert^2\right].
    \end{align*}
    Letting $|\pi|$ be sufficiently small gives together with Lemma~\ref{lem:approxError}
    \begin{align*}
        \EE^\Phi\left[\left|\widehat \Vv_{t_{i}}-\widehat\Uf_{i}\left(X^\pi_{t_{i}}\right)\right|^{2}\right] &+ \Deli  \sum_{k=1}^2\EE^\Phi\left[ \left\lvert \Zowk_{t_{i}} - \Zz_i^k(X^\pi_{t_i}; \Theta, \Xi)\right\rvert^2\right] \\ 
        &\leq C \left\{\inf_\Theta\EE^\Phi\left[\left|\widehat v_i(X^\pi_{t_i})-\Uf_i(X^\pi_{t_i}; \Theta)\right|^2\right]
         + |\pi|^3\left(\rho_1^2  + \rho_2^2\right)M\right\},
    \end{align*}
    therefore, using Proposition~\ref{prop:UATRWNN}, we obtain
    \begin{align*}
&\EE^\Phi\left[\left|\widehat{\Vv}_{t_{i}}-\widehat{\Uf}_{i}\left(X^\pi_{t_{i}}\right)\right|^{2}\right] + \Deli  \EE^\Phi\left[\sum_{k=1}^2\left|\Zowk_{t_{i}}-\widehat{\Zz}^k_{i}\left(X^\pi_{t_{i}}\right)\right|^{2}\right] \\
    & \qquad \leq C\left\{ \inf_\Theta\EE^\Phi\left[\left|\widehat v_i(X^\pi_{t_i})-\Uf_i(X^\pi_{t_i}; \Theta)\right|^2\right]
    + M|\pi|^3 \right\}
    \leq C\left\{ \frac{C^{*}}{K}  + M|\pi|^3 \right\}.
    \end{align*}
\end{proof}
The rest of the proof is similar to those in~\cite[Theorem A.2]{Bayer2022} and~\cite[Theorem~4.1]{Hure2020}, but we include it in Appendix~\ref{apx:technical_proofs} for completeness. 

\section{Numerical results}\label{sec:RWNN_numerical_results}
We now showcase the performance of the RWNN scheme on a representative model from each--Markovian and non-Markovian--class. 
We first test our scheme in the multidimensional Black-Scholes (BS) setting~\cite{Black1973} and then move to the non-Markovian setup with the rough Bergomi (rBergomi) model~\cite{Bayer2015}. 
We develop numerical approximations to European option prices given in~\eqref{eq:markovianPDE} and~\eqref{eq:BSPDEPricing}, choosing
\[
f(t, x, y, z^1, z^2) = -r y
\qquad \text {and} \qquad 
g_{\mathrm{call}}\left(x\right) = \left(\E^x-\mathscr{K}\right)^{+},
\]
and discretising over the partition $\pi=\{0=t_0, t_1, \dots t_N=T\}$ for some~$N\in\NN$. 
The precise discretisation schemes of the individual processes are given in their corresponding sections below. 
We remark, however, that the approximated option price for a given Monte-Carlo sample can become (slightly) negative by construction so we add an absorption feature for both models:
\begin{equation}\label{eq:absorption_scheme}
Y^\pi_{t_i} \coloneqq  \max\left\{0,\widetilde Y^\pi_{t_i}\right\}, \qquad \text{for } i\in\{0, \dots, N-1\},
\end{equation}
where $\big\{\widetilde Y^\pi_{t_i}\big\}_{i=0}^N$ denotes the approximation obtained through the RWNN scheme.

\begin{remark}\label{rem:abs_scheme}
    This is a well-studied problem and is especially prevalent in the simulation of square-root diffusions. We acknowledge that the absorption scheme possibly creates additional bias (see~\cite{Lord2009} for the case of the Heston model), however, a theoretical study in the case of the PDE-RWWN scheme is out of the scope of this paper.
\end{remark}

The reservoir used as a random basis of RWNNs here is the classical linear reservoir from Definition~\ref{def:RWNN}. 
For numerical purposes, we introduce a so-called \textit{connectivity} parameter, a measure of how interconnected the neurons in a network are:
the higher the connectivity, the more inter-dependence between the neurons (see~\cite{Dale2020} for effects of connectivity in different reservoir topologies). 
In practice, however, too high a connectivity can lead to overfitting and poor generalisation. 
Recall that our reservoir is given by
\[
\Phi_K: \RR^d \rightarrow \RR^K, \qquad x\mapsto\Phi_K(x)\coloneqq\varrhob(\Am x+\bm),
\]
where only $\Am\in\RR^{K\times d}$ is affected by the connectivity parameter $c\in(0,1]$. 
Mathematically, $\Am_{ij} = \tilde\Am_{ij}\ind_{\{Z_{ij} < c\}}$, where $Z_{ij} \overset{\mathrm{iid}}{\sim} \Uu_{[0,1]}$ and $\tilde\Am_{ij}$ is the original matrix not impacted by the connectivity parameter. A value $c=1$ means that~$\Am$ is dense and fully determined by sampled weights. We find that the choice $c\approx 0.5$ results in superior performance.

In all our experiments, the reservoir weights are sampled uniformly over $[-0.5, 0.5]$ 
(i.e. with the choice $R=0.5$ in Algorithms~\ref{alg:markovian} and~\ref{alg:nonmarkovian}).
We experimented with sampling over alternative distributions and/or intervals, yet discovered that the scheme remains robust to the choice of support, provided it aligns with the magnitude of the expected output.
All experiments below were run on a standard laptop with an 
\texttt{AMD~Ryzen~9~5900HX}
processor without any use of GPU, which would most certainly speed up the algorithms further. The code for both models is available at \href{https://github.com/ZuricZ/RWNN_PDE_solver}{\texttt{ZuricZ/RWNN\_PDE\_solver}}.

\subsection{Example: Black-Scholes}
The Black-Scholes model~\cite{Black1973} is ubiquitous in mathematical finance, allowing for closed-form pricing and hedging of many financial contracts.
Despite its well-known limitations, it remains a reference and is the first model to check before exploring more sophisticated ones. 
Since it offers an analytical pricing formula as a benchmark for numerical results, it will serve as a proof of concept for our numerical scheme.
Under the pricing measure~$\QQ$, the underlying assets $\Sb = (S^1,\ldots, S^d)$ satisfy
\begin{equation}
    \D S^j_t = S^j_t \left(r \D t + \sigma_j \D W^j_t\right), \quad \text{for } t\in[0,T],\; j\in\{1,\dots,d\},
\end{equation}
where $\{W^j_t\}_{t\in[0,T]}$ are standard Brownian motions such that $\langle W^i, W^j \rangle = \rho_{i,j}\D t$, 
with $\rho_{i,j}\in [-1,1]$, $r\geq 0$ is the risk-free rate and $\sigma_j>0$ is the volatility coefficient. 
The corresponding $d$-dimensional option pricing PDE is then given by
\begin{equation}
\frac{\partial u(t,\Sb)}{\partial t}+\sum_{j=1}^d r S^j \frac{\partial u(t,\Sb)}{\partial S^j} + \sum_{j=1}^d \frac{(\sigma_i S^j)^2}{2} \frac{\partial^2 u(t,\Sb)}{(\partial S^j)^2} + \sum_{j=1}^{d-1} \sum_{k=j+1}^d \rho_{j, k} \sigma_j \sigma_k S^j S^k \frac{\partial^2 u(t,\Sb)}{\partial S^j \partial S^k} = r u(t,\Sb),
\end{equation}
for $t\in[0,T)$ with terminal condition 
$u(T,\Sb_T) = g(S^1_T,\dots, S^d_T)$.
To use Algorithm~\ref{alg:markovian}, the process~$\Sb$ has to be discretised, for example with an Euler-Maruyama scheme, for each $j=1,\ldots, d$ and $i\in\{0,1,\dots,N-1\}$:
\begin{equation*}
\left\{
\begin{array}{ll}
    X^{\pi, j}_{t_{i+1}} &= X^{\pi, j}_{t_{i}}
    + \left(r-\frac{\sigma_j^2}{2}\right)\Deli  + \sigma_j\Delta^{W^j}_{i}, \\
    S^{\pi, j}_{t_{i+1}} &= \exp\left\{X^{\pi, j}_{t_{i+1}}\right\},
\end{array}
\right.
\end{equation*}
with initial value $X^{\pi, j}_0=\log\big(S^{\pi, j}_0\big)$.
If not stated otherwise, we let $\mathscr{K}=S_0=1$, $r=0.01$, $T=1$,
and run the scheme with $N=21$ discretisation steps and $n_{\mathrm{MC}}=50,000$ Monte-Carlo samples. 
The reservoir has $K\in\{10, 100, 1000\}$ hidden nodes, in Sections~\ref{sec:results_basket_BS} and~\ref{sec:resutls_call_BS} the connectivity parameter is set to $c=0.5$.

\subsubsection{Convergence rate}\label{sec:BS_convergence_rate}
We empirically analyse the error rate in terms of the number of hidden nodes $K$ obtained in Corollary~\ref{cor:convergence}. To isolate the dependence on the number of nodes, we fix the discretisation grid and the number of MC samples. 
We then consider a single ATM vanilla Call, fix $c=1$, $\sigma=0.1$ and vary
\[
K \in \left\{ \floor{10^{1 + \frac{2(i - 1)}{9}}} : i \in 1, \dots, 10 \right\},
\]
over a set of 10 logarithmically spaced points between 10 and 1000. Due to our vectorised implementation of the algorithm, the reservoir basis tensor cannot fit into the random-access memory of a standard laptop for $K \geq 10000$. The results in Figure~\ref{fig:BSconvergence} are compared to the theoretical price only computed using the Black-Scholes pricing formula. The absorption scheme~\eqref{eq:absorption_scheme} is applied.

\begin{figure}[hbt!]%
    \centering
    \includegraphics[width=0.75\textwidth]{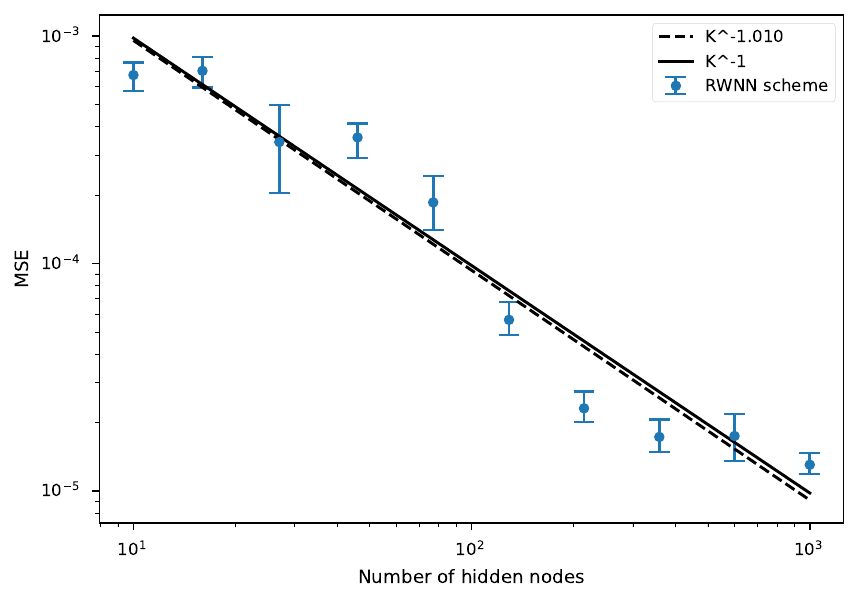}
    \caption{Empirical convergence of the MSE from Corollary~\ref{cor:convergence} under Black-Scholes in terms of the number of hidden nodes (for a fixed grid and number of MC samples). Error bars mark 0.1 and 0.9 quantiles of~20 separate runs of the algorithm. The slope coefficient of the dashed line is obtained through regression of the means of individual runs, while the solid line represents $1/K$ convergence and is shown as a reference.}
    \label{fig:BSconvergence}
\end{figure}

\subsubsection{ATM Call option}\label{sec:resutls_call_BS}
As a proof of concept we first test Algorithm~\ref{alg:markovian} with Call options written on $d\in\NN$ independent assets, i.e. with $\rho_{j,k}=0$ for $j\neq k$ and $V_T = \big(g_{\mathrm{call}}(S^j_T)\big)_{j\in\{1,\dots,d\}}$. This is only done so that the results can be directly compared to the theoretical price computed using the Black-Scholes pricing formula and is, in effect, the same as pricing~$d$ options on~$d$  independent assets, each with their own volatility parameter~$\sigma$. All the listed results in this section are for $K=100$ hidden nodes.
In Table~\ref{tab:BS_d5}, results and relative errors are shown for $d=5$ and $\boldsymbol\sigma\coloneqq(\sigma_1,\dots,\sigma_d)$ uniformly spaced over~$[0.05, 0.4]$. Next, the effects of the absorption scheme~\eqref{eq:absorption_scheme} are investigated. Curiously, the absorption performs noticeably worse compared to the basic scheme, where one does not adjust for negative paths. This leads us to believe that absorption adds a substantial bias, similar to the Heston case (see Remark~\ref{rem:abs_scheme}). Therefore, such a scheme should only be used for purposes, when positivity of the option price paths is strictly necessary (e.g. when hedging). Finally, in Table~\ref{tab:BS_MSE_time}, total MSE and computational times are given for different dimensions. The computational times for different dimensions are then plotted in Figure~\ref{fig:BS_time}. It is important to note that our results do not allow us to make definitive claims about the computational times of the PDE-RWNN scheme across different dimensions. This was not the goal of our experiments, and further detailed theoretical study and experiments would be necessary to draw more definitive conclusions regarding the efficiency of the scheme in various dimensions.

\begin{center}
\begin{table}[hbt!]
\centering
\begin{tabular}{l|cccc}
\multicolumn{5}{c}{Price} \\
\hline
$\boldsymbol\sigma$ & True & PDE (w/ abs) & PDE (w/o abs)  & MC    \\
\hline
0.05 & 0.02521640 & 0.02960256  &  0.02531131  &  0.02574731  \\
0.1  & 0.04485236 &  0.05523114 &  0.04467687  &  0.04547565  \\
0.15 & 0.06459483 &  0.07719949 &  0.06477605  &  0.06520783  \\
0.2  & 0.08433319 &  0.10307868 &  0.08443957  &  0.08484961  \\
0.25 & 0.10403539 &  0.12660871 &  0.10412393  &  0.10513928 
\end{tabular}
\begin{tabular}{l|ccc}
\multicolumn{4}{c}{Rel. Error} \\
\hline
 $\boldsymbol\sigma$ & PDE(w/ abs) & PDE (w/o abs) & MC    \\
\hline
0.05 & 1.74e-01 & -3.76e-03 & -2.11e-02 \\
0.1  & 2.31e-01 & ~3.91e-03 & -1.39e-02 \\
0.15 & 1.95e-01 & -2.81e-03 &	-9.49e-03 \\
0.2  & 2.22e-01 & -1.26e-03 &	-6.12e-03 \\
0.25 & 2.17e-01 & -8.51e-04 &	-1.06e-02
\end{tabular}
\caption{A single run for $d=5$ independent underlyings, where European Calls are compared to the price obtained through PDE-RWNN (\textit{with} and \textit{without} absorption) and the Monte Carlo methods along each dimension. Below, the relative errors of both methods are given. The MC method was run using the same paths as in the PDE-RWNN.}\label{tab:BS_d5}
\end{table}
\end{center}

\begin{center}
\begin{table}[hbt!]
\begin{tabular}{l|cc}
$d$  & Total MSE (with abs) & CPU Time (seconds) \\
\hline
5   & 3.482e-8 & 10.5        \\
10  & 5.417e-8 & 16.0        \\
25  & 4.901e-8 & 34.5        \\
50  & 1.653e-7 & 65.0       \\
100 & 2.534e-7 & 135.0         
\end{tabular}
\caption{Total MSE of the option price calculated across all $d$ assets and CPU training times for varying dimension $d$, where~$\boldsymbol\sigma$ uniformly spaced over $[0.05, 0.4]$.}\label{tab:BS_MSE_time}
\end{table}
\end{center}

\begin{figure}[hbt!]
    \centering
    \includegraphics[scale=0.5]{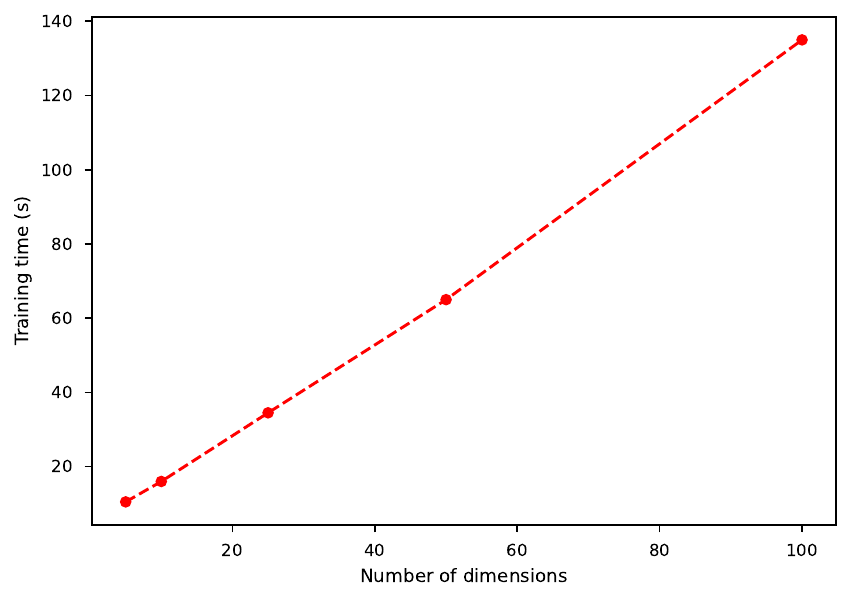}
    \caption{Computational time vs number of dimensions, as in Table~\ref{tab:BS_MSE_time}.}
    \label{fig:BS_time}
\end{figure}

\subsubsection{Computational time}
{A key advantage of RWNNs is their fast training procedure, which in essence relies on solving a linear regression problem. 
We now look at~\eqref{eq:RLS_estimator} to assess its computational complexity. First, computing the sum of outer products, $\sum_{j=1}^n Y_j X_j^\top$, where each $Y_j \in \RR^d$ and $X_j \in \RR^k$, requires forming $n$ matrices of size $d \times k$, resulting in a total computational cost of $\Oo(ndk)$. Similarly, for the sum $\sum_{j=1}^n X_j X_j^\top$, each outer product produces a matrix of size $k \times k$, yielding a cost of $\Oo(nk^2)$. Multiplying the resulting matrices requires $\Oo(dk^2)$ operations, and inversion of the $k \times k$ matrix incurs a cost of $\Oo(k^3)$. The overall computational complexity is therefore
\[
\Oo(ndk + nk^2 + k^3),
\]
where, for large $n$, the sum over $n$ outer products is typically the dominant term. The higher-order terms $\Oo(nk^2)$ and $\Oo(k^3)$ only become significant when $k$ is no longer negligible in comparison to $n$ and $d$. The complexity estimate is consistent with the empirical results presented in Figure~\ref{fig:BS_time}, which demonstrate a linear relationship between the number of dimensions $d$ and observed CPU time. In effect, the dominant $\Oo(ndk)$ term shows the scheme mitigates the curse of dimensionality, allowing high-dimensional problems to be tackled efficiently within this framework.
}



\subsubsection{Basket option}\label{sec:results_basket_BS}
We consider an equally weighted basket Call option with a payoff
\[
g_{\textup{basket}}(\Sb_T)\coloneqq \left(\frac1d\sum_{j=1}^d S^j_T - \mathscr K\right)^+,
\]
where $\mathscr K>0$ denotes the strike price. For simplicity, we consider $d=5$ and an ATM option with $\mathscr K\coloneqq \frac1d\sum_{j=1}^d S^j_0$ and set all $S_0^j=1$ for $j\in\{1,\dots,5\}$. The volatilities $\sigma_j$ are uniformly spaced between~$[0.05, 0.25]$ and the correlation matrix is randomly chosen as
 \[
 \boldsymbol \rho \coloneqq
\begin{bmatrix}
    1  & 0.84  & -0.51 & -0.70 & 0.15  \\
0.84  & 1  & -0.66 & -0.85 & 0.41  \\
-0.51 & -0.66 & 1  & 0.55  & -0.82 \\
-0.70 & -0.85 & 0.55  & 1  & -0.51 \\
0.15  & 0.41  & -0.82 & -0.51 & 1
\end{bmatrix},
 \]
so that $\Sigma\coloneqq \operatorname{diag}(\boldsymbol \sigma)\boldsymbol \rho \operatorname{diag}(\boldsymbol \sigma)$. Since the distribution of a sum of Lognormal is not known explicitly, no closed-form expression is available for the option price. Hence, the reference price is computed using Monte-Carlo with $100$ time steps and $400,000$ samples. In Table~\ref{tab:BS_basket_errors}, we compare our scheme with a classical MC estimator in terms of relative error for $K = 100$ hidden nodes. 

\begin{table}[hbt!]
    \centering
    \begin{tabular}{l|cccc}
        & Reference & PDE (with abs) & PDE (without abs) & MC \\
        \hline
        Price & 0.01624 & 0.01822 & 0.01613 & 0.01625 \\
        \hline
        Rel. error & - & -1.22e-01 & -6.71e-03 & -6.50e-04 \\
        \hline
        Time (seconds) & 12.8 & 9.7 & 9.8 & 0.3
    \end{tabular}
    \caption{Comparison of prices, relative errors and CPU time of the Monte-Carlo estimator, PDE-RWNN scheme \textit{with} and \textit{without} absorption (using same sampled MC paths and $K = 100$) and the reference price computed with 100 time steps and 400,000 samples.}
    \label{tab:BS_basket_errors}
\end{table}

\subsection{Example: Rough Bergomi}
The rough Bergomi model belongs to the recently developed class of rough stochastic volatility models, first proposed in~\cite{Bayer2015, Gatheral2018, guennoun2018asymptotic}, 
where the instantaneous variance is driven by a fractional Brownian motion (or more generally a continuous Gaussian process) with 
Hurst parameter $H < \frac{1}{2}$. 
As highlighted in many papers, they are able to capture many features of (Equities, Commodities,...) data,
and clearly seem to outperform most classical models, with fewer parameters.
Precise examples with real data can be found in~\cite{Bayer2015} for SPX options, in~\cite{gatheral2020quadratic, horvath2020volatility} for joint SPX-VIX options and in~\cite{bennedsen2022decoupling, Gatheral2018} for estimation on historical time series, the latter  being the state-of-the-art statistical analysis under the $\PP$-measure.
We consider here the price dynamics under~$\QQ$ with constant initial forward variance curve~$\xi_0(t)>0$ for all $t\in[0,T]$:
\begin{equation*}
\left\{
\begin{array}{r@{\;}l}
\displaystyle\frac{\D S_t}{S_t} &= r \D t + \sqrt{V_t} \D \left(\rho\D W^1_t + \sqrt{1-\rho^2} W^2_t\right), \\
V_t &= \displaystyle \xi_0(t)\Ee\left( \eta\sqrt{2H} \int_0^t (t-u)^{H-\frac{1}{2}} \D W_u^1 \right),
\end{array}
\right.
\end{equation*}
where $\eta>0$, $\rho \in (-1,1)$ and $H\in(0,1)$. 
The corresponding BSPDE reads
\begin{equation}\label{eq:roughBergomi}
-\D u(t,x) = \left[\frac{V_{t}}{2} \partial_x^{2} u(t, x)+\rho \sqrt{V_{t}} \partial_x \psi(t,x)-\frac{V_{t}}{2} \partial_x u(t,x)-r u(t,x)\right] \D t-\psi(t,x) \D W^1_{t},
\end{equation}
with terminal condition $u(T, x) = g_{\mathrm{call}}\left(\E^{x+r T}\right)$. 
While the existence of the solution  was only proven in the distributional sense~\cite{Bayer2022}, we nevertheless apply our RWNN~scheme.
To test Algorithm~\ref{alg:nonmarkovian}, both the price and the volatility processes are discretised according to the Hybrid scheme developed in~\cite{Bennedsen2017, McCrickerd2018}. 
We set the rBergomi parameters as $(H, \eta, \rho, r, T, S_0) = (0.3, 1.9, -0.7, 0.01, 1, 1)$ and choose the forward variance curve to be flat with $\xi_0(\cdot) = 0.235^2$.
Again, we are pricing an ATM vanilla Call option with $\mathscr K=S_0=1$. 
The number of discretisation steps is again $N=21$, the number of Monte-Carlo samples $n_{\mathrm{MC}}=50,000$ and the reservoir has $K\in\{10, 100, 1000\}$ nodes with connectivity $c=0.5$ in Section~\ref{sec:resutls_call_rB}. 

\subsubsection{Convergence rate}\label{sec:rB_convergence_rate}

As in Section~\ref{sec:BS_convergence_rate}, we conduct an empirical analysis of the convergence error from Corollary~\ref{cor:convergence} for the same ATM Call. 
To isolate the dependence on the number of nodes we fix $c=1$, $n_{\mathrm{MC}}=50,000$ and vary
\[
K \in \left\{ \floor{10^{1 + \frac{2(i - 1)}{9}}} : i \in 1, \dots, 10 \right\},
\]
logarithmically spaced points between 10 and 1000.
The reference price is computed by Monte-Carlo with~$100$ time steps and $800,000$ samples. 
The absorption scheme has been applied and the results are displayed in Figure~\ref{fig:rBconvergence}. 
In this section, the same random seed was used as in Section~\ref{sec:BS_convergence_rate}, to ensure consistent results across different simulations.

\begin{figure}[hbt!]
    \centering
    \includegraphics[width=0.75\textwidth]{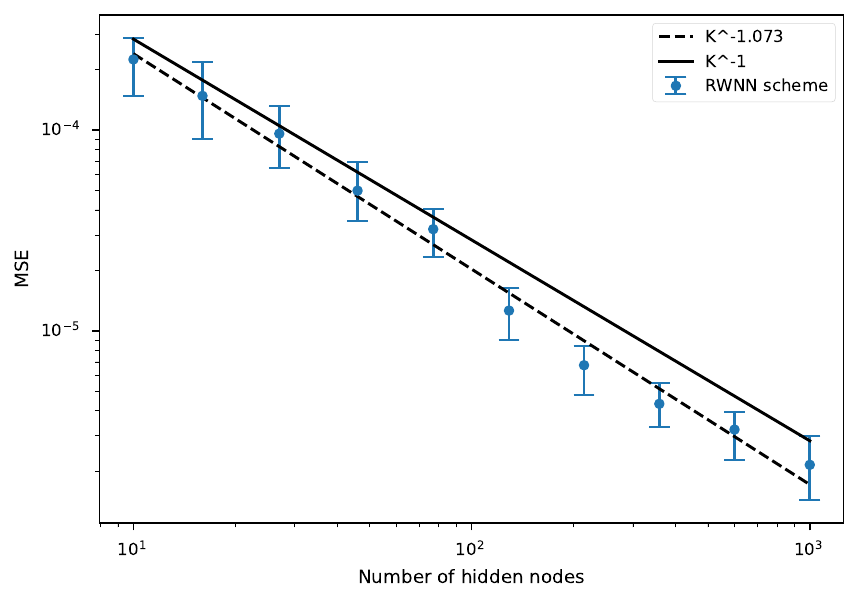}
    \caption{Empirical convergence of MSE under rBergomi in terms of  the number of hidden nodes. Error bars mark 0.1 and 0.9 quantiles of 20 separate runs of the algorithm. The slope coefficient of the dashed line is obtained through regression of the means of individual runs, while the solid line represents $1/K$ convergence and is shown as a reference.}
    \label{fig:rBconvergence}
\end{figure}

\subsubsection{ATM Call option}\label{sec:resutls_call_rB}

We now evaluate the performance of our PDE-RWNN method for option pricing in the rough Bergomi model using the market parameters listed above and compare the results to those obtained with the MC method over the same sample paths. We also investigate the effect of the absorption scheme (Table~\ref{tab:rB_vanilla_errors}) and find that, interestingly, despite keeping the paths positive, the absorption scheme adds noticeable bias. 
Nevertheless, the relative error of the proposed scheme with absorption 
is comparable to the results using regular artificial networks found in the literature~\cite[Table~1]{Bayer2022}, yet, our scheme learns much faster with orders of magnitudes faster training times.

\begin{table}[hbt!]
    \centering
    \begin{tabular}{l|c|c|c|c}
        & Reference & PDE (with abs) & PDE (without abs) & MC \\
        \hline
        Price & 0.07993 & 0.081924 & 0.07973 & 0.080310 \\
        \hline
        Rel. error & - & 24.9e-03 & 2.54e-03 & -4.73e-03 \\
        \hline
        Time (seconds) & 10.1 & 7.4 & 7.5 & 0.4
    \end{tabular}
    \caption{Prices, relative errors and CPU time of the Monte-Carlo estimator, PDE-RWNN scheme \textit{with} absorption, PDE-RWNN scheme \textit{without} absorption, both with $K = 100$ and same sampled MC paths. Reference price computed with 100 time steps and 800,000 samples.}
    \label{tab:rB_vanilla_errors}
\end{table}

\newpage
\bibliographystyle{siam}
\bibliography{biblio} 

\appendix

\section{Error bounds for RWNN}\label{apx:errorRWNN}

\begin{proposition}[Proposition 3 in~\cite{gonon2023approximation}]\label{prop:UATRWNN}
Suppose $\psi^{*}: \RR^{q} \rightarrow \RR$ can be represented as
$$
\psi^{*}(z)=\int_{\RR^{q}} \E^{\I\theta^\top z} g(\theta) \D \theta,
$$
for some complex-valued function~$g$ on $\RR^{q}$ and all $z \in \RR^{q}$ with $\|z\| \leq Q$. Assume that
$$
\int_{\RR^{q}} \max \left(1,\|\theta \|^{2 q+6}\right)|g(\theta)|^{2} \D \theta <\infty .
$$
For $R>0$, suppose the rows of the $K\times K$-valued random matrix~$\Am$ are iid Uniform on $\Bb_{R} \subset \RR^{q}$, that the entries of $\bm\in\RR^{K}$ are iid Uniform on $[-\max (Q R, 1), \max (Q R, 1)]$, 
that~$\Am$ and~$\bm$ are independent and let $\relu(x)\coloneqq \max (x, 0)$ on~$\RR$. 
Let the pair $(A,b)$ characterises a random basis (reservoir)~$\Phi$ and its corresponding network $\Psi\in\RWNN^{\relu}_K$ in the sense of Definition~\ref{def:RWNN}. 
Then, there exists an $\RR^K$-valued random variable~$\Theta$ and $C^{*}>0$ 
(given explicitly in~\cite[Equation~(33)]{gonon2023approximation}) such that
$$
\EE^\Phi\left[\left\|\Psi(Z;\Theta) - \psi^{*}(Z)\right\|^{2}\right] \leq \frac{C^{*}}{K},
$$
and for any $\delta \in(0,1)$, with probability $1-\delta$ the random neural network $\Psi(\cdot;\Theta)$ satisfies
$$
\left(\int_{\RR^{q}}\left\|\Psi(z;\Theta) - \psi^{*}(z)\right\|^{2} \mu_{Z}(\D z)\right)^{1 / 2} \leq \frac{\sqrt{C^{*}}}{\delta \sqrt{K}}.
$$
\end{proposition}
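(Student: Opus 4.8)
The plan is to recognise this as the random-feature (Barron-type) approximation bound and to follow the three-step strategy underlying \cite[Proposition~3]{gonon2023approximation}: first, represent $\psi^*$ on the ball $\{\|z\|\le Q\}$ as an integral of $\relu$-ridge functions against exactly the sampling law of a hidden unit $(\Am_k,b_k)$; second, observe that the random network is then an unbiased Monte-Carlo estimator of that integral and bound its variance by $C^*/K$; third, upgrade the mean bound to a high-probability bound by Markov's inequality.

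The core is the integral representation. Starting from $\psi^*(z)=\int_{\RR^q}\E^{\I\theta^\top z}g(\theta)\,\D\theta$, I would expand each frequency mode $z\mapsto\E^{\I\theta^\top z}$ (its real and imaginary parts, on $\{\|z\|\le Q\}$) as a superposition of functions $\relu(a^\top z+b)$. The mechanism is that the distributional second derivative of $\relu$ in the bias variable is a Dirac mass, so a double integration-by-parts identity $h(a^\top z)=\int h''(-b)\,\relu(a^\top z+b)\,\D b$ (plus an affine correction, absorbed as in \cite{gonon2023approximation}) lets one reproduce any sufficiently smooth one-dimensional ridge profile; applied to the one-dimensional mode $s\mapsto\E^{\I s}$ with $s=\theta^\top z$ and then integrated against $g(\theta)\,\D\theta$, this yields $\psi^*(z)=\int \kappa(a,b)\,\relu(a^\top z+b)\,\D a\,\D b$ with an explicit kernel $\kappa$ supported, for $\|z\|\le Q$, on $a\in\Bb_R$ and $b\in[-\max(QR,1),\max(QR,1)]$. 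Dividing by the (uniform) sampling density $\pi$ of a single hidden unit gives $\psi^*(z)=\EE_\pi[w(\Am_1,b_1)\,\relu(\Am_1^\top z+b_1)]$. The quantitative heart is the bound $\int|w|^2\,\D\pi<\infty$ with a controllable constant, and this is exactly where the hypothesis $\int\max(1,\|\theta\|^{2q+6})|g(\theta)|^2\,\D\theta<\infty$ is used: the powers of $\|\theta\|$ absorb the $\theta$-dependence of $\kappa$, the radial Jacobian of the change of variables on $\Bb_R\subset\RR^q$, and the normalisation of the double-antiderivative kernel. Together with $|\relu(\Am_1^\top z+b_1)|\le 2\max(QR,1)$ uniformly over the relevant ranges, this produces the constant $C^*$ of \cite[Eq.~(33)]{gonon2023approximation}.

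With the representation in hand, set $\Theta\coloneqq\tfrac1K\bigl(w(\Am_1,b_1),\dots,w(\Am_K,b_K)\bigr)\in\RR^K$, so that $\Psi(z;\Theta)=\sum_{k=1}^K\Theta_k\,\relu(\Am_k^\top z+b_k)$ is, for each fixed $z$ with $\|z\|\le Q$, an average of $K$ i.i.d.\ copies of $w(\Am_1,b_1)\relu(\Am_1^\top z+b_1)$. Hence it is unbiased for $\psi^*(z)$ and $\EE^\Phi[|\Psi(z;\Theta)-\psi^*(z)|^2]=\tfrac1K\Var_\pi(\cdot)\le\tfrac1K\EE_\pi\bigl[|w(\Am_1,b_1)\relu(\Am_1^\top z+b_1)|^2\bigr]\le C^*/K$, uniformly in $z$; integrating against $\mu_Z$ (Tonelli, the integrand being nonnegative) gives the first displayed inequality. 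For the second, let $W\coloneqq\int_{\RR^q}\|\Psi(z;\Theta)-\psi^*(z)\|^2\,\mu_Z(\D z)$, a nonnegative random variable with $\EE^\Phi[W]\le C^*/K$; Markov's inequality gives $\PP(W>C^*/(K\delta))\le\delta$ for $\delta\in(0,1)$, so with probability at least $1-\delta$ one has $\sqrt W\le\sqrt{C^*/(K\delta)}\le\sqrt{C^*}/(\delta\sqrt K)$, using $\delta^{-1/2}\le\delta^{-1}$.

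The main obstacle is the integral representation step: producing a $\relu$-ridge superposition whose weight $w$ lives against precisely the uniform law on $\Bb_R\times[-\max(QR,1),\max(QR,1)]$ while simultaneously controlling $\int|w|^2\,\D\pi$ by a constant of the stated form. This is a careful Fourier-analytic computation — the content of \cite[Proposition~3]{gonon2023approximation} — and I would invoke it rather than re-derive every constant; the subsequent Monte-Carlo and Markov steps are routine.
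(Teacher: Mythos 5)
Your outline is correct and matches how the paper itself treats this statement: the proposition is imported verbatim from Gonon--Grigoryeva--Ortega and the paper offers no proof beyond the citation. Your reconstruction of the underlying argument --- ReLU ridge representation of the Fourier modes with the moment condition $\int\max(1,\|\theta\|^{2q+6})|g(\theta)|^2\,\D\theta<\infty$ controlling the importance weights, the Monte-Carlo variance bound of order $C^*/K$, and Markov's inequality (which in your version even yields the slightly stronger $\sqrt{C^*/(K\delta)}$ before relaxing to $\sqrt{C^*}/(\delta\sqrt{K})$) --- is the standard route, and deferring the constant-chasing in the integral-representation step to the cited source is exactly what the paper does.
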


\begin{lemma}\label{lem:sol_as_RWNN}
For any $t_i\in\pi$, 
there exists~$g$ as in Proposition~\ref{prop:UATRWNN}
such that solutions $\mathfrak{f}\in\{u(t_i, \cdot), \psi(t_i, \cdot)\}$ to the BSPDE~\eqref{eq:BSPDEGeneral} can be represented as
$$
\mathfrak{f}(z)=\int_{\RR^q} \E^{\I\theta^\top z} g(\theta) \D \theta,
\qquad\text{ for all }z\in\RR^q.
$$
\end{lemma}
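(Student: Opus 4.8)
The plan is to exhibit, for each $\ff\in\{u(t_i,\cdot),\psi(t_i,\cdot)\}$ with $t_i<T$ (the only case actually invoked, in Lemma~\ref{lem:step_RWWN_bound}), the density $g$ as a constant multiple of the Fourier transform of a \emph{localised} copy of $\ff$, and then to read off the integrability hypothesis of Proposition~\ref{prop:UATRWNN} from Sobolev regularity via Plancherel's theorem. Since that proposition only needs the representation on the ball $\{\|z\|\le Q\}$, I would first fix a cutoff $\chi\in C_c^\infty(\RR^q)$ with $\chi\equiv 1$ on $\overline{\Bb_Q}$ and $\operatorname{supp}\chi\subset\Bb_{2Q}$, set $\ff_\chi\coloneqq\chi\,\ff$, and define $g\coloneqq(2\pi)^{-q}\,\widehat{\ff_\chi}$. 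Once $g\in L^1(\RR^q)$ (checked below), Fourier inversion gives $\ff(z)=\ff_\chi(z)=\int_{\RR^q}\E^{\I\theta^\top z}g(\theta)\,\D\theta$ for all $\|z\|\le Q$, which is exactly the claimed representation. The cutoff is genuinely needed here because a Call-type value function grows at infinity and is therefore not globally $L^2$ — this is precisely why Proposition~\ref{prop:UATRWNN} is phrased with the $\|z\|\le Q$ caveat.

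Next I would verify $\int_{\RR^q}\max(1,\|\theta\|^{2q+6})|g(\theta)|^2\,\D\theta<\infty$. Splitting the integral over $\{\|\theta\|\le1\}$ and $\{\|\theta\|>1\}$, the first piece is at most $\|g\|_{L^2}^2=(2\pi)^{-q}\|\ff_\chi\|_{L^2}^2<\infty$, and on the second we bound $\|\theta\|^{2q+6}\le (1+\|\theta\|^2)^{q+3}$, so that, by Plancherel, the second piece is comparable to $\sum_{|\boldsymbol\alpha|\le q+3}\|\partial_{\xx}^{\boldsymbol\alpha}\ff_\chi\|_{L^2}^2$. By the Leibniz rule, with $\chi$ smooth and compactly supported, this is finite as soon as $\ff\in\Ww^{q+3,2}$ on a neighbourhood of $\overline{\Bb_{2Q}}$; and the $L^1$-membership of $g$ used for inversion then comes for free, since such $g$ decays like $\|\theta\|^{-(q+3)}$ in $L^2$ and $q+3>q/2$. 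Everything in this step is routine Fourier bookkeeping.

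The one real obstacle is the required \emph{order} of regularity: Assumption~\ref{ass:RWNNscheme}(i) only supplies $u,\psi\in\Ww^{3,2}$, whereas the argument above wants $\Ww^{q+3,2}_{\mathrm{loc}}$ near $\overline{\Bb_{2Q}}$ (for the non-Markovian scheme $q=1$, so $\Ww^{4,2}_{\mathrm{loc}}$). I would close this gap not by strengthening the standing hypothesis but by an interior parabolic-smoothing argument valid for $t_i<T$: since the variance process~$V$ is adapted to $W^1$ alone, conditioning on the whole $W^1$-path leaves $X_T$ conditionally Gaussian given $\Ff_{t_i}$, so $u(t_i,\cdot)$ is an $\Ff_{t_i}$-average of Gaussian convolutions of $g(\E^\cdot)$ and hence lies in $C^\infty$ — in particular in $\Ww^{q+3,2}_{\mathrm{loc}}$ — while $\psi(t_i,\cdot)$, being determined by $u$, $\Dx u(t_i,\cdot)$ and the martingale integrand through the FBSDE relations of Section~\ref{sec:non-Markovian_case}, inherits the same interior regularity from the parabolic BSPDE~\eqref{eq:BSPDEGeneral}. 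I expect the delicate point to be \emph{uniformity} of the resulting bounds: the Gaussian convolution has conditional variance $\rho_2^2\int_{t_i}^{T}V_s\,\D s$, which degenerates both when $\rho_2=0$ and, in the rough Bergomi application, on the event that~$V$ stays close to~$0$ on $[t_i,T]$. Restricting to the compact~$\Qq$ and to strictly interior times, together with the moment and derivative estimates already in hand (Assumption~\ref{ass:discretisationbound} and Lemma~\ref{lem:derivativeBounds}), is what I would rely on to push the weighted $L^2$ bound through.
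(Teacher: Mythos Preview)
Your cutoff-and-Plancherel route is genuinely different from, and more careful than, the paper's own argument. The paper does \emph{not} localise: it asserts that a sufficient condition for the Fourier representation (with $g$ as in Proposition~\ref{prop:UATRWNN}) is that $\ff\in L^1(\RR^q)$ have integrable Fourier transform and lie in $\Ww^{2,2}(\RR^q)$, and then for $q=1$ invokes $\ff\in\Ww^{3,2}(\RR)$ from Assumption~\ref{ass:RWNNscheme}(i) together with standard Sobolev/Fourier estimates to conclude $\widehat{\ff}\in L^1$. Your observation that Call-type value functions are not globally in $L^1$ or $L^2$ is well taken --- the paper's global-$L^1$ hypothesis is never verified for the running examples --- so your cutoff is a genuine improvement in rigour, and your reading of Proposition~\ref{prop:UATRWNN} (the representation is only needed on $\{\|z\|\le Q\}$) is the right way to exploit it.

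You have also correctly located a gap that the paper's sketch leaves open: for $q=1$ the weighted condition $\int\max(1,|\theta|^{2q+6})|g(\theta)|^2\,\D\theta<\infty$ requires $\Ww^{4,2}$-type control of the (localised) $\ff$, whereas Assumption~\ref{ass:RWNNscheme}(i) only supplies $\Ww^{3,2}$; the paper's proof stops at $\widehat{\ff}\in L^1$ and does not address this discrepancy. Your parabolic-smoothing idea for $u(t_i,\cdot)$ is plausible as a heuristic, but the final paragraph overreaches: the degeneracy when $\rho_2=0$ is real (no independent Gaussian noise remains after conditioning on $W^1$), the regularity transfer to $\psi(t_i,\cdot)$ has no clean probabilistic handle, and invoking Assumption~\ref{ass:discretisationbound} and Lemma~\ref{lem:derivativeBounds} here is misplaced --- the former concerns the \emph{discretised} variance and the latter bounds derivatives of the RWNN approximators, not of $(u,\psi)$. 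In short, your argument is sounder than the paper's up to and including the identification of the regularity gap; a clean resolution would be simply to strengthen Assumption~\ref{ass:RWNNscheme}(i) to $u,\psi\in\Ww^{q+3,2}$ rather than to attempt interior smoothing.
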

\begin{proof}

A sufficient condition for this is that $\mathfrak{f} \in L^{1}\left(\mathbb{R}^{q}\right)$
has an integrable Fourier transform and belongs to the Sobolev space $\Ww^{2,2}\left(\mathbb{R}^{q}\right)$~\cite[Theorem~6.1]{Folland1995}. 
In our case, for $q=1$ and with $\mathfrak{f}\in \Ww^{3,2}(\RR^1)$ as enforced by Assumption~\ref{ass:RWNNscheme}~(i), \cite[Theorem~9.17]{Folland2013} implies that $\|\widehat{(\Dr^{\boldsymbol\alpha} \mathfrak{f})}\|_{(1)} \leq C \|\mathfrak{f}\|_{(3)}$ for all multi-indices $|\boldsymbol\alpha|\leq 2$ and $\|\mathfrak f\|_{(s)}=[\int|\widehat{\mathfrak f}(\xi)|^2\left(1+|\xi|^2\right)^s \D \xi]^{1 / 2}$ for~$s\in\RR$ denotes the Sobolev norm. Thus $\widehat{ \mathfrak{f}} \in L^1(\RR^1)$
by~\cite[Corollary~2.52]{Folland2013}.
\end{proof}

\section{Proof of Theorem~\ref{thm:convergence}}
\label{apx:technical_proofs}
{\small We prove the theorem in three steps. 
First, we derive an estimate for the $L^2$-distance of~$\widehat{\Vv}_{t_i}$ to the discretised~$Y_{t_i}$.
Next, we use this to estimate the $Y$-component and finally for the $Z$-components. 
In the first part, for convenience, we do not explicitly denote the conditionality of expectation on the realisation of random bases of the corresponding RWNNs, but remark that the expectation should be understood as such whenever this is the case. For convenience, we introduce the error notations:
\begin{equation}\label{eq:ErrorNotations}
\begin{array}{r@{\;}ll}
\Ef\left[A, B\right] & \coloneqq \EE\left[\left|A-B\right|^2\right], \\
    \Df_i\left[A, B\right] & \coloneqq \EE_i\left[\left|A - B\right|\right], \\
    \If_i\left[A(\cdot), B(t_i)\right] & \displaystyle \coloneqq \EE\left[\int_{t_i}^{t_{i+1}}\left|A(t) - B(t_i)\right|^2 \D t \right],
\end{array}
\end{equation}
where the error rates~$\Df$ and~$\If$ are implicitly related to the partition~$\pi$ through the index $i\in\{0,\dots,N-1\}$.
We further denote
$f_{t} \coloneqq  f\left(t, X_t, Y_t, Z^1_t, Z^2_t\right)$
and note that the constant $C>0$ might change from line to line.

\paragraph{\textbf{Part I}} We start by showing for each $i\in\{0, \dots, N-1\}$,
\begin{equation}\label{eq:convergence_part1_result}
\ErrYiVhat
\leq (1+C|\pi|) \ErrYiUfip
 + C |\pi| \EE\left[\int_{t_i}^{t_{i+1}}f_{t}^2 \D t\right] 
 + C \sum_{k=1}^{2}\ErrIntZkibar + C |\pi|\omega(|\pi|),
\end{equation}
where $\omega$ is the modulus function from Assumption~\ref{ass:RWNNscheme}. 
By writing the SPDE as the corresponding BSPDE as in~\eqref{eq:BSDEGeneral} and using~\eqref{eq:auxilaryProcesses}, we obtain
\[
Y_{t_i}-\widehat{\Vv}_{t_i} = 
\EE_i\left[Y_{t_{i+1}}-\widehat{\Uf}_{i+1}\right]+\EE_i\left[\int_{t_i}^{t_{i+1}} f\left(t, \E^{X_t}, Y_t, Z^1_t, Z^2_t\right)-f\left(t_i, \E^{X^\pi_{t_i}}, \widehat{\Vv}_{t_i}, \ZowO_{t_{i}}, \ZowT_{t_{i}}\right) \D t\right].
\]
Then, Young's inequality~\eqref{eq:Young} with $\chi = \gamma\Deli$ gives
\begin{equation}
\begin{aligned}
&\ErrYiVhat \\
& \leq \EE\left\{\left(1+\gamma \Deli \right)
\ErriYiUfip^{2}
 + \left(1+\frac{1}{\gamma \Deli }\right)\EE_i\left[\int_{t_i}^{t_{i+1}}\left\{
f_{t}-f\left(t_i, \E^{X^\pi_{t_i}}, \widehat{\Vv}_{t_i}, \ZowO_{t_{i}}, \ZowT_{t_{i}}\right)\right\} \D t\right]^2\right\},
\end{aligned}
\end{equation}
and Cauchy-Schwarz, Assumption~\ref{ass:RWNNscheme} and~\eqref{eq:fwd_process_estimation} imply
\begin{align}
\begin{split}
\ErrYiVhat &\leq \left(1+\gamma \Deli \right) \EE\left[\ErriYiUfip^2\right] \\ 
& + 5\left(1+\frac{1}{\gamma \Deli }\right) L_f^2 \Deli \Bigg\{C |\pi| \omega(|\pi|)
+ \ErrIntYtVihat + \sum_{k=1}^{2}\ErrIntZkihat\Bigg\}.
\end{split}
\end{align}
The standard inequality $(a+b)^2\leq 2(|a|^2+|b|^2)$ and the $L^2$-regularity of~$Y$ in~\eqref{eq:L2regularity} yield
\begin{align*}
\ErrIntYtVihat 
 & = \EE\left[\int_{t_i}^{t_{i+1}}|Y_t - \widehat \Vv_{t_i}|^2\D t\right] = \EE\left[\int_{t_i}^{t_{i+1}}|Y_t-Y_{t_i}+Y_{t_i}-\widehat\Vv_{t_i}|^2 \D t\right]\\
  & \leq 2 \EE\left[\int_{t_i}^{t_{i+1}}
\left\{|Y_t-Y_{t_i}|^2 + \left|Y_{t_i}-\widehat\Vv_{t_i}\right|^2\right\} \D t\right] \\
 & \leq 2|\pi|^2 + 2\Deli\EE\left[|Y_{t_i}-\widehat\Vv_{t_i}|^2\right]
 = 2|\pi|^2 + 2\Deli\ErrYiVhat,
\end{align*}
so that after rearranging the constant term $\left(1+\frac{1}{\gamma\Deli}\right)L_f^2\Deli=(1+\gamma\Deli)\frac{L_f^2}\gamma$,
we obtain
\begin{equation}\label{eq:convergence_part1_0}
\begin{aligned}
\ErrYiVhat &\leq \left(1+\gamma \Deli \right) \EE\left[\ErriYiUfip^{2}\right] \\
 & \qquad + 5\left(1+\gamma \Deli \right) \frac{L_f^2}{\gamma}\left\{
C |\pi| \omega(|\pi|) + 2\Deli  \ErrYiVhat
+ \sum_{k=1}^{2}\ErrIntZkihat\right\}.
\end{aligned}
\end{equation}
Since $\overline{Z}^k$ are defined as $L^2$-projections of $Z$ (see~\eqref{eq:Z_error}), the last term reads
\begin{align}\label{eq:convergence_part1_1}
\begin{split}
& \ErrYiVhat \leq\left(1+\gamma \Deli \right) \EE\left[\ErriYiUfip^2\right] \\
 & \qquad + 5\left(1+\gamma \Deli \right) \frac{L_f^2}{\gamma}\Bigg\{C |\pi| \omega(|\pi|) + 2 \Deli  \ErrYiVhat 
+ \sum_{k=1}^2\left(\ErrIntZkibar + \Deli \ErrZkibarhat \right)\Bigg\}.
\end{split}
\end{align}
The rightmost term can be further expanded:
integrating the BSDE~\eqref{eq:BSDEGeneral} over $[t_i, t_{i+1}]$, multiplying it by $\Delta^{W^k}_{i}$ for $k\in\{1,2\}$ separately and using the definitions in~\eqref{eq:auxilaryProcesses} give
\begin{align}
\Deli \left[\overline{Z}^k_{t_i}-\Zowk_{t_{i}}\right] &= \EE_i\left[\Delta^{W^k}_{i}\left\{Y_{t_{i+1}}-\widehat{\Uf}_{i+1}-\EE_i\left[Y_{t_{i+1}}-\widehat{\Uf}_{i+1}\right]\right\}\right] 
+ \EE_i\left[\Delta^{W^k}_{i} \int_{t_i}^{t_{i+1}} f_{t}\D t\right].
\end{align}
Next, taking the expectation of the square and using the H{\"o}lder inequality yield
\begin{align*}
    & \frac{\Deli^2}{2}
    \EE_i\left[\left|\overline{Z}^k_{t_i} - \Zowk_{t_{i}}\right|^2\right] \\
    &\leq \left|\EE_i\left[\Delta^{W^k}_{i}\left\{Y_{t_{i+1}}-\widehat{\Uf}_{i+1}-\EE_i\left[Y_{t_{i+1}} - \widehat{\Uf}_{i+1}\right]\right\}\right]\right|^2 + \left|\EE_i\left[\Delta^{W^k}_{i} \int_{t_i}^{t_{i+1}} f_{t}\D t\right]\right|^2 \\
    &\leq \EE_i\left[\left|\Delta^{W^k}_{i}\right|^2\right] \EE_i\left[\left| Y_{t_{i+1}}-\widehat{\Uf}_{i+1} - \EE_i\left[Y_{t_{i+1}} - \widehat{\Uf}_{i+1}\right]\right|^2\right] + \EE_i\left[ \left|\Delta^{W^k}_{i}\right|^2\right]\EE_i\left[ \left|\int_{t_i}^{t_{i+1}} f_{t}\D t\right|^2\right] \\
    &= \Deli \left\{\EE_i\left[\left| Y_{t_{i+1}}-\widehat{\Uf}_{i+1}\right|^2\right] -2\EE_i\left[\left(Y_{t_{i+1}}-\widehat{\Uf}_{i+1}\right)\EE_i\left[Y_{t_{i+1}}-\widehat{\Uf}_{i+1}\right]\right] + \left|\EE_i\left[ Y_{t_{i+1}}-\widehat{\Uf}_{i+1}\right]\right|^2  + \EE_i\left[ \left|\int_{t_i}^{t_{i+1}} f_{t}\D t\right|^2\right]\right\} \\
    &\leq \Deli \left\{\EE_i\left[\left| Y_{t_{i+1}}-\widehat{\Uf}_{i+1}\right|^2\right] - \left|\EE_i\left[ Y_{t_{i+1}}-\widehat{\Uf}_{i+1}\right]\right|^2 + \EE_i\left[\int_{t_i}^{t_{i+1}}\D t  \int_{t_i}^{t_{i+1}} \left| f_{t}\right|^2\D t\right]\right\}.
\end{align*}
Finally, by the law of iterated conditional expectations
\begin{equation}\label{eq:convergence_part1_2}
\frac{\Deli}{2}
\EE\left[\left|\overline{Z}^k_{t_i} - \Zowk_{t_{i}}\right|^2\right]
\leq \ErrYiUfip
- \EE\left[\ErriYiUfip^{2}\right]
 + \Deli  \EE\left[\int_{t_i}^{t_{i+1}}
f_{t}^2 \D t\right],
\end{equation}
which can then be used in~\eqref{eq:convergence_part1_1} to obtain
\begin{align}
\begin{split}
\ErrYiVhat
& \leq\left(1+\gamma \Deli \right) \EE\left[\ErriYiUfip^2\right] 
 + 5\left(1+\gamma \Deli \right) \frac{L_f^2}{\gamma}\Bigg\{C |\pi| \omega(|\pi|) + 2\Deli  \ErrYiVhat
 +\sum_{k=1}^2\ErrIntZkibar\\
&\qquad + 4\left(\ErrYiUfip - \EE\left[\ErriYiUfip^{2}\right]\right)
 + 4 \Deli  \EE\left[\int_{t_i}^{t_{i+1}}f_{t}^2 \D t\right]\Bigg\} \\
& \leq [1+20 L_f^2 \Deli ] \ErrYiUfip
 + C \left\{|\pi| \omega(|\pi|)
 + \Deli \ErrYiVhat
 + \sum_{k=1}^{2}\ErrIntZkibar 
 + \Deli  \EE\left[\int_{t_i}^{t_{i+1}}f_{t}^2 \D t\right]\right\},
\end{split}
\end{align}
with $\gamma = 20 L_f^2$ in the second inequality, concluding Part~I by letting~$|\pi|$ sufficiently small.

\paragraph{\textbf{Part II}} 
We now prove an estimate for the $Y$-component and show that
\begin{align}\label{eq:convergence_part2_result}
\max_{i\in\{0,\dots,N-1\}} 
\ErrYiUfi \leq C\Bigg\{\omega(|\pi|) + \EE\left[\left|g(X_T)-g(X^\pi_T)\right|^2\right]+\sum_{k=1}^2\eps^{Z^k}(\pi)+\frac{C^{*}}{K}N + M|\pi|^2\Bigg\}.
\end{align}
With Young's inequality of the form $(a+b)^2\geq (1-|\pi|)a^2 - \frac1{|\pi|} b^2$,
we have
\begin{equation}\label{eq:convergence_part2_0}
\ErrYiVhat = \EE\left[\left|Y_{t_i} - \widehat{\Uf}_i + \widehat{\Uf}_i - \widehat{\Vv}_{t_i}\right|^2\right] 
\geq (1-|\pi|)\ErrYiUfi - \frac1{|\pi|} \ErrVhatiUfi.
\end{equation}
Since $\frac1{|\pi|}\leq \frac{N}{T}=CN$,
taking small enough~$|\pi|$, 
\begin{equation}\label{eq:convergence_part2_1}
\ErrYiUfi \leq \frac{\ErrYiVhat + CN \ErrVhatiUfi}{(1-|\pi|)}
    \leq C \left\{
\ErrYiVhat + CN \ErrVhatiUfi\right\}
\end{equation}
and combining~\eqref{eq:convergence_part1_result} from Part~I with~\eqref{eq:convergence_part2_1},
\begin{align*}
 \ErrYiUfi & \leq [1+C|\pi|] \ErrYiUfip
 + C\left\{|\pi| \EE\left[\int_{t_i}^{t_{i+1}}f_{t}^2 \D t\right]
  + \sum_{k=1}^{2}\ErrIntZkibar
  + |\pi|\omega(|\pi|) + N\ErrVhatiUfi
  \right\}.
\end{align*}
After noting that $Y_{t_N}= g(X_T)$ and $\widehat{\Uf}_N= g(X^\pi_T)$, recalling the definition of $\eps^{Z^k}(\pi)$ from~\eqref{eq:Z_error} and the $L^2$-integrability of $f$ in~\eqref{eq:L2int_f}, 
straightforward induction implies
\begin{align}
\max_{i\in\{0, \dots, N-1\}}\ErrYiUfi
 &\leq C\Bigg\{\omega(|\pi|) + |\pi| + \EE\left[\left|g(X_t)-g(X^\pi_T)\right|^2\right] + \sum_{k=1}^2\eps^{Z_k}(\pi)
 + N\sum_{i=0}^{N-1}\ErrVhatiUfi\Bigg\}.
\end{align}
Finally, this in conjunction with Lemma~\ref{lem:step_RWWN_bound}, safely ignoring the $Z$-component on the left-hand side, since it is positive, gives the desired result~\eqref{eq:convergence_part2_result}.

\paragraph{\textbf{Part III}} 
Finally, we prove the following bound on the~$Z$ components:
\begin{equation}\label{eq:convergence_part3_result}
 \EE\left[\sum_{i=0}^{N-1} \int_{t_i}^{t_{i+1}}\sum_{k=1}^2\left|Z_t^k-\widehat{\mathcal{Z}}^k_i\left(X^\pi_{t_i}\right)\right|^2 \D t\right] \leq C\Bigg\{\omega(|\pi|) + \EE\left[\left|g(X_T)-g(X^\pi_T)\right|^2\right]+\sum_{k=1}^2\eps^{Z^k}(\pi)+\frac{C^{*}}{K}N + M|\pi|^2\Bigg\}.
\end{equation}
Since~$\overline{Z}^k$ are $L^2$-projections of~$Z$,
\begin{align*}
\ErrIntZkihat 
&= \EE\left[\int_{t_i}^{t_{i+1}}\left|Z^k_t - \overline{Z}^{k}_{t_{i}} + \overline{Z}^{k}_{t_{i}} - \Zowk_{t_{i}}\right|^2\D t\right] \\
 &= \EE\left[ \int_{t_i}^{t_{i+1}}\left\{
 \left|Z^k_t - \overline{Z}^{k}_{t_{i}}\right|^2 + \left|\overline{Z}^{k}_{t_{i}} - \Zowk_{t_{i}}\right|^2 +
 2
 \left|\overline{Z}^{k}_{t_{i}} - \Zowk_{t_{i}}\right|
 \left|Z^k_t - \overline{Z}^{k}_{t_{i}}\right|
 \right\}\D t \right]\\
 &= \ErrIntZkibar + \Deli \ErrZkibarhat + 2\EE\left[\left|\overline{Z}^{k}_{t_{i}} - \Zowk_{t_{i}}\right| \int_{t_i}^{t_{i+1}}\left|Z^k_t - \frac{1}{\Deli} \EE_{i}\left[\int_{t_{i}}^{t_{i+1}} Z^k_{s} \D s\right]\right| \D t \right],
\end{align*}
and the mixed term is cancelled by the tower property. 
Using~\eqref{eq:convergence_part1_2} below yields
\begin{align*}
\ErrIntZkihat
& =\ErrIntZkibar + \Deli \ErrZkibarhat \\
&\leq \ErrIntZkibar +  2 \left\{\ErrYiUfip
 - \EE\left[\ErriYiUfip\right]\right\}
 + 2 |\pi| \EE\left[\int_{t_i}^{t_{i+1}}f_{t}^2 \D t\right],
\end{align*}
for $k\in\{1,2\}$. 
Summing over $i\in\{0,\dots,N-1\}$ together with~\eqref{eq:L2int_f} implies
\begin{align}\label{eq:convergence_part3_0}
\ErrIntZkihat \leq \eps^{Z^k}(\pi) & + 2\EE\left[\left|g(X_T)-g(X^\pi_T)\right|^2\right]\\
&  + 2\sum_{i=0}^{N-1}\left\{\ErrYiUfi - \EE\left[\ErriYiUfip^{2}\right]\right\} + C|\pi|.
\end{align}
The summation index was changed in the first term in the curly braces to apply the terminal conditions $Y_{t_N}\coloneqq g(X_T)$ and $\widehat{\Uf}_N(X^\pi_{t_N})\coloneqq g(X^\pi_T)$.
Rearranging~\eqref{eq:convergence_part2_0} into 
$
\ErrYiUfi
\leq \frac1{|\pi|(1-|\pi|)} \ErrVhatiUfi + \frac1{1-|\pi|}\ErrYiVhat$ and combining it with~\eqref{eq:convergence_part1_0} give
\begin{equation}
\begin{aligned}
2\left\{\ErrYiUfi - \EE\left[\ErriYiUfip^{2}\right]\right\}
&\leq \frac{3 \ErrVhatiUfi}{|\pi|(1-|\pi|)} + \frac{2}{1-|\pi|}\Bigg\{\left(1+\gamma |\pi|\right) \EE\left[\ErriYiUfip^{2}\right] \\ 
& \quad + \frac{5\left(1+\gamma |\pi|\right) L_f^2}{\gamma}\left[C |\pi|\omega(|\pi|) +2|\pi| \ErrYiVhat + \sum_{k=1}^{2}\ErrIntZkihat\right]\Bigg\}.
\end{aligned}
\end{equation}
Plugging this back into~\eqref{eq:convergence_part3_0} gives
\begin{equation}
\begin{aligned}
&\ErrIntZkihat \leq \eps^{Z^k}(\pi) + 2\EE\left[\left|g(X_T)-g(X^\pi_T)\right|^2\right] \\ 
& \qquad + \sum_{i=0}^{N-1} \frac{3}{|\pi|(1-|\pi|)} \ErrVhatiUfi +
\sum_{i=0}^{N-1}\Bigg\{2\frac{1+\gamma |\pi|}{1-|\pi|} \EE\left[\ErriYiUfip\right] \\ 
& \quad + \frac{\left(1+\gamma |\pi|\right)}{1-|\pi|} \frac{10 L_f^2}{\gamma}\left(C \omega(|\pi|)|\pi| +2|\pi| \ErrYiVhat + \sum_{k=1}^{2}\ErrIntZkihat\right)\Bigg\} + C|\pi|.
\end{aligned}
\end{equation}
Furthermore, choose $\gamma=50L_f^2$ so that $\frac{\left(1+\gamma |\pi|\right)}{1-|\pi|} \frac{10 L_f^2}{\gamma}\leq\frac14$ for small $|\pi|$,
\begin{equation}
\begin{aligned}
    \frac{1}{2} \sum_{k=1}^2 \ErrIntZkihat 
    & \leq \sum_{k=1}^2\eps^{Z^k}(\pi) + C \Bigg\{\max _{i\in\{0, \dots, N\}} \ErrYiUfi + \omega(|\pi|) + |\pi| + \EE\left[\left|g\left(X_T\right)-g\left(X^\pi_T\right)\right|^2\right] \\
    & \qquad + |\pi| \sum_{i=0}^{N-1} \ErrYiVhat + N \sum_{i=0}^{N-1} \ErrVhatiUfi \Bigg\},
\end{aligned}
\end{equation}
in conjunction with Part~I~\eqref{eq:convergence_part1_result},
\begin{equation}
\begin{aligned}
    &\frac{1}{2} \sum_{k=1}^2 \ErrIntZkihat 
    \leq \sum_{k=1}^2\eps^{Z^k}(\pi) + C\Bigg\{\max _{i\in\{0, \dots, N\}} \ErrYiUfi + \omega(|\pi|) + |\pi| + \EE\left[\left|g\left(X_T\right)-g\left(X^\pi_T\right)\right|^2\right] \\
    & \qquad + |\pi| \sum_{i=0}^{N-1}\bigg\{C|\pi|\omega(|\pi|) + C \sum_{k=1}^2\ErrIntZkibar + (1+C|\pi|) \ErrYiUfip + |\pi| \EE\left[\int_{t_i}^{t_{i+1}}f_{t}^2 \D t\right]\bigg\} \\
    & \qquad + N \sum_{i=0}^{N-1} \ErrVhatiUfi \Bigg\},
\end{aligned}
\end{equation}
and $L^2$-integrability $f$ in~\eqref{eq:L2int_f}, Lemma~\ref{lem:step_RWWN_bound}, Part~II~\eqref{eq:convergence_part2_result} gives
\begin{equation}
    \frac{1}{2} \sum_{k=1}^2 \ErrIntZkihat \leq C\left\{\sum_{k=1}^2\eps^{Z^k}(\pi) + \omega(|\pi|)+|\pi|+\EE\left[\left|g\left(X_T\right)-g\left(X^\pi_T\right)\right|^2\right] + \frac{C^{*}}{K}N + M|\pi|^2\right\}.\label{eq:convergence_part3_1}
\end{equation}
Ultimately, since~$\overline{Z}$ is an $L^2$-projection, we have the relation, for $k\in\{1,2\}$,
\begin{equation}
\EE\left[ \int_{t_i}^{t_{i+1}}\left|Z_t^k-\widehat{\mathcal{Z}}^k_i\left(X^\pi_{t_i}\right)\right|^2 \D t\right] \leq 2\ErrIntZkihat + 2\Deli  \EE\left[ \int_{t_i}^{t_{i+1}}\left|\Zowk_{t_{i}}-\widehat{\mathcal{Z}}^k_i\left(X^\pi_{t_i}\right)\right|^2 \D t\right].
\end{equation}
Then Lemma~\ref{lem:step_RWWN_bound} applied over~$\Qq$ and summing over $i\in\{0,\dots,N-1\}$ yield Part~III.
}

\section{Technical Lemmata}\label{apx:technical_lemmata}

\begin{lemma}\label{lem:X_growth_ineq}
    Under Assumptions~\ref{ass:volatilityProcess}~and~\ref{ass:RWNNscheme}(ii), the solution to the SDE in~\eqref{eq:Xdynamics} starting at $x_0>0$ satisfies the following growth inequality:
    $$
    \EE\left[\sup_{0 \leq t \leq T} \left|X_t^{0,x_0}\right|^2\right] 
    \leq C\left(1 + |x_0|^2\right),
    \quad\text{for some }C>0.
    $$
\end{lemma}
\begin{proof}
For simplicity, we write $X = X^{0,x_0}$ in the proof.
For any $t\geq 0$, we can decompose~$X_t$ in three parts:
    \[
    X_t = x_0 + \underbrace{\int_0^t \left(r - \frac{V_s}{2}\right) \D s}_{D_t} + \underbrace{\int_0^t \sqrt{V_s}
    \left(\rho_1 \D W_s^1 + \rho_2 \D W_s^2\right)}_{M_t}.
    \]
    We can bound the martingale term~$M$ using the BDG inequality and Assumption~\ref{ass:RWNNscheme}(ii) so that there exists $C_M>0$ for which 
    \[
    \EE\left[\sup_{0 \leq t \leq T} |M_t|^2\right] \leq C_M \EE\left[\int_0^T V_s \D s\right] \leq C_M \omega(T).
    \]
    For the drift term~$D$, we now have
    \[
    \sup_{0 \leq t \leq T} |D_t| \leq rT + \frac{1}{2} \int_0^T V_s \D s.
    \]
    Since $V\geq 0$ almost surely by Assumption~\ref{ass:volatilityProcess}, we can square both sides, interchange the square and supremum, use $(a + b)^2 \leq 2(a^2 + b^2)$ and Assumption~\ref{ass:RWNNscheme}(ii) so that
    $$
    \EE\left[\left\{ \sup_{0 \leq t \leq T} |D_t|\right\}^2\right] = 
    \EE\left[\sup_{0 \leq t \leq T} |D_t|^2\right]
    \leq 
    \EE\left[2r^2T^2 + \frac{1}{2}\left(\int_0^T V_s \D s\right)^2\right]
     \leq
     2r^2T^2 + \frac{1}{2}\omega(T).
    $$
    Combining all terms and using $(a + b + c)^2 \leq 3(a^2 + b^2 + c^2)$ we have
    \begin{align*}
    \EE\left[\sup_{0 \leq t \leq T} 
 |X_t|^2\right]
 & = \EE\left[\sup_{0 \leq t \leq T} 
 \left|x_0 + D_t + M_t\right|^2\right]\\
 & \leq 3\left\{|x_0|^2
 + \EE\left[\sup_{0 \leq t \leq T} 
 |D_t|^2\right] + \EE\left[\sup_{0 \leq t \leq T} |M_t|^2\right]\right\}\\
 & \leq 3|x_0|^2 + 6r^2T^2 + \frac{3}{2}\omega(T) + 3C_M \omega(T),
    \end{align*}
and the lemma follows.
\end{proof}
\begin{lemma}\label{lem:X_partition_ineq}
    Under Assumptions~\ref{ass:RWNNscheme}(ii) and~\ref{ass:discretisationbound}, the solution to~\eqref{eq:Xdynamics} starting at $x_0>0$ satisfies
    $$
    \max _{i \in\{0, \ldots, N-1\}} \EE\left[\left|X_{t_{i+1}}-X_{t_{i+1}}^\pi\right|^2+\sup _{t \in\left[t_i, t_{i+1}\right]}\left|X_t-X_{t_i}^\pi\right|^2\right] \leq C \omega(|\pi|),
    $$
    where $|\pi|$ is the mesh size of the partition $\pi=\left\{t_0, t_1, \ldots, t_N\right\}$.
\end{lemma}
\begin{proof}
    \textbf{Error at grid points: $\left|X_{t_{i+1}}-X_{t_{i+1}}^\pi\right|^2$.}
    The difference between the true solution and its Euler-Maruyama approximation at grid points reads
    \begin{align*}
    X_{t_{i+1}} - X^\pi_{t_{i+1}}
    & = 
    \int_{t_i}^{t_{i+1}} \left(r - \frac{V_s}{2}\right) \D s 
    - \left(r - \frac{V^\pi_{t_i}}{2}\right)(t_{i+1} - t_i) \\
    & + \int_{t_i}^{t_{i+1}} \sqrt{V_s} \D W_s -     \sqrt{V^\pi_{t_i}} \left(W_{t_{i+1}} - W_{t_i}\right),
    \end{align*}
    where $W=\rho_1 W^1 + \rho_2 W^2$. 
    Since $\EE[|V^\pi_{t_i}|]$ is finite for all $t_i \in \pi$ by Assumption~\ref{ass:discretisationbound}, using Assumption~\ref{ass:RWNNscheme} ii) it is easy to see that the drift term contributes an error of order $\Oo(\omega(|\pi|))$.
    For the diffusion term:
    \begin{align*}
    \EE\left[\left|\int_{t_i}^{t_{i+1}} \sqrt{V_s} \D W_s-\sqrt{V_{t_i}}\left(W_{t_{i+1}} - W_{t_i}\right)\right|^2\right]
 & = \EE\left[\left|\int_{t_i}^{t_{i+1}}\left(\sqrt{V_s}-\sqrt{V_{t_i}}\right) \D W_s\right|^2\right]\\
  & =\EE\left[\int_{t_i}^{t_{i+1}}\left|\sqrt{V_s}-\sqrt{V_{t_i}}\right|^2 \D s\right]
    \end{align*}
    by It\^o isometry.
    Since
    $\left|\sqrt{V_s}-\sqrt{V_{t_i}}\right|^2 \leq |V_s-V_{t_i}|$,
    then, by Assumption~\ref{ass:RWNNscheme}(ii) the diffusion term is bounded by $\Oo(\omega(\pi))$.\\
    \noindent
    \textbf{Supremum error over subintervals: $\sup _{t \in\left[t_i, t_{i+1}\right]}\left|X_t-X_{t_i}^\pi\right|^2$}.
    For any subinterval,
    $\Ee_t := X_t - X_{t_i}^\pi$ is the local error due to discretisation. 
    Using standard SDE arguments and properties of the Euler-Maruyama scheme (see for example~\cite{Hutzenthaler2012}), this again yields an upper bound of order $\Oo(\omega(|\pi|))$. 
    Since expectation is linear, the result follows.
    \end{proof}
\end{document}